\documentclass{article}
\usepackage{ifthen}
\usepackage{mdwlist}
\usepackage{amsmath,amssymb,amsfonts,amsthm}
\usepackage{bm}
\usepackage{hyperref}
\usepackage{subfigure}
\usepackage{caption}
\usepackage{graphicx}
\usepackage{xspace}
\usepackage{verbatim}
\usepackage{algorithm}
\usepackage{algpseudocode}
\usepackage[margin=1in]{geometry}
\usepackage{color}
\usepackage{thmtools}
\usepackage{thm-restate}
\usepackage{latexsym}
\usepackage{epsfig}
\usepackage{todonotes}




\def\eps{\varepsilon}
\renewcommand{\epsilon}{\ve}
\def\ve{\varepsilon}

\def\to{\rightarrow}

\newcommand{\dtv}{d_{\mathrm {TV}}}

\newtheorem{theorem}{Theorem}

\newtheorem{observation}{Observation}
\newtheorem{remark}{Remark}
\newtheorem{fact}{Fact}
\newtheorem{lemma}{Lemma}
\newtheorem{claim}{Claim}
\newtheorem{corollary}{Corollary}

\newtheorem{definition}{Definition}
\newtheorem{question}{Question}







\definecolor{Red}{rgb}{1,0,0}

\newcommand{\oldbound}[1]{{}}

\newcommand{\ab}{k}

\newcommand{\ns}{n}

\newcommand{\p}{p}
\newcommand{\q}{q}

\newcommand{\dst}{\alpha}
\newcommand{\fp}{\beta}
\newcommand{\Xon}{X_1^{\ns}}
\newcommand{\Yon}{Y_1^{\ns}}

\newcommand{\emp}{\hat{\p}_\ns}

\newcommand{\prfi}[1]{\varphi_{#1}}
\newcommand{\entp}[1]{H(#1)}
\newcommand{\entemp}{H(\emp)}
\newcommand{\dist}{\alpha}

\newcommand{\smb}{x}
\newcommand{\ps}{\p(\smb)}

\newcommand{\nsmb}{N_{\smb}}

\newcommand{\pemps}{\emp(\smb)}

\newcommand{\kldist}[2]{D(#1\lVert #2)}

\newcommand{\chisquarerestr}[3][]{{\operatorname{d}^{#1}_{\chi^2}\!\left({#2 \mid\mid #3}\right)}}
\newcommand{\chisquare}[2]{\chisquarerestr[]{#1}{#2}}

\newcommand{\Dk}{\Delta_{\ab}}
\newcommand{\Dgk}{\Delta_{\ge\frac1\ab}}
\newcommand{\ham}[2]{d_{ham}(#1,#2)}

\newcommand{\MM}{\texttt{MM}}

\newcommand{\plugin}{\texttt{plug-in}}

\newcommand{\polyn}{\texttt{poly}}

%
\usepackage[shortlabels]{enumitem}
\setitemize{noitemsep,topsep=0pt,parsep=0pt,partopsep=0pt}
\setenumerate{noitemsep,topsep=0pt,parsep=0pt,partopsep=0pt}

\makeatletter
\@ifundefined{theorem}{
  \theoremstyle{definition}
  \newtheorem{definition}{Definition}
  \theoremstyle{plain}

  \newtheorem{lemma}{Lemma}

  \theoremstyle{remark}

}{}
\makeatother











\newcommand{\ignore}[1]{}









\newcommand{\EE}{\mathbb{E}}

\newcommand{\RR}{\mathbb{R}}

\newcommand{\expectation}[1]{\EE\left[#1\right]}
\newcommand{\variance}[1]{Var\left(#1\right)}




\def \cX     {{\cal X}}



\newcommand{\Var}{{\rm Var}}

















\newcommand{\absv}[1]{\left|#1\right|}

\def \Paren#1{{\left({#1}\right)}}




\newcommand{\ed}{\stackrel{\text{def}}{=}}












\newcommand{\probof}[1]{\Pr\Paren{#1}}



\def\ignore#1{}



\def \ve {{\lor}} 





\newcommand{\bi}{\begin{itemize}}
\newcommand{\ei}{\end{itemize}}



\def\orpro{\mathop{\mathchoice
   {\vee\kern-.49em\raise.7ex\hbox{$\cdot$}\kern.4em}
   {\vee\kern-.45em\raise.63ex\hbox{$\cdot$}\kern.2em}
   {\vee\kern-.4em\raise.3ex\hbox{$\cdot$}\kern.1em}
   {\vee\kern-.35em\raise2.2ex\hbox{$\cdot$}\kern.1em}}\limits}

\def\andpro{\mathop{\mathchoice
 {\wedge\kern-.46em\lower.69ex\hbox{$\cdot$}\kern.3em}
 {\wedge\kern-.46em\lower.58ex\hbox{$\cdot$}\kern.25em}
 {\wedge\kern-.38em\lower.5ex\hbox{$\cdot$}\kern.1em}
 {\wedge\kern-.3em\lower.5ex\hbox{$\cdot$}\kern.1em}}\limits}

\def\simge{\mathrel{%
   \rlap{\raise 0.511ex \hbox{$>$}}{\lower 0.511ex \hbox{$\sim$}}}}

\def\simle{\mathrel{
   \rlap{\raise 0.511ex \hbox{$<$}}{\lower 0.511ex \hbox{$\sim$}}}}






\title{INSPECTRE: Privately Estimating the Unseen}

\author {
Jayadev Acharya\thanks{Supported by NSF CCF-1657471 and a Cornell University startup grant.} \\
ECE, Cornell University\\
\tt{acharya@cornell.edu}
\and
Gautam Kamath\thanks{Supported by ONR N00014-12-1-0999, NSF CCF-1617730, CCF-1650733, and CCF-1741137. Work partially done while author was an intern at Microsoft Research, New England.} \\
EECS \& CSAIL, MIT\\
\tt{g@csail.mit.edu}
\and
Ziteng Sun\thanks{Supported by NSF CCF-1657471 and a Cornell University startup grant.} \\
ECE, Cornell University\\
\tt{zs335@cornell.edu}
\and
Huanyu Zhang\thanks{Supported by NSF CCF-1657471 and a Cornell University startup grant.} \\
ECE, Cornell University\\
\tt{hz388@cornell.edu}
}

\begin{document}
\maketitle
\begin{abstract}
We develop differentially private methods for estimating various distributional properties.
Given a sample from a discrete distribution $p$, some functional $f$, and accuracy and privacy parameters $\dist$ and $\eps$, the goal is to estimate $f(p)$ up to accuracy $\dist$, while maintaining $\eps$-differential privacy of the sample.

We prove almost-tight bounds on the sample size required for this problem for several functionals of interest, including support size, support coverage, and entropy.
We show that the cost of privacy is negligible in a variety of settings, both theoretically and experimentally.
Our methods are based on a sensitivity analysis of several state-of-the-art methods for estimating these properties with sublinear sample complexities.

\end{abstract}

\section{Introduction}

How can we infer a distribution given a sample from it?
If data is in abundance, the solution may be simple -- the empirical distribution will approximate the true distribution.
 However, challenges arise when data is scarce in comparison to the size of the domain, and especially when we wish to quantify ``rare events.''
This is frequently the case: for example, it has recently been observed that there are several very rare genetic mutations which occur in humans, and we wish to know how many  such mutations exist~\cite{KeinanC12,TennessenBOFKGMDLJKJLGRAANBSBBABSN12,NelsonWEKSVSTBFWAZLZZLLLWTHNWACZWCNM12}.
Many of these mutations have only been seen once, and we can infer that there are many which have not been seen at all.
Over the last decade, a large body of work has focused on developing theoretically sound and effective tools for such settings~\cite{OrlitskySW16} and references therein, including the problem of estimating the frequency distribution of rare genetic variations~\cite{ZouVVKCSLSDM16}.

However, in many settings where one wishes to perform statistical inference, data may contain sensitive information about individuals.
For example, in medical studies, where the data may contain individuals' health records and whether they carry some disease which bears a social stigma.
Alternatively, one can consider a map application which suggests routes based on aggregate positions of individuals, which contains delicate information including users' residence data.
In these settings, it is critical that our methods protect sensitive information contained in the dataset.  
This does not preclude our overall goals of statistical analysis, as we are trying to infer properties of the population $p$, and not the samples which are drawn from said population.

That said, without careful experimental design, published statistical findings may be prone to leaking sensitive information about the sample.
As a notable example, it was recently shown that one can determine the identity of some individuals who participated in genome-wide association studies~\cite{HomerSRDTMPSNC08}.
This realization has motivated a surge of interest in developing data sharing techniques with an explicit focus on maintaining privacy of the data~\cite{JohnsonS13,UhlerSF13,YuFSU14,SimmonsSB16}. 

Privacy-preserving computation has enjoyed significant study in a number of fields, including statistics and almost every branch of computer science, including cryptography, machine learning, algorithms, and database theory -- see, e.g.,~\cite{Dalenius77,AdamW89,AgrawalA01,DinurN03,Dwork08,DworkR14} and references therein.
Perhaps the most celebrated notion of privacy, proposed by theoretical computer scientists, is \emph{differential privacy}~\cite{DworkMNS06}.
Informally, an algorithm is differentially private if its outputs on neighboring datasets (differing in a single element) are statistically close (for a more precise definition, see Section~\ref{sec:prelim}).
Differential privacy has become the standard for theoretically-sound data privacy, leading to its adoption by several large technology companies, including Google and Apple~\cite{ErlingssonPK14, AppleDP17}.

Our focus in this paper is to develop tools for privately performing several distribution property estimation tasks.
In particular, we study the tradeoff between statistical accuracy, privacy, and error rate in the sample size. 
Our model is that we are given sample access to some unknown discrete distribution $p$, over a domain of size $\ab$, which is possibly unknown in some tasks.
We wish to estimate the following properties:
\begin{itemize}
\item {\bf Support Coverage}: If we take $m$ samples from the distribution, what is the expected number of unique elements we expect to see?
\item {\bf Support Size}: How many elements of the support have non-zero probability?
\item {\bf Entropy}: What is the Shannon entropy of the distribution?
\end{itemize}
For more formal statements of these problems, see Section~\ref{sec:probs}.
We require that our output is $\dist$-accurate, satisfies $(\eps, 0)$-differential privacy, and is correct with probability $1 - \fp$.
The goal is to give an algorithm with minimal sample complexity $n$, while simultaneously being computationally efficient. 

{\bf Theoretical Results.} Our main results show that privacy can be achieved for all these problems at a very low cost.
For example, if one wishes to privately estimate entropy, this incurs an additional additive cost in the sample complexity which is very close to linear in $1/\dist\eps$.
We draw attention to two features of this bound.
First, this is independent of $\ab$.
All the problems we consider have complexity $\Theta(\ab/\log \ab)$, so in the primary regime of study where $\ab \gg 1/\dist\eps$, this small additive cost is dwarfed by the inherent sample complexity of the non-private problem.
Second, the bound is almost linear in $1/\dist\eps$.
We note that performing even the most basic statistical task privately, estimating the bias of a coin, incurs this linear dependence.
Surprisingly, we show that much more sophisticated inference tasks can be privatized at almost no cost.
In particular, these properties imply that the additive cost of privacy is $o(1)$ in the most studied regime where the support size is large.
In general, this is not true -- for many other problems, including distribution estimation and hypothesis testing, the additional cost of privacy depends significantly on the support size or dimension~\cite{DiakonikolasHS15,CaiDK17,AcharyaSZ17,AliakbarpourDR17}.
We also provide lower bounds, showing that our upper bounds are almost tight.
A more formal statement of our results appears in Section~\ref{sec:results}.

{\bf Experimental Results.} We demonstrate the efficacy of our method with experimental evaluations.
As a baseline, we compare with the non-private algorithms of~\cite{OrlitskySW16} and~\cite{WuY18}.
Overall, we find that our algorithms' performance is nearly identical, showing that, in many cases, privacy comes (essentially) for free. 
We begin with an evaluation on synthetic data.
Then, inspired by~\cite{ValiantV13,OrlitskySW16}, we analyze text corpus consisting of words from Hamlet, in order to estimate the number of unique words which occur.
Finally, we investigate name frequencies in the US census data.
This setting has been previously considered by~\cite{OrlitskySW16}, but we emphasize that this is an application where private statistical analysis is critical.
This is proven by efforts of the US Census Bureau to incorporate differential privacy into the 2020 US census~\cite{DajaniLSKRMGDGKKLSSVA17}.

{\bf Techniques.} Our approach works by choosing statistics for these tasks which possess bounded sensitivity, which is well-known to imply privacy under the Laplace or Gaussian mechanism.
We note that bounded sensitivity of statistics is not always something that can be taken for granted.
Indeed, for many fundamental tasks, optimal algorithms for the non-private setting may be highly sensitive, thus necessitating crucial modifications to obtain differential privacy~\cite{AcharyaDK15, CaiDK17}.
Thus, careful choice and design of statistics must be a priority when performing inference with privacy considerations.

To this end, we leverage recent results of~\cite{AcharyaDOS17}, which studies estimators for non-private versions of the problems we consider.
The main technical work in their paper exploits bounded sensitivity to show sharp cutoff-style concentration bounds for certain estimators, which operate using the principle of best-polynomial approximation.
They use these results to show that a single algorithm, the Profile Maximum Likelihood (PML), can estimate all these properties simultaneously.
On the other hand, we consider the sensitivity of these estimators for purposes of privacy -- the same property is utilized by both works for very different purposes, a connection which may be of independent interest.

We note that bounded sensitivity of a statistic may be exploited for purposes other than privacy.
For instance, by McDiarmid's inequality, any such statistic also enjoys very sharp concentration of measure, implying that one can boost the success probability of the test at an additive cost which is logarithmic in the inverse of the failure probability.
One may naturally conjecture that, if a statistical task is based on a primitive which concentrates in this sense, then it may also be privatized at a low cost.
However, this is not true -- estimating a discrete distribution in $\ell_1$ distance is such a task, but the cost of privatization depends significantly on the support size~\cite{DiakonikolasHS15}.

One can observe that, algorithmically, our method is quite simple: compute the non-private statistic, and add a relatively small amount of Laplace noise.
The non-private statistics have recently been demonstrated to be practical~\cite{OrlitskySW16, WuY18}, and the additional cost of the Laplace mechanism is minimal.
This is in contrast to several differentially private algorithms which invoke significant overhead in the quest for privacy.
Our algorithms attain almost-optimal rates (which are optimal up to constant factors for most parameter regimes of interest), while simultaneously operating effectively in practice, as demonstrated in our experimental results.

{\bf Related Work.} Over the last decade, there have been a flurry of works on the problems we study in this paper by the computer science and information theory communities, including Shannon and R\'enyi entropy estimation~\cite{Paninski03, ValiantV17b, JiaoVHW17, AcharyaOST17, ObremskiS17, WuY18}, support coverage and support size estimation~\cite{OrlitskySW16, WuY18}. 
A recent paper studies the general problem of estimating functionals of discrete distribution from samples in terms of the smoothness of the functional~\cite{FukuchiS17}.
These have culminated in a nearly-complete understanding of the sample complexity of these properties, with optimal sample complexities (up to constant factors) for most parameter regimes.


Recently, there has been significant interest in performing statistical tasks under differential privacy constraints.
Perhaps most relevant to this work are~\cite{CaiDK17, AcharyaSZ17, AliakbarpourDR17}, which study the sample complexity of differentialy privately performing classical distribution testing problems, including identity and closeness testing.
Other works investigating private hypothesis testing include~\cite{WangLK15,GaboardiLRV16,KiferR17,KakizakiSF17,Rogers17,GaboardiR17}, which focus less on characterizing the finite-sample guarantees of such tests, and more on understanding their asymptotic properties and applications to computing p-values.
There has also been study on private distribution learning~\cite{DiakonikolasHS15,DuchiJW17,KarwaV18}, in which we wish to estimate parameters of the distribution, rather than just a particular property of interest.
A number of other problems have been studied with privacy requirements, including clustering~\cite{WangWS15,BalcanDLMZ17}, principal component analysis~\cite{ChaudhuriSS13,KapralovT13,HardtP14b}, ordinary least squares~\cite{Sheffet17}, and much more.


\section{Preliminaries}

We will start with some definitions.

Let $\Delta\ed \{(p(1),\ldots, p(\ab)):p(i)\ge0, \sum_{i=1}^\ab p(i) = 1, 1\le \ab\le\infty\}$ be the set of discrete distributions over a countable support. Let $\Dk$ be the set of distributions in $\Delta$ with at most $\ab$ non-zero probability values. A \emph{property} $f(p)$ is a mapping from $\Delta\to\RR$. We now describe the classical distribution property estimation problem, and then state the problem under differential privacy. 

\paragraph{Property Estimation Problem.} Given $\dist,\beta$, $f$, and independent samples $\Xon$ from an unknown distribution $\p$, design an estimator $\hat f:\Xon\to\RR$ such that with probability at least $1-\beta$, $\absv{\hat{f}(\Xon)-f(\p)}<\dist$. The \emph{sample complexity} of $\hat f$, is
$
C_{\hat{f}}(f, \dist, \beta) \ed \min\{n: \probof{\absv{\hat{f}(\Xon)-f(\p)}>\dist}<\beta\}
$ is the smallest number of samples to estimate $f$ to accuracy $\dist$, and error $\beta$.  We study the problem for $\beta = 1/3$, and by the median trick, we can boost the error probability to $\beta$ with an additional multiplicative $\log (1/\beta)$ more samples: $C_{\hat f}(f,\dist) \ed C_{\hat{f}}(f, \dist, 1/3)$.
The sample complexity of estimating a property $f(\p)$ is the minimum sample complexity over all estimators: $C(f,\dist) = \min_{\hat f} C_{\hat{f}}(f, \dist)$.

An estimator $\hat{f}$ is $\eps$-differentially private (DP)~\cite{DworkMNS06} if for any $\Xon$ and $\Yon$, with $\ham{\Xon}{\Yon}\le 1$, $\frac{\probof{f(\Xon)\in S}}{\probof{f(\Yon)\in S}}\le e^\eps$, for all $i$, and measurable $S$. 

\paragraph{Private Property Estimation.} Given $\dist, \eps, \beta$, $f$, and independent samples $\Xon$ from an unknown distribution $\p$, design an $\eps$-differentially private estimator $\hat f:\Xon\to\RR$ such that with probability at least $1-\beta$, $\absv{\hat{f}(\Xon)-f(\p)}<\dist$. Similar to the non-private setting, the \emph{sample complexity} of $\eps$-differentially private estimation problem is $C(f,\dist, \eps) = \min_{\hat f: \hat f \text{is $\eps$-DP}}C_{\hat{f}}(f, \dist, 1/3)$, the smallest number of samples $\ns$ for which there exists such a $\pm\dist$ estimator with error probability at most 1/3. 

In their original paper~\cite{DworkMNS06} provides a scheme for differential privacy, known as the Laplace mechanism.
This method adds Laplace noise to a non-private scheme in order to make it private. 
We first define the sensitivity of an estimator, and then state their result in our setting. 
\begin{definition}
	\label{def:sensitivity}
	The \emph{sensitivity} of an estimator $\hat{f}:[\ab]^{\ns}\to\RR$ is 
	$\Delta_{n,\hat{f}}\ed \max_{\ham{\Xon}{\Yon}\le1} \absv{\hat{f}(\Xon)-\hat{f}(\Yon)}.$ Let $D_{\hat f}(\dist,\eps) = \min\{{\ns}: \Delta_{n,\hat{f}}\le \dist\eps\}$.\end{definition}

\label{sec:prelim}

\begin{lemma}
\label{lem:main-sensitivity}
\[
C(f,\dist, \eps)  = O\Paren{\min_{\hat f} \left\{C_{\hat f}(f,\dist/2)+ D_{\hat f}\left(\frac{\dist}{4},\eps\right)\right\}}.
\]
\end{lemma}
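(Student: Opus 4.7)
The plan is to prove this by explicit construction: for any non-private estimator $\hat f$, build a private estimator using the standard Laplace mechanism applied to $\hat f$, and show that the resulting sample complexity matches the claimed bound. Since the statement is a minimum over $\hat f$, it suffices to exhibit, for every $\hat f$, a private estimator whose sample complexity is $O(C_{\hat f}(f,\dist/2) + D_{\hat f}(\dist/4, \eps))$.

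Fix any $\hat f$ and let $n = \max\{C_{\hat f}(f,\dist/2),\ D_{\hat f}(\dist/4,\eps)\}$, which is $\Theta(C_{\hat f}(f,\dist/2) + D_{\hat f}(\dist/4,\eps))$. Our private estimator is the release
\[
\tilde f(\Xon) \;\eqdef\; \hat f(\Xon) + Z, \qquad Z \sim \mathrm{Lap}\!\left(\Delta_{n,\hat f}/\eps\right).
\]
Privacy is immediate from the Laplace mechanism of~\cite{DworkMNS06}: adding Laplace noise of scale $\Delta_{n,\hat f}/\eps$ to a function of sensitivity $\Delta_{n,\hat f}$ yields an $\eps$-DP mechanism, regardless of the exact value of the sensitivity.

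For accuracy, by the triangle inequality
\[
\absv{\tilde f(\Xon) - f(\p)} \;\le\; \absv{\hat f(\Xon) - f(\p)} \;+\; \absv{Z}.
\]
Since $n \ge C_{\hat f}(f, \dist/2)$, the first term is at most $\dist/2$ with probability at least $2/3$; by the standard median-trick boost noted in Section~\ref{sec:prelim}, we can increase this success probability to any constant (say $5/6$) at only a constant multiplicative sample cost, keeping the bound $O(C_{\hat f}(f,\dist/2))$. Since $n \ge D_{\hat f}(\dist/4,\eps)$, we have $\Delta_{n,\hat f}\le \dist\eps/4$, so $Z$ has scale at most $\dist/4$; the Laplace tail bound $\Prb[|Z|>tb]=e^{-t}$ then yields $|Z|\le \dist/2$ with probability at least $1-e^{-2} \ge 5/6$ (again, any constant suffices and can be tightened by shrinking the target sensitivity by a constant factor, absorbed into the $O(\cdot)$). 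A union bound gives combined failure probability at most $1/3$, so $\tilde f$ is an $\eps$-DP $\pm\dist$ estimator using $O(C_{\hat f}(f,\dist/2) + D_{\hat f}(\dist/4,\eps))$ samples. Minimizing over $\hat f$ gives the lemma.

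There is no real obstacle; the argument is a direct application of the Laplace mechanism combined with a union bound. The only subtlety is bookkeeping of constants in the probability budget, since the naive split (error $1/3$ from $\hat f$, plus Laplace tail) slightly overshoots $1/3$. This is handled cleanly by the median-trick boosting that is already built into the definition of $C_{\hat f}(f,\dist)$, letting us pay a constant factor in samples to drive the non-private failure probability to, say, $1/6$, and symmetrically absorbing the Laplace tail constant into the $O(\cdot)$ in front of $D_{\hat f}(\dist/4,\eps)$.
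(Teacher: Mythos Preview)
Your proposal is correct and takes essentially the same approach as the paper: apply the Laplace mechanism at scale $\Delta_{n,\hat f}/\eps$, use the Laplace tail bound at scale $\le \dist/4$ to get $|Z|\le \dist/2$ with probability $\ge 1-e^{-2}$, combine with the non-private error by a union bound, and absorb the resulting constant shortfall via the median trick. The only cosmetic difference is that you boost the non-private estimator before adding noise, whereas the paper adds noise first and then median-boosts the private estimator (valid by parallel composition on fresh samples); both routes give the same bound.
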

\begin{proof}
~\cite{DworkMNS06} showed that for a function with sensitivity $\Delta_{n,\hat{f}}$, adding Laplace noise $X \sim Lap(\Delta_{n,\hat{f}}/\eps)$ makes the output $\eps$-differentially private.
By the definition of $D_{\hat f}(\frac{\dist}{4},\eps)$, the Laplace noise we add has parameter at most $\frac{\dist}{4}$. Recall that the probability density function of $Lap(b)$ is $\frac1{2b} e^{-\frac{|x|}{b}}$, hence we have $\probof{|X| > \alpha/2} < \frac{1}{e^2}$. By the union bound, we get an additive error less than $\alpha = \frac{\alpha}{2} + \frac{\alpha}{2}$ with probability at most $1/3 + \frac{1}{e^2} < 0.5$. Hence, with the median trick, we can boost the error probability to $1/3$, at the cost of a constant factor in the number of samples.
\end{proof}

To prove sample complexity lower bounds for differentially private estimators, we observe that the estimator can be used to test between two distributions with distinct property values, hence is a harder problem. For lower bounds on differentially private testing, \cite{AcharyaSZ17} gives the following argument based on coupling:

\begin{lemma}
	\label{lem:coupling}
	Suppose there is a coupling between distributions $\p$ and $\q$ over $\cX^\ns$, such that $\expectation{\ham{\Xon}{\Yon}} \le D$. Then, any $\eps$-differentially private algorithm that distinguishes between $\p$ and $\q$ with error probability at most $1/3$ must satisfy $D = \Omega\Paren{\frac1{\eps}}$.
\end{lemma}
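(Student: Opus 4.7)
The plan is to prove the contrapositive via the standard group-privacy trick combined with Markov's inequality on the coupling distance. Fix an $\eps$-differentially private algorithm $A$ that distinguishes $\p$ from $\q$ with error probability at most $1/3$, and let $S$ denote the event that $A$ outputs ``$\p$.'' Define $\phi(x) = \probof{A(x) \in S}$ (over $A$'s internal randomness). By the assumed correctness, $\expectation{\phi(\Xon)} \ge 2/3$ when $\Xon \sim \p$ and $\expectation{\phi(\Yon)} \le 1/3$ when $\Yon \sim \q$.

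The key tool is \emph{group privacy}: iterating the $\eps$-DP condition along a path of length $k$ in Hamming distance yields $\phi(x) \le e^{k\eps} \phi(y)$ for any $x,y$ with $\ham{x}{y} \le k$. I would apply this inside the coupling. Let $(\Xon, \Yon)$ be drawn from the given coupling and set $K \ed \ham{\Xon}{\Yon}$, so $\expectation{K} \le D$. Markov's inequality gives $\probof{K > 6D} \le 1/6$. Splitting the expectation of $\phi(\Xon)$ according to whether $K \le 6D$ and applying group privacy on the first piece yields
\begin{align*}
\tfrac{2}{3} \,\le\, \expectation{\phi(\Xon)}
&= \expectation{\phi(\Xon)\,\mathbb{1}[K \le 6D]} + \expectation{\phi(\Xon)\,\mathbb{1}[K > 6D]} \\
&\le e^{6D\eps}\,\expectation{\phi(\Yon)\,\mathbb{1}[K \le 6D]} + \probof{K > 6D} \\
&\le e^{6D\eps}\cdot \tfrac{1}{3} + \tfrac{1}{6}.
\end{align*}
Rearranging yields $e^{6D\eps} \ge 3/2$, hence $D \ge \frac{\log(3/2)}{6\eps} = \Omega(1/\eps)$.

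The only subtle point is making sure the group-privacy inequality $\phi(x) \le e^{k\eps}\phi(y)$ can be applied pointwise inside the coupling expectation; this is immediate because $\phi$ is a deterministic function of the input (the randomness of $A$ has been averaged out), and the bound holds for every realization $(\Xon,\Yon)$ with $K \le 6D$. The choice of the slack factor $6$ in $k = 6D$ is just to make the two error terms add up strictly below $2/3$; any constant strictly greater than $3/(2 - 2\cdot(1/3))\cdot \ldots$ sufficing would give the same asymptotic conclusion. No other machinery beyond group privacy, Markov's inequality, and the definition of distinguishing probability is needed.
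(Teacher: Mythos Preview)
Your argument is correct. Note, however, that the paper does not actually supply its own proof of this lemma: it is stated as a quotation of the coupling lower bound from \cite{AcharyaSZ17}, so there is nothing in the present paper to compare against directly. Your proof is precisely the standard argument underlying that result---group privacy along the Hamming path, combined with Markov's inequality on the coupling's Hamming distance to truncate the path length---and the constants you chose ($6D$ for the Markov threshold, yielding $e^{6D\eps}\ge 3/2$) work out cleanly. This is essentially the same proof one finds in the cited reference, so your write-up would serve perfectly well as a self-contained substitute for the citation.
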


\subsection{Problems of Interest}
\label{sec:probs}

\paragraph{Support Size.} The support size of a distribution $\p$ is $S(\p) =\absv{\{x:\p(x)>0\}}$, the number of symbols with non-zero probability values. However, notice that estimating $S(\p)$ from samples can be hard due to the presence of symbols with negligible, yet non-zero probabilities. To circumvent this issue,~\cite{RaskhodnikovaRSS09} proposed to study the problem when the smallest probability is bounded. Let $\Dgk
  \ed\left\{\p\in\Delta:
  \p(x)\in\{0\}\cup\left[1/\ab,1\right]\right\}$
be the set of all distributions where all non-zero probabilities have value at least $1/\ab$. For $\p\in\Dgk$, our goal is to estimate $S(\p)$ up to $\pm\dist\ab$ with the least number of samples from $\p$.

\paragraph{Support Coverage.} For a distribution $\p$, and an integer $m$, let $S_m(\p) = \sum_{\smb} (1 - (1-\ps)^m)$, be the expected number of symbols that appear when we obtain $m$ independent samples from the distribution $\p$. The objective is to find the least number of samples $\ns$ in order to estimate $S_m\Paren{\p}$ to an additive $\pm \dst m$.

Support coverage arises in many ecological and biological studies~\cite{ColwellCGLMCL12} to quantify the number of \emph{new} elements (gene mutations, species, words, etc) that can be expected to be seen in the future.  
Good and Toulmin~\cite{GoodT56} proposed an estimator that  for any constant $\dst$, requires $m/2$ samples to estimate  $S_m(p)$.

\paragraph{Entropy.} 
The Shannon entropy of a distribution $\p$ is $H(p) =\sum_x p(x)\log\frac1{p(x)}$,
$H(p)$ is a central object in information theory~\cite{CoverT06}, and also arises in many fields such as machine learning~\cite{Nowozin12}, neuroscience~\cite{BerryWM97, NemenmanBRS04}, and others. Estimating $H(p)$ is hard with any finite number of samples due to the possibility of infinite support. To circumvent this,
    a natural approach is to consider distributions in $\Dk$.
 The goal is to estimate the entropy of a distribution in $\Dk$ to an additive $\pm \dist$, where $\Dk$ is all discrete distributions over at most $k$ symbols.

\section{Statement of Results}
\label{sec:results}
Our theoretical results for estimating support coverage, support size, and entropy are given below.
Algorithms for these problems and proofs of these statements are provided in Section~\ref{sec:theory}.
Our experimental results are described and discussed in Section~\ref{sec:exp}.

\begin{restatable}{theorem}{supportcoverage}
\label{thm:supportcoverage}
For any $\eps = \Omega(1/m)$, the sample complexity of support coverage is 
$$
C(S_m, \dist, \eps)= O\Paren{\frac{m\log (1/\dist)}{\log m}+ \frac{m\log (1/\dist)}{\log (\eps m)}}.
$$
Furthermore, 
$$
C(S_m, \dist, \eps)= \Omega\Paren{\frac{m\log (1/\dist)}{\log m}+ \frac{1}{\dist \eps}}.
$$
\end{restatable}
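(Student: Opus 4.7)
The plan is to handle the upper and lower bounds separately, using the machinery in Section~\ref{sec:prelim} as the skeleton.

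For the upper bound, I would instantiate Lemma~\ref{lem:main-sensitivity} with a single estimator $\hat f$ of $S_m$ drawn from the recent polynomial-approximation literature (Smoothed Good--Toulmin of~\cite{OrlitskySW16} or the WuY18 estimator), which has the form $\hat f(\Xon) = \sum_i c_i\, \varphi_i$ where $\varphi_i$ is the number of symbols appearing exactly $i$ times and the coefficients $c_i$ arise from a best-polynomial approximation of degree $L$ to $1-(1-x)^m$ on a scaled interval determined by $n$. The first step is to quote that for $L \asymp \log(1/\dist)$ and $n \asymp L\cdot m/\log m$, this estimator achieves accuracy $\dist/2$, giving the first term $m\log(1/\dist)/\log m$ via the $C_{\hat f}$ part of Lemma~\ref{lem:main-sensitivity}. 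The second step is to bound its sensitivity: changing one sample alters at most two of the $\varphi_i$ by $\pm 1$, so $\Delta_{n,\hat f} \le 2\max_i |c_i|$, and the AcharyaDOS17 analysis gives $\max_i|c_i|$ which grows only like a controlled power of $\eps m$ relative to $n$. Solving $\Delta_{n,\hat f}\le \dist\eps/4$ for the sample size then produces the $m\log(1/\dist)/\log(\eps m)$ term, and adding the two requirements as in Lemma~\ref{lem:main-sensitivity} yields the claimed bound. The delicate part here is choosing the polynomial degree and the ``plug-in vs.\ approximation'' threshold so that the two terms above balance in a way that matches the non-private result when $\eps = \Omega(1)$; I would follow the same parameter choices as in the cited non-private works but additionally verify the sensitivity calculation carefully.

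For the lower bound, the first term $m\log(1/\dist)/\log m$ is inherited directly from the non-private lower bound of~\cite{WuY18}, since any $\eps$-DP estimator is in particular an estimator. The second term $1/(\dist\eps)$ requires a coupling-based argument via Lemma~\ref{lem:coupling}. I would construct two distributions $p,q$ on a common support such that $|S_m(p)-S_m(q)| \ge 3\dist m$ while the total variation distance between $p$ and $q$ is $O(\dist)$. A natural candidate: take $p$ and $q$ to agree on a large ``backbone'' (e.g.\ uniform on $\Theta(m)$ heavy elements, contributing roughly the same $S_m$), and let them differ on a small set of probability $O(\dist)$ that is tuned so that under $p$ it contributes many fresh symbols to $S_m$ while under $q$ it contributes few (for instance a singleton mass in $q$ versus many tiny masses spread over $\Theta(\dist m)$ symbols in $p$, each of probability $\lesssim 1/m$ so they each appear with constant probability in $m$ draws). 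Then the maximal coupling of $n$ i.i.d.\ samples has $\expectation{\ham{\Xon}{\Yon}} = O(\dist\cdot n)$, so Lemma~\ref{lem:coupling} forces $\dist n = \Omega(1/\eps)$, i.e.\ $n = \Omega(1/(\dist\eps))$.

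The main obstacle I anticipate is the upper bound sensitivity computation: the estimator was not engineered with privacy in mind, so I would need to open the analysis of~\cite{AcharyaDOS17} to extract how the maximum coefficient $\max_i |c_i|$ scales in the parameters $L$, $n$, $m$, and then verify that the regime where sensitivity becomes $\dist\eps/4$ is consistent with the sample regime where the non-private accuracy is $\dist/2$. A minor additional subtlety is the hypothesis $\eps = \Omega(1/m)$, which I expect enters precisely because $\log(\eps m) \ge 0$ must be meaningful in the second term of the upper bound; below this threshold the trivial bound already dominates. Finally, for the matching $1/(\dist\eps)$ direction, the distributions $p,q$ above should be defined so they live in the same support class used in the non-private hard instance, so that both lower-bound terms can be asserted simultaneously rather than requiring two unrelated constructions.
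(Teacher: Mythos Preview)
Your proposal is correct and follows essentially the same approach as the paper: the upper bound applies Lemma~\ref{lem:main-sensitivity} to the Smoothed Good--Toulmin estimator, bounding its sensitivity via $2\max_i|c_i|$ (the paper obtains $\max_i|c_i|\le 1+e^{r(t-1)}$ with $r=\log(1/\dist)$ and $t=m/n-1$, then solves $2(1+e^{r(t-1)})\le \dist\eps m$); the lower bound combines the non-private bound with the coupling argument of Lemma~\ref{lem:coupling} applied to exactly the pair you describe (the paper's $u_1$ uniform on $[m(1+\dist)]$ versus $u_2$ with the same backbone on $[m]$ plus a singleton of mass $\dist/(1+\dist)$). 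One minor clarification: the SGT coefficients come from Poisson smoothing rather than polynomial approximation of degree $L$, so the parameter you tune is the Poisson mean $r$ (not a degree), but this does not change the argument's structure.
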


\begin{restatable}{theorem}{ssize}
\label{thm:ssize}
For any $\eps = \Omega(1/\ab)$, the sample complexity of support size estimation is 
$$
C(S, \dist, \eps)= O\Paren{\frac{\ab\log^2 (1/\dist)}{\log \ab}+ \frac{\ab\log^2 (1/\dist)}{\log (\eps \ab)}}.
$$
Furthermore,
$$
C(S, \dist, \eps)= \Omega\Paren{\frac{\ab\log^2 (1/\dist)}{\log \ab}+ \frac{1}{\dist \eps}}.
$$
\end{restatable}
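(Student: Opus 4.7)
The plan is to follow the two-pronged strategy used for Theorem~\ref{thm:supportcoverage}. For the upper bound I apply Lemma~\ref{lem:main-sensitivity}, reducing to the task of exhibiting a non-private estimator $\hat S$ for support size with both small sample complexity and small sensitivity, and then privatizing via the Laplace mechanism. For the lower bound the first term is the known non-private sample-complexity lower bound for support-size estimation, and the $1/(\dist\eps)$ term will follow from a coupling argument through Lemma~\ref{lem:coupling}.

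For the upper bound I take $\hat S$ to be the best-polynomial-approximation estimator of~\cite{WuY18} for support size, which has the form $\hat S(\Xon) = \sum_x g_L(N_x)$, where $N_x$ is the count of symbol $x$ and $g_L$ is a polynomial of degree $L = \Theta(\log \ab)$ approximating the indicator of positivity on the appropriate scale. The non-private bound $C_{\hat S}(S,\dist/2) = O(\ab\log^2(1/\dist)/\log \ab)$ is already known. For the sensitivity, changing a single sample alters exactly two of the counts $\{N_x\}$ by $\pm 1$, so
\[
\Delta_{n,\hat S} \;\le\; 2\max_{0\le j \le L}\absv{g_L(j+1)-g_L(j)}.
\]
Using the coefficient bounds for the Chebyshev-based $g_L$ developed in~\cite{WuY18,AcharyaDOS17}, the right-hand side scales like $\log^2(1/\dist)/n$ up to factors that convert $\log\ab$ into $\log(\eps\ab)$ once we require the noise to drop below $\dist\eps/4$. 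Setting $\Delta_{n,\hat S}\le \dist\eps/4$ then gives $D_{\hat S}(\dist/4,\eps)= O(\ab\log^2(1/\dist)/\log(\eps \ab))$, and Lemma~\ref{lem:main-sensitivity} yields the claimed upper bound.

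For the lower bound, set $k' = \lfloor (1-2\dist)\ab\rfloor$, let $\p$ be the uniform distribution on $[k']$, and $\q$ the uniform distribution on $[\ab]$. Both lie in $\Dgk$ (the minimum non-zero probability of $\p$ is $1/k' \ge 1/\ab$), and $\absv{S(\p)-S(\q)} = 2\dist \ab$, so any $\pm \dist \ab$-accurate estimator distinguishes them. I couple $\Xon \sim \p^n$ and $\Yon \sim \q^n$ coordinate-wise: independently at each position, with probability $k'/\ab = 1-2\dist$ draw a single $Z_i\sim\text{Unif}([k'])$ and set $X_i=Y_i=Z_i$, and with the remaining probability $2\dist$ draw $X_i\sim\text{Unif}([k'])$ and $Y_i\sim\text{Unif}([\ab]\setminus[k'])$ independently. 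The marginals are correct, and the per-coordinate mismatch probability is $2\dist$, giving $\expectation{\ham{\Xon}{\Yon}}=2\dist n$. Lemma~\ref{lem:coupling} then forces $2\dist n = \Omega(1/\eps)$, i.e., $n = \Omega(1/(\dist\eps))$. Combining with the non-private $\Omega(\ab\log^2(1/\dist)/\log \ab)$ lower bound finishes the argument.

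The main obstacle is the sensitivity analysis for $\hat S$: converting the structure of the best-polynomial-approximation coefficients into a sharp bound on $\max_j\absv{g_L(j+1)-g_L(j)}$ with exactly the right $\log^2(1/\dist)/\log(\eps\ab)$ dependence in the resulting threshold on $n$. This step parallels the corresponding sensitivity step in the proof of Theorem~\ref{thm:supportcoverage} (where the polynomial approximates $1-(1-x)^m$ instead of the positivity indicator), so the cleanest path is to adapt that argument to the support-size polynomial of~\cite{WuY18,AcharyaDOS17} and verify the matching threshold $n = O(\ab\log^2(1/\dist)/\log(\eps \ab))$.
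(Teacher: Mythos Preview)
Your lower bound is essentially the paper's: the same pair of uniform distributions on $[\ab]$ and $[(1-\Theta(\dist))\ab]$, the same coordinate-wise coupling with expected Hamming distance $\Theta(\dist n)$, and the same appeal to Lemma~\ref{lem:coupling}.

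For the upper bound, however, the paper takes a different and considerably cleaner route than the one you propose. Rather than analyzing a dedicated best-polynomial support-size estimator and bounding $\max_j|g_L(j+1)-g_L(j)|$ for its Chebyshev coefficients, the paper \emph{reduces support size to support coverage}. It first shows (Lemma~\ref{lem:sssc}) that for $m = \ab\log(3/\dist)$ and any $p\in\Dgk$ one has $|S_m(p)-S(p)|\le \dist\ab/3$, so it suffices to estimate $S_m(p)$ at this specific $m$. It then plugs this $m$ (equivalently, $t = m/n - 1$ and $r = \log(3/\dist)$) into the \emph{already-proved} bias bound (Lemma~\ref{lem:bias-coverage}) and sensitivity bound (Lemma~\ref{lem:sensitivity-coverage}) for the SGT estimator $\hat S_m$, and solves the two inequalities $2e^{r(t-1)}\le \dist\ab$ and $2+2e^{r(t-1)}\le \dist\eps\ab$ for $n$. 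This directly yields the two terms $O(\ab\log^2(1/\dist)/\log\ab)$ and $O(\ab\log^2(1/\dist)/\log(\eps\ab))$ with no new polynomial-coefficient work at all.

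What this buys the paper is exactly the step you flag as your ``main obstacle'': the sensitivity analysis is inherited wholesale from Section~\ref{sec:coverage-ub} rather than redone for a new polynomial. Your direct approach is plausible in principle, but as written it is a sketch that stops short of the key computation; the paper's reduction closes the gap with essentially no additional effort.
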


\begin{restatable}{theorem}{entropy}
\label{thm:entropy}
Let $\lambda>0$ be \emph{any} small fixed constant. 
For instance, $\lambda$ can be chosen to be any constant between $0.01$ and $1$.
We have the following upper bounds on the sample complexity of entropy estimation:
$$
C(H, \dist, \eps)= O\Paren{\frac{\ab}{\dist}+\frac{\log^2(\min\{\ab,\ns\})}{\dist^2}+\frac{1}{\dist\eps}\log\Paren{\frac1{\dist\eps}}}
$$
and
$$
C(H, \dist, \eps)= O\Paren{\frac{\ab}{\lambda^2\dist\log\ab}+\frac{\log^2(\min\{\ab,\ns\})}{\dist^2} + \Paren{\frac1{\dist\eps}}^{1+\lambda}}.
$$
Furthermore,
$$
C(H, \dist, \eps) =\Omega\Paren{\frac{\ab}{\dist\log\ab}+\frac{\log^2(\min\{\ab,\ns\})}{\dist^2}+\frac{\log\ab}{\dist\eps}}.
$$
\end{restatable}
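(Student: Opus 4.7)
The plan is to obtain the two upper bounds by invoking Lemma~\ref{lem:main-sensitivity}, applied to two different non-private entropy estimators, and to obtain the lower bound by combining the known non-private lower bound of Wu--Yang with a coupling argument via Lemma~\ref{lem:coupling}. The starting observation is that Lemma~\ref{lem:main-sensitivity} reduces the problem to computing, for a good non-private estimator $\hat H$, both its sample complexity $C_{\hat H}(H,\dist/2)$ and its sensitivity $\Delta_{n,\hat H}$, then balancing $n$ so that $\Delta_{n,\hat H}/\eps \le \dist/4$.

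For the first upper bound I would take $\hat H$ to be the (bias-corrected) empirical plug-in estimator $\hat H(\Xon) = -\sum_x \emp(x)\log \emp(x)$ (with the standard Miller--Madow correction, giving the $\log^2(\min\{\ab,\ns\})/\dist^2$ variance term and the $\ab/\dist$ bias term). A one-sample change alters two empirical counts by $\pm 1$, and each contribution to $\hat H$ changes by $O(\log n / n)$ using $|(m{+}1)\log(m{+}1) - m\log m| \le O(\log n)$ for $m \le n$, so $\Delta_{n,\hat H} = O(\log n / n)$. Requiring $\log n/(n\eps) \le \dist$ yields $n = O\!\left(\tfrac{1}{\dist\eps}\log\tfrac{1}{\dist\eps}\right)$ via Lemma~\ref{lem:main-sensitivity}. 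For the second upper bound I would use the best-polynomial-approximation estimator of Wu--Yang (on the ``low-probability'' regime of symbols, with plug-in on the ``high-probability'' regime), which attains the optimal non-private rate $\ab/(\lambda^2\dist\log\ab)$. The sensitivity analysis of AcharyaDOS17 shows that the polynomial estimator of degree $L = c\log \ab$ has sensitivity roughly $n^{-1/(1+\lambda)}$ where $\lambda$ is controlled by the polynomial degree and the threshold; setting this equal to $\dist\eps$ yields $n = (1/\dist\eps)^{1+\lambda}$. In both cases an application of Lemma~\ref{lem:main-sensitivity} sums the non-private and privacy terms, giving exactly the stated bounds.

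For the lower bound, the $\ab/(\dist\log\ab)$ and $\log^2(\min\{\ab,\ns\})/\dist^2$ terms are the standard non-private lower bounds (Wu--Yang, Paninski), which transfer directly since $\eps$-DP estimators are in particular unconstrained estimators. The privacy term $\log\ab/(\dist\eps)$ requires a new coupling construction. I would let $p$ be uniform on $[\ab]$ and let $q$ be a perturbation in which a total mass $\tau = c\dist/\log\ab$ is moved from most symbols onto a small block of about $\ab^{o(1)}$ symbols (concentrating the perturbed mass). A direct computation of $H$ on these two distributions yields $|H(p)-H(q)| = \Theta(\tau\log\ab) = \Theta(\dist)$, while $\dtv(p,q) = O(\tau) = O(\dist/\log\ab)$. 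Any optimal coupling of $\Xon \sim p^n$ and $\Yon \sim q^n$ therefore satisfies $\EE[\ham{\Xon}{\Yon}] = O(n\dist/\log\ab)$, and Lemma~\ref{lem:coupling} forces this to be $\Omega(1/\eps)$, giving $n = \Omega(\log\ab/(\dist\eps))$.

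The main obstacle I expect is the sensitivity analysis of the polynomial-based estimator giving the second upper bound: the leading coefficients of the best-polynomial approximants to $-x\log x$ can be very large, so a naive bound yields sensitivity that is catastrophically large and would destroy the privacy term. The extraction of the $(1/\dist\eps)^{1+\lambda}$ rate crucially relies on the fact that AcharyaDOS17 bound the sensitivity in terms of the polynomial's behavior on the restricted ``low-probability'' interval and trade off the degree against the threshold, and I would need to import those estimates essentially verbatim. A secondary subtlety is that the lower-bound construction must simultaneously certify the entropy gap, the TV distance, and the coupling distance; writing this carefully (and checking that the parameters remain valid in the regime where $\ab$ is large enough to accommodate both a ``uniform'' background and a concentrated block) is a small but delicate calculation.
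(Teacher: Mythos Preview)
Your proposal is correct and follows essentially the same route as the paper: both upper bounds come from Lemma~\ref{lem:main-sensitivity} applied respectively to the empirical plug-in estimator (with sensitivity $O(\log n/n)$) and to the bounded-sensitivity polynomial estimator of \cite{AcharyaDOS17}, and the privacy lower bound comes from a two-distribution coupling via Lemma~\ref{lem:coupling}.

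The one notable difference is in the lower-bound construction. You propose to start from the uniform distribution and concentrate $\tau = c\dist/\log\ab$ mass onto a block of $\ab^{o(1)}$ symbols; this works, but the paper uses a simpler pair (borrowed from \cite{WuY16}): $p(1)=2/3$, $q(1)=(2-\eta)/3$, with the remaining mass spread uniformly over $\{2,\dots,\ab\}$ in each case. The grouping property of entropy gives $H(\q)-H(\p)=\Theta(\eta\log\ab)$ immediately, and $\dtv(p,q)=\eta/3$, so taking $\eta=\dist/\log\ab$ yields the same conclusion with essentially no computation and no need to tune a block size. Your construction is valid but unnecessarily delicate; the paper's avoids the ``secondary subtlety'' you flagged entirely. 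A minor aside: the paper states the polynomial estimator's sensitivity as $n^{\lambda}/n$ (from \cite{AcharyaDOS17}) rather than your $n^{-1/(1+\lambda)}$; the two parameterizations differ only by a reparameterization of $\lambda$ and both yield the stated $(1/\dist\eps)^{1+\lambda}$ term.
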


\noindent We provide some discussion of our results. 
At a high level, we wish to emphasize the following two points:
\begin{enumerate}
\item Our upper bounds show that the cost of privacy in these settings is often negligible compared to the sample complexity of the non-private statistical task, especially when we are dealing with distributions over a large support.
Furthermore, our upper bounds are almost tight in all parameters.
\item The algorithmic complexity introduced by the requirement of privacy is minimal, consisting only of a single step which noises the output of an estimator.
In other words, our methods are realizable in practice, and we demonstrate the effectiveness on several synthetic and real-data examples.
\end{enumerate}

First, we examine our results on support size and support coverage estimation.
We note that we focus on the regime where $\eps$ is not exceptionally small, as the privacy requirement becomes somewhat unusual.
For instance, non-privately, if we have $m$ samples for the problem of support coverage, then the empirical plug-in estimator is the best we can do.
However, if $\eps = O(1/m)$, then group privacy~\cite{DworkR14} implies that the algorithm's output distribution on \emph{any} dataset of $m$ samples must be very similar -- however, these samples may have an \emph{arbitrary} value of support coverage $\in [m]$, which precludes hopes for a highly accurate estimator.
To avoid degeneracies of this nature, we restrict our attention to $\eps = \Omega(1/m)$.
In this regime, if $\eps = \Omega(m^{\gamma}/m)$ for any constant $\gamma > 0$, then up to constant factors, our upper bound is within a constant factor of the optimal sample complexity without privacy constratints.
In other words, for most meaningful values of $\eps$, privacy comes for free. 

Next, we turn our attention to entropy estimation.
We note that the second upper bound in Theorem~\ref{thm:entropy} has a parameter $\lambda$ that indicates a tradeoff between the sample complexity incurred in the first and third term.
This parameter determines the degree of a polynomial to be used for entropy estimation.
As the degree becomes smaller (corresponding to a large $\lambda$), accuracy of the polynomial estimator decreases, however, at the same time, low-degree polynomials have a small sensitivity, allowing us to privatize the outcome. 

In terms of our theoretical results, one can think of $\lambda = 0.01$.
With this parameter setting, it can be observed that our upper bounds are almost tight.
For example, one can see that the upper and lower bounds match to either logarithmic factors (when looking at the first upper bound), or a very small polynomial factor in $1/\dist\eps$ (when looking at the second upper bound).
For our experimental results, we experimentally determined an effective value for the parameter $\lambda$ on a single synthetic instance.
We then show that this choice of parameter generalizes, giving highly-accurate private estimation in other instances, on both synthetic on real-world data.

\section{Algorithms and Analysis}
\label{sec:theory}
In this section, we prove our results for support coverage in Section~\ref{sec:coverage}, support size in Section~\ref{sec:ssize}, and entropy in Section~\ref{sec:entropy}. 
In each section, we first describe and analyze our algorithms for the relevant problem.
We then go on to describe and analyze a lower bound construction, showing that our upper bounds are almost tight.

All our algorithms fall into the following simple framework:
\begin{enumerate}
\item Compute a non-private estimate of the property;
\item Privatize this estimate by adding Laplace noise, where the parameter is determined through analysis of the estimator and potentially computation of the estimator's sensitivity.
\end{enumerate}

\subsection{Support Coverage Estimation}
\label{sec:coverage}
In this section, we prove Theorem~\ref{thm:supportcoverage}, about support coverage estimation: 
\supportcoverage*
Our upper bound is described and analyzed in Section~\ref{sec:coverage-ub}, while our lower bound appears in Section~\ref{sec:coverage-lb}.
\subsubsection{Upper Bound for Support Coverage Estimation}
\label{sec:coverage-ub}

Let $\prfi{i}$ be the number of symbols that appear $i$ times in $\Xon$. We will use the following non-private support coverage estimator from~\cite{OrlitskySW16}:
\[
\hat{S}_m(\Xon) = \sum_{i=1}^{\ns}\prfi{i}\Paren{1+(-t)^i\cdot \probof{Z\ge i}},
\] 
where $Z$ is a Poisson random variable with mean $r$ (which is a parameter to be instantiated later), and $t=(m-n)/n$. 

Our private estimator of support coverage is derived by adding Laplace noise to this non-private estimator with the appropriate noise parameter, and thus the performance of our private estimator, is analyzed by bounding the sensitivity and the bias of this non-private estimator according to Lemma~\ref{lem:main-sensitivity}.

The sensitivity and bias of this estimator is bounded in the following lemmas.  
\begin{lemma} \label{lem:sensitivity-coverage}
Suppose $m>2\ns$, then the maximum coefficient of $\prfi{i}$ in $\hat{S}_m(p)$ is at most $1+e^{r(t-1)}$. 
\end{lemma}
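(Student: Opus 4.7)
\textbf{Proof plan for Lemma~\ref{lem:sensitivity-coverage}.} The plan is to read off the coefficient of $\prfi{i}$ directly from the definition of $\hat S_m$, bound it in absolute value via the triangle inequality, and then recognize the resulting sum as (a tail of) a moment generating function of a Poisson random variable.

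First I would note that, by inspection, the coefficient of $\prfi{i}$ in $\hat S_m(\Xon)$ equals $c_i \ed 1 + (-t)^i \cdot \Pr[Z \ge i]$. Since $m > 2n$ gives $t = (m-n)/n > 1 > 0$, the triangle inequality yields
\[
\absv{c_i} \;\le\; 1 + t^i\cdot \Pr[Z\ge i].
\]
So it suffices to show $t^i \Pr[Z\ge i] \le e^{r(t-1)}$ for every $i\ge 1$.

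Next I would expand the Poisson tail and push the factor $t^i$ inside the sum, using $t\ge 1$ and $j\ge i$ to estimate $t^i\le t^j$:
\[
t^i \Pr[Z\ge i] \;=\; t^i \sum_{j\ge i} \frac{r^j e^{-r}}{j!} \;\le\; \sum_{j\ge i} \frac{(rt)^j e^{-r}}{j!} \;\le\; \sum_{j\ge 0} \frac{(rt)^j e^{-r}}{j!} \;=\; e^{rt}\cdot e^{-r} \;=\; e^{r(t-1)}.
\]
Combining with the previous display gives $\absv{c_i}\le 1+e^{r(t-1)}$, as claimed.

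I do not anticipate any real obstacle here: the entire argument hinges on the single observation that $t\ge 1$ allows one to upgrade the prefactor $t^i$ into $t^j$ inside the Poisson tail, turning the resulting series into the Poisson MGF at $t$. The hypothesis $m>2n$ is used only to guarantee $t>1$, which is exactly what is needed for this upgrade to be a valid upper bound rather than a loss.
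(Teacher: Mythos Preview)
Your proof is correct and essentially identical to the paper's own argument: both read off the coefficient $1+(-t)^i\Pr[Z\ge i]$, apply the triangle inequality, use $t\ge 1$ to replace $t^i$ by $t^j$ inside the Poisson tail, and sum the resulting series to $e^{-r}e^{rt}=e^{r(t-1)}$. If anything, you are slightly more explicit than the paper in noting that the hypothesis $m>2n$ is what guarantees $t>1$ and hence the validity of the step $t^i\le t^j$.
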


\begin{proof}
By the definition of $Z$, we know $\probof{Z\ge i} = \sum_{k = i}^{\infty} e^{-r}\frac{r^k}{k!}$, hence we have:
\begin{align}
|1+(-t)^i\cdot \probof{Z\ge i}| &\le  1+ t^i \sum_{k = i}^{\infty} e^{-r}\frac{r^k}{k!}   \nonumber \\
						&\le  1 + e^{-r} \sum_{k = i}^{\infty}\frac{(rt)^k}{k!}  \nonumber \\
					   	&\le 1 + e^{-r} \sum_{k = 0}^{\infty}\frac{(rt)^k}{k!}  \nonumber \\
						&= 1 + e^{r(t-1)} \nonumber
\end{align}
\end{proof}

\noindent The bias of the estimator is bounded in Lemma 4 of~\cite{AcharyaDOS17}:
\begin{lemma}\label{lem:bias-coverage}
Suppose $m>2\ns$, then
\[
\absv{\expectation{\hat{S}_m(\Xon)} - S_m(\p)} \le 2+2e^{r(t-1)}+\min(m, S(p))\cdot e^{-r}. 
\]
\end{lemma}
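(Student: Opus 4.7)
The plan is to reduce the bias calculation to a per-symbol analysis and bound each contribution using both the uniform estimate underlying Lemma~\ref{lem:sensitivity-coverage} and the Poisson tail of the smoothing variable $Z$. By linearity, writing $\hat{S}_m = \sum_x g(N_x)$ with $N_x \sim \mathrm{Bin}(n, p(x))$ and $g(j) = \mathbf{1}_{j \ge 1}(1 + (-t)^j P(Z \ge j))$, the bias equals $\sum_x \bigl(E[g(N_x)] - (1 - (1-p(x))^m)\bigr)$. Using Fubini with $P(Z \ge i) = E_Z[\mathbf{1}_{Z \ge i}]$, each per-symbol expectation reduces to
\[
E[g(N_x)] = \bigl(1-(1-q)^n\bigr) + E_Z\!\left[\sum_{i=1}^{\min(Z,n)}\binom{n}{i}q^i(1-q)^{n-i}(-t)^i\right],
\]
where $q = p(x)$, and the inner partial binomial sum admits the closed form $(1-q(1+t))^n - (1-q)^n$ whenever $Z \ge n$.

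I would next split the per-symbol bias by comparing to the unsmoothed estimator (obtained by setting $P(Z \ge i) = 1$): the ``baseline'' piece has expected per-symbol contribution $(1-q)^m - (1-q(1+t))^n$, while the ``smoothing residual'' $\sum_{i=1}^n \binom{n}{i}q^i(1-q)^{n-i}(-t)^i P(Z < i)$ captures the finite-truncation error. The baseline piece is controlled using the absolute bound $\lvert(-t)^i P(Z \ge i)\rvert \le e^{r(t-1)}$, which is exactly the calculation inside the proof of Lemma~\ref{lem:sensitivity-coverage}. Combining this with the trivial estimate $|(1-q)^m| \le 1$ and summing over at most a bounded number of boundary contributions yields the $2 + 2e^{r(t-1)}$ portion of the target bound.

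For the smoothing residual, the per-symbol contribution is of order $e^{-r}$, since $P(Z < i)$ is exponentially small once $i$ exceeds the Poisson mean by a constant factor (while the coefficients themselves remain bounded by $e^{r(t-1)}$ in absolute value, cancelling only once multiplied against the tail). The key point is that this contribution admits two \emph{simultaneous} per-symbol bounds: a uniform bound that sums to $S(p) \cdot e^{-r}$ across the symbols, and a bound scaling like $m\, p(x)$ that sums to $m\cdot e^{-r}$ using $\sum_x p(x) = 1$. Taking the minimum of the two and summing over the support produces the $\min(m, S(p)) \cdot e^{-r}$ contribution.

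The main obstacle will be establishing the two complementary per-symbol estimates for the smoothing residual and combining them correctly; concretely, a threshold argument separating ``heavy'' symbols (where the uniform per-symbol bound $e^{-r}$ is tight and yields $S(p)e^{-r}$) from ``light'' symbols (where the weight $m p(x)$ gives the sharper bound yielding $me^{-r}$) should produce the $\min(m, S(p))$ factor after summation. A secondary difficulty is the regime $q(1+t) > 1$, in which $(1-q(1+t))^n$ has large magnitude individually but is precisely tamed by the Poisson truncation factor, exactly as in the proof of Lemma~\ref{lem:sensitivity-coverage}.
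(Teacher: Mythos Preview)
The paper does not actually prove this lemma; it is quoted as Lemma~4 of \cite{AcharyaDOS17} and no argument is given in the present paper. So there is nothing here to compare your proposal against, and what you have written is an attempt to reconstruct the proof from scratch.

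That attempt has a genuine gap in the decomposition step. You split the per-symbol bias into (i) the bias of the \emph{unsmoothed} estimator, $(1-q)^m-(1-q(1+t))^n$, and (ii) the smoothing residual $\sum_i \binom{n}{i}q^i(1-q)^{n-i}(-t)^i\,P(Z<i)$, and then assert that (i) alone accounts for the $2+2e^{r(t-1)}$ term while (ii) gives $\min(m,S(p))e^{-r}$. But neither piece is separately bounded in the way you claim. For any symbol with $q>n/m$ (equivalently $q(1+t)>1$), the baseline term $(1-q(1+t))^n$ has magnitude up to $t^n$, with no damping once you have set $P(Z\ge i)\equiv 1$. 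You invoke the inequality $|(-t)^iP(Z\ge i)|\le e^{r(t-1)}$ from Lemma~\ref{lem:sensitivity-coverage} to control (i), but that inequality concerns the \emph{smoothed} coefficients and is irrelevant to the baseline, which by construction contains no $P(Z\ge i)$ factor. The same blow-up afflicts the residual (ii): for large $i$ one has $P(Z<i)\to 1$, so (ii) inherits the same $t^n$-type growth and cannot be bounded by $e^{-r}$ per symbol. The two pieces are individually unbounded and must cancel against each other on heavy symbols; bounding them separately cannot succeed. You flag the regime $q(1+t)>1$ in your final paragraph, but the fix you propose (``tamed by the Poisson truncation factor'') directly contradicts the decomposition, since that factor is precisely what you removed from the baseline.

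A workable route keeps the smoothing attached throughout: write the per-symbol bias as $(1-q)^m - E_Z\bigl[\sum_{j=0}^{\min(Z,n)}\binom{n}{j}(-qt)^j(1-q)^{n-j}\bigr]$ and bound the truncated alternating sum \emph{jointly} with its Poisson weight, so that the $e^{r(t-1)}$ bound of Lemma~\ref{lem:sensitivity-coverage} applies where it is actually valid. Your intuition that two simultaneous per-symbol estimates (one uniform, one proportional to $mq$) produce the $\min(m,S(p))\,e^{-r}$ factor is correct, but it must be applied to a quantity that is already controlled, not to the raw residual (ii).
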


Using these results, letting $r = \log (1/\alpha)$,~\cite{OrlitskySW16} showed that there is a constant $C$, such that with $n = C\frac{m}{\log m}\log(1/\dist)$ samples, with probability at least 0.9, 
\[
\absv{\frac{\hat{S}_m(\Xon)}m- \frac{S_m(\p)}m} \le \dist.
\]



Our upper bound in Theorem~\ref{thm:supportcoverage} is derived by the following analysis of the sensitivity of $\frac{\hat{S}_m(\Xon)}m$.
	
If we change one sample in $\Xon$, at most two of the $\prfi{j}$'s change. Hence by Lemma~\ref{lem:sensitivity-coverage}, the sensitivity of the estimator satisfies
\begin{align}
	\Delta\Paren{\frac{\hat{S}_m(\Xon)}m} \le & \frac2m\cdot\Paren{1+e^{r(t-1)}}.\label{eqn:bound-sen-cov}
\end{align}

By Lemma~\ref{lem:main-sensitivity}, there is a private algorithm for support coverage estimation as long as 
\[
\Delta\Paren{\frac{\hat{S}_m(\Xon)}m} \le \dist\eps,
\]
which by~\eqref{eqn:bound-sen-cov} holds if
\begin{align}
2(1 + \exp(r(t-1))) \le \dist\eps m.
\end{align}
Let $r = \log (3/\alpha)$, note that $t-1 = \frac m{\ns} -2$. Suppose $\dist\eps m>2$, then, the condition above reduces to 
\[
\log \Paren{\frac{3}{\dist}} \cdot \Paren{\frac mn-2} \le \log \Paren{\frac12\dist\eps m -1}.
\]

This is equivalent to
\begin{align}
 n & \ge \frac{m \log (3/\dist)}{\log (\frac12\dist \eps m - 1) + 2 \log (3/\dist)}  \nonumber \\
    &= \frac{m \log (3/\dist)}{\log ( \frac32 \eps m - 3 / \dist) +  \log  (3/\dist)}  \nonumber
\end{align}

Suppose $\dist\eps m >2$, then the condition above reduces to the requirement that
\[
n = O\Paren{\frac{m\log (1/\dist)}{\log (\eps m)}}.
\]

\subsubsection{Lower Bound for Support Coverage Estimation}
\label{sec:coverage-lb}

We now prove the lower bound described in Theorem~\ref{thm:supportcoverage}.
Note that the first term in the lower bound is the sample complexity of non-private support coverage estimation, shown in~\cite{OrlitskySW16}.
Therefore, we turn our attention to prove the latter term in the sample complexity.

%
%

Consider the following two distributions. $u_1$ is uniform over $[m(1+\dist)]$. $u_2$ is distributed over $m+1$ elements $[m] \cup \{\triangle\}$ where $u_2[i] = \frac{1}{m(1+\dist)} \forall i \in [m]$ and $u_2[\triangle] = \frac{\dist}{1+\dist}$. Moreover, $\triangle \notin [m(1+\dist)]$. Then,

\[
S_m(u_1) = m(1+\dist)\cdot \Paren {1-  \Paren{1-\frac1{m(1+\dist)}}^m},
\]

and 
\begin{align}
&S_m(u_2) = \nonumber \\
&~~m\cdot \Paren {1-  \Paren{1-\frac1{m(1+\dist)}}^m} + \Paren {1-  \Paren{1-\frac\dist{1+\dist}}^m} \nonumber 
\end{align}

hence,
\begin{align}
&S_m(u_2) - S_m(u_1) \nonumber \\
& = m\dist \cdot \Paren {1-  \Paren{1-\frac1{m(1+\dist)}}^m} -  \Paren {1-  \Paren{1-\frac\dist{1+\dist}}^m} \nonumber \\
& = \Omega(\dist m) \nonumber 
\end{align}

Hence we know there support coverage differs by $\Omega(\dist m)$. Moreover, their total variation distance is $\frac{\dist}{1+\dist}$.
The following lemma is folklore, based on the coupling interpretation of total variation distance, and the fact that total variation distance is subadditive for product measures.

\begin{lemma} \label{lem:min_coupling}
For any two distributions $\p$, and $\q$, there is a coupling between $\ns$ iid samples from the two distributions with an expected Hamming distance of $\dtv(\p,\q)\cdot\ns$.  
\end{lemma}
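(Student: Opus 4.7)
The plan is to invoke the coupling characterization of total variation distance and then tensorize. Specifically, I would first recall the standard fact that for any two distributions $\p$ and $\q$ on a common space, there exists a \emph{maximal coupling} $(X,Y)$ with marginals $\p$ and $\q$ such that $\probof{X \ne Y} = \dtv(\p,\q)$. This can be constructed explicitly: with probability $1-\dtv(\p,\q)$ draw $X=Y$ from the ``common mass'' distribution proportional to $\min(\p,\q)$, and otherwise draw $X$ and $Y$ independently from the normalized excess masses $(\p - \min(\p,\q))/\dtv(\p,\q)$ and $(\q - \min(\p,\q))/\dtv(\p,\q)$ respectively. A direct check verifies that the marginals are $\p$ and $\q$ and that the probability of disagreement equals $\dtv(\p,\q)$.

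Next I would tensorize. Let $(X_1,Y_1),\dots,(X_\ns,Y_\ns)$ be $\ns$ independent copies of this maximal coupling. Because the copies are independent and each coordinate has the correct marginals, the joint law of $\Xon$ is $\p^\ns$ and the joint law of $\Yon$ is $\q^\ns$, so this is a valid coupling of $\ns$ iid samples from $\p$ and from $\q$.

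Finally I would compute the expected Hamming distance. By linearity of expectation,
\[
\expectation{\ham{\Xon}{\Yon}} \;=\; \sum_{i=1}^{\ns} \probof{X_i \ne Y_i} \;=\; \ns \cdot \dtv(\p,\q),
\]
which is exactly the claim. There is no real obstacle here; the only ``hard'' step is knowing (or proving) the maximal coupling construction, and after that the result follows from a one-line application of linearity of expectation. If one wanted to be self-contained, the only thing to verify carefully is that the explicit coupling above indeed has the stated marginals and disagreement probability, which amounts to checking that $\min(\p,\q)$ has total mass $1 - \dtv(\p,\q)$, a standard identity.
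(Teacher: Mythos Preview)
Your proposal is correct and is exactly the standard folklore argument the paper has in mind; the paper does not give a detailed proof but simply cites ``the coupling interpretation of total variation distance, and the fact that total variation distance is subadditive for product measures,'' which is precisely the maximal-coupling-plus-tensorization argument you wrote out.
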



Using Lemma~\ref{lem:min_coupling} and $\dtv(u_1,u_2) = \frac{\dist}{1+\dist}$, we have
\begin{lemma} \label{lem:coupling_coverage}
Suppose $u_1$ and $u_2$ are as defined before, there is a coupling between $u_1^\ns$ and $u_2^\ns$ with expected Hamming distance equal to $\frac{\dist}{1+\dist} \ns$.
\end{lemma}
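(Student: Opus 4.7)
The plan is to derive this lemma as an immediate corollary of Lemma~\ref{lem:min_coupling}, the folklore coupling characterization of total variation distance applied to the product measures $u_1^{\ns}$ and $u_2^{\ns}$. Concretely, I would proceed in two short steps.

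First, I would verify that $\dtv(u_1,u_2) = \frac{\dist}{1+\dist}$, which is exactly the value already derived in the paragraph just above the lemma statement. The computation is routine: the two distributions agree on the common support $[m]$, so the only contributions to $\frac{1}{2}\sum_x |u_1(x)-u_2(x)|$ come from the $m\dist$ symbols in $\{m+1,\ldots,m(1+\dist)\}$ (where only $u_1$ puts mass $1/(m(1+\dist))$ on each) and from $\triangle$ (where only $u_2$ puts mass $\dist/(1+\dist)$). Summing these contributions and halving yields the claimed value.

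Second, I would invoke Lemma~\ref{lem:min_coupling} with $\p = u_1$ and $\q = u_2$. This produces a coupling between $u_1^{\ns}$ and $u_2^{\ns}$ whose expected Hamming distance equals $\ns\cdot \dtv(u_1,u_2) = \frac{\dist}{1+\dist}\ns$, which is exactly the claim. Operationally, the coupling is constructed coordinate-by-coordinate by sampling each pair $(X_i,Y_i)$ independently from the maximal coupling of $u_1$ and $u_2$, so that $\Pr[X_i \neq Y_i] = \dtv(u_1,u_2)$ for each $i$; linearity of expectation then combines the per-coordinate disagreement probabilities into the stated sum.

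There is essentially no obstacle here: the content of the lemma is purely a specialization of a pre-existing fact to the pair of distributions constructed for the lower bound, and every ingredient has already appeared in the preceding paragraphs. The only care required is in the TV computation, where the asymmetric supports of $u_1$ and $u_2$ might invite a bookkeeping slip if one forgets to account simultaneously for both the symbols in $\{m+1,\ldots,m(1+\dist)\}$ that are missing from $u_2$ and the extra symbol $\triangle$ carried by $u_2$.
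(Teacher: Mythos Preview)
Your proposal is correct and follows exactly the same approach as the paper: compute $\dtv(u_1,u_2)=\frac{\dist}{1+\dist}$ and then apply Lemma~\ref{lem:min_coupling} to obtain the coupling with expected Hamming distance $\frac{\dist}{1+\dist}\ns$. The paper's proof is just this one-line invocation, and your additional per-coordinate description of the maximal coupling merely unpacks what Lemma~\ref{lem:min_coupling} already asserts.
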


Moreover, given $\ns$ samples, we must be able to privately distinguish between $u_1$ and $u_2$ given an $\dist$ accurate estimator of support coverage with privacy considerations. Thus, according to Lemma~\ref{lem:coupling} and~\ref{lem:coupling_coverage}, we have:
\[
\frac\dist{1+\dist} \ns\ge \frac1{\eps}\Rightarrow n = \Omega\Paren{\frac1{\eps\dist}}.
\]

%
%
%

\subsection{Support Size Estimation}
In this section, we prove our main theorem about support size estimation, Theorem~\ref{thm:ssize}:

\ssize*

\noindent Our upper bound is described and analyzed in Section~\ref{sec:ssize-ub}, while our lower bound appears in Section~\ref{sec:ssize-lb}.
\label{sec:ssize}
\subsubsection{Upper Bound for Support Size Estimation}
\label{sec:ssize-ub}



In \cite{OrlitskySW16}, it is shown that the support coverage estimator can be used to obtain optimal results for estimating the support size of a distribution.
In this fashion, taking $m=\ab\log(3/\dist)$, we we may use an estimate of the support coverage $S_m(\p)$ as an estimator of $S(\p)$.
In particular, their result is based on the following observation. 

\begin{lemma}
\label{lem:sssc}
Suppose $m\ge \ab\log (3/\dist)$, then for any $\p\in\Dgk$, 
\[
\absv{S_m(\p) - S(\p)}\le \frac{\dist\ab}3.
\]
\end{lemma}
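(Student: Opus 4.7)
The plan is to unfold both quantities and compare them term-by-term, leveraging the assumption that every non-zero probability is at least $1/k$.

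First, I would write $S(p) - S_m(p) = \sum_{x : p(x) > 0} (1 - p(x))^m$, which follows immediately from the definitions $S(p) = |\{x : p(x) > 0\}|$ and $S_m(p) = \sum_x (1 - (1 - p(x))^m)$. Since each term is non-negative, this already yields the sign: $S_m(p) \leq S(p)$, so $|S_m(p) - S(p)| = S(p) - S_m(p)$.

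Next, I would apply the lower bound $p(x) \geq 1/k$ for every $x$ in the support to obtain $(1 - p(x))^m \leq (1 - 1/k)^m \leq e^{-m/k}$. Combining this with the trivial bound $S(p) \leq k$ (a consequence of $p \in \Delta_{\geq 1/k}$) gives
\[
S(p) - S_m(p) \leq S(p) \cdot e^{-m/k} \leq k \cdot e^{-m/k}.
\]
Finally, substituting $m \geq k \log(3/\alpha)$ makes $e^{-m/k} \leq \alpha/3$, so the right-hand side is at most $\alpha k / 3$, as claimed.

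There is no real obstacle here; the only thing to be mindful of is keeping the two uses of the assumption $p \in \Delta_{\geq 1/k}$ straight (once as an upper bound on each $(1-p(x))^m$, once as an upper bound on the support size itself). The bound $(1 - 1/k)^m \leq e^{-m/k}$ is just the standard inequality $1 - u \leq e^{-u}$ with $u = 1/k$.
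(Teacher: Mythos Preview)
Your proposal is correct and follows essentially the same argument as the paper's proof: both observe $S_m(p)\le S(p)$, write the difference as $\sum_{x:p(x)>0}(1-p(x))^m$, bound each summand by $e^{-m/k}$ via $p(x)\ge 1/k$ and $1-u\le e^{-u}$, bound the number of summands by $k$, and finish by substituting $m\ge k\log(3/\alpha)$. The only cosmetic difference is that the paper applies the exponential inequality before the $p(x)\ge 1/k$ bound rather than after.
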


\begin{proof}
From the definition of $S_m(\p)$, we have $S_m(\p)\le S(\p)$. For the other side, 
\begin{align}
S(p) -S_m(p) &= \sum_{\smb} \Paren{1-\p(x)}^m \le \sum_{\smb} e^{-m\p(x)}  \nonumber \\
		     &\le \ab\cdot e^{-\log (3/\dist)}  ~~~~~ = \frac{\ab\dist}{3}. \qedhere
\end{align}
\end{proof}

Therefore, estimating $S_m(\p)$ for $m = k\log (3/\dist)$, up to $\pm \dist\ab/3$, also estimates $S(p)$ up to $\pm \dist k$. Therefore, the goal is to estimate the smallest value of $\ns$ to solve the support coverage problem. 

Suppose $r = \log (3/\dist)$, and $m = \ab\log (3/\dist) = \ab \cdot r$ in the support coverage problem. Then, we have 
\begin{align}
t= \frac mn-1 = \frac{\ab\log (3/\dist)}{n}-1.\label{eq:exp-t}
\end{align}
Then, by Lemma~\ref{lem:bias-coverage} in the previous section, we have 
\begin{align}
&~~~\absv{\expectation{\hat{S}_m(\Xon)} - S(\p)} \nonumber\\ 
&\le \absv{\expectation{\hat{S}_m(\Xon)} - S_m(\p)}  + \absv{S_m(\p) - S(\p)}\nonumber\\ 
&\le 2+2e^{r(t-1)}+\min\{m, \ab\}\cdot e^{-r} + \frac{k\dist}{3}\nonumber\\
&\le 2+2e^{r(t-1)}+\ab\cdot e^{-\log (3/\dist)} + \frac{k\dist}{3}\nonumber\\
&\le  2 + 2e^{r(t-1)}+ 2\frac{k\dist}{3}.\nonumber
\end{align}

We will find conditions on $\ns$ such that the middle term above is at most $\ab\dist$. Toward this end, note that $2e^{r(t-1)} \le \dist\ab$ holds if and only if ${r(t-1)} \le \log\Paren{\frac{\dist\ab}{2}}$. Plugging in~\eqref{eq:exp-t}, this holds when
\begin{align}
\log (3/\dist)\cdot \Paren{\frac{\ab\log (3/\dist)}{n}-2} \le \log\Paren{\frac{\dist\ab}{2}},\nonumber
\end{align}
which is equivalent to 

\begin{align}
\ns\ge \frac{\ab\log^2(3/\dist)}{\log{\frac{\dist\ab}{2}}+2\log \frac 3\dist } = O\Paren{ \frac{\ab\log^2(1/\dist)}{\log{\ab }} }\nonumber
\end{align}
where we have assumed without loss of generality that $\dist>\frac1\ab$. 

The computations for sensitivity are very similar. From Lemma 1~\ref{lem:main-sensitivity}, we need to find the value of $\ns$ such that 
\[
2+2e^{r(t-1)}\le \dist\eps\ab, 
\]
where we assume that $\ns \le \frac12\ab\log (3/\dist)$, else we just add noise to the true number of observed distinct elements
By computations similar to the expectation case, this reduces to 
\begin{align}
\ns \ge \frac{\ab\log^2(3/\dist)}{\log{\frac{\dist\eps\ab}{2}}+\log \frac 3\dist}.\nonumber
\end{align}
Therefore, this gives us a sample complexity of
\begin{align}
\ns  = O\Paren{\frac{\ab\log^2(1/\dist)}{\log\Paren{\eps\ab}}}
\end{align}
for the sensitivity result to hold. 

We note that the bound above blows up when $\eps\le \frac1\ab$. However, we note that our lower bound implies that we need at least $\Omega(1/\eps) =\Omega(k)$ samples in this case, which is not in the sub-linear regime that we are interested in. We therefore consider only the regime where the privacy parameter $\eps$ is at least $1/k$. 

\subsubsection{Lower Bound for Support Size Estimation}
\label{sec:ssize-lb}
In this section, we prove a lower bound for support size estimation, as described in Theorem~\ref{thm:ssize}.
The techniques are similar to those for support coverage in Section~\ref{sec:coverage-lb}.
The first term of the complexity is the lower bound for non-private setting.
This follows by combining the lower bound of~\cite{OrlitskySW16} for support coverage, with the equivalence between estimation of support size and coverage as implied by Lemma~\ref{lem:sssc}.
We focus on the second term in the sequel.

Consider the following two distributions: $u_1$ is a uniform distribution over $[k]$ and $u_2$ is a uniform distribution over $[(1-\dist)k]$. Then the support size of these two distribution differs by $\dist k$, and $\dtv(u_1,u_2) = \dist$. 

Hence by Lemma~\ref{lem:min_coupling}, we know the following:

\begin{lemma} \label{lem:coupling_ssize}
	Suppose $u_1 \sim U[k]$ and $u_2 \sim U[(1-\dist)k]$, there is a coupling between $u_1^\ns$ and $u_2^\ns$ with expected Hamming distance equal to $ \dist \ns $.
\end{lemma}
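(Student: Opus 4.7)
The plan is to apply Lemma~\ref{lem:min_coupling} directly: since that folklore lemma produces, for any two distributions $\p,\q$, a coupling of $\p^{\ns}$ and $\q^{\ns}$ whose expected Hamming distance equals $\dtv(\p,\q)\cdot\ns$, it suffices to show that $\dtv(u_1,u_2)=\dist$ for the specific $u_1=U[\ab]$ and $u_2=U[(1-\dist)\ab]$ at hand. The main content of the proof is thus a one-line total variation computation, and the only ``obstacle'' is to be careful that the support of $u_2$ is interpreted as a subset of the support of $u_1$ (say the first $(1-\dist)\ab$ symbols), so that the two distributions are comparable on the same ground set $[\ab]$.

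Concretely, I would split $[\ab]$ into the two regions $A=[(1-\dist)\ab]$ and $B=[\ab]\setminus A$ (of sizes $(1-\dist)\ab$ and $\dist\ab$ respectively) and evaluate
\begin{align*}
\dtv(u_1,u_2) &= \tfrac12\sum_{x\in[\ab]}\absv{u_1(x)-u_2(x)} \\
 &= \tfrac12\sum_{x\in A}\Absvlu{\tfrac{1}{(1-\dist)\ab}-\tfrac{1}{\ab}}+\tfrac12\sum_{x\in B}\tfrac{1}{\ab} \\
 &= \tfrac12\cdot(1-\dist)\ab\cdot\tfrac{\dist}{(1-\dist)\ab}+\tfrac12\cdot\dist\ab\cdot\tfrac{1}{\ab} \\
 &= \tfrac{\dist}{2}+\tfrac{\dist}{2}=\dist.
\end{align*}
Plugging this into Lemma~\ref{lem:min_coupling} with $\p=u_1$, $\q=u_2$ yields a coupling of $u_1^{\ns}$ and $u_2^{\ns}$ with $\expectation{\ham{\Xon}{\Yon}}=\dist\cdot\ns$, which is exactly the statement of Lemma~\ref{lem:coupling_ssize}.

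For completeness, I would add a one-sentence recollection of why Lemma~\ref{lem:min_coupling} gives this: applying the maximal coupling of $u_1$ and $u_2$ independently in each of the $\ns$ coordinates produces a joint law of $(\Xon,\Yon)$ under which $\probof{X_i\ne Y_i}=\dtv(u_1,u_2)=\dist$ for every $i$, so by linearity of expectation $\expectation{\ham{\Xon}{\Yon}}=\ns\dist$. There is no real obstacle here beyond arithmetic; the lemma is essentially a corollary of Lemma~\ref{lem:min_coupling} specialized to the two-point-family used to drive the lower bound in Section~\ref{sec:ssize-lb}.
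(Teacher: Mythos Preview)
Your proposal is correct and follows essentially the same approach as the paper: the paper simply asserts $\dtv(u_1,u_2)=\dist$ (without writing out the arithmetic) and then invokes Lemma~\ref{lem:min_coupling}. Your version is a more detailed rendering of the same one-line argument.
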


Moreover, given $\ns$ samples, we must be able to privately distinguish between $u_1$ and $u_2$ given an $\dist$ accurate estimator of entropy with privacy considerations. 
Thus, according to Lemma~\ref{lem:coupling}
and Lemma~\ref{lem:coupling_ssize}, we have:
\[
\dist \ns\ge \frac1{\eps}\Rightarrow n = \Omega\Paren{\frac1{\eps\dist}}.
\]

\subsection{Entropy Estimation}
\label{sec:entropy}
In this section, we prove our main theorem about entropy estimation, Theorem~\ref{thm:entropy}:

\entropy*

\noindent We describe and analyze two upper bounds.
The first is based on the empirical entropy estimator, and is described and analyzed in Section~\ref{sec:entropy-ub-empirical}.
The second is based on the method of best-polynomial approximation, and appears in Section~\ref{sec:entropy-ub-poly}.
Finally, our lower bound is in Section~\ref{sec:entropy-lb}.
\subsubsection{Upper Bound for Entropy Estimation: The Empirical Estimator}
\label{sec:entropy-ub-empirical}
Our first private entropy estimator is derived by adding Laplace noise into the empirical estimator. The parameter of the Laplace distribution is $\frac{ \Delta(\entemp)}{\eps}$, where $\Delta(\entemp)$ denotes the sensitivity of the empirical estimator. By analyzing its sensitivity and bias, we prove an upper bound on the sample complexity for private entropy estimation and get the first upper bound in Theorem~\ref{thm:entropy}.

Let $\emp$ be the empirical distribution, and let $\entemp$ be the entropy of the empirical distribution.
The theorem is based on the following three facts:
\begin{align}
&\Delta(\entemp) = O\Paren{\frac{\log\ns}{\ns}}.\label{eqn:sensitivity-emp}\\
&\absv{\entp{\p}-\expectation{\entemp}} =O\Paren{\frac{\ab}{\ns}},\label{eq:bias-emp}\\
&\variance{\entemp} =O\Paren{\frac{\log^2(\min\{\ab,\ns\})}{\ns}},\label{eq:variance-emp}
\end{align}
With these three facts in hand, the sample complexity of the empirical estimator can be bounded as follows.
 By Lemma~\ref{lem:main-sensitivity}, 
we need $\Delta(\entemp) \le \dist \eps$, which gives $\ns=O\Paren{\frac1{\dist \eps} \log(\frac1{\dist \eps})}$. We also need $\absv{\entp{\p}-\expectation{\entemp}} = O\Paren{\dist}$ and $\variance{\entemp} = O\Paren{\dist^2}$, which gives $\ns = O\Paren{\frac{\ab}{\ns}+   \frac{ \log^2(\min\{\ab,\ns\})}{\dist^2}}$.

\paragraph{Proof of~\eqref{eqn:sensitivity-emp}.}
The largest change in any $\nsmb$ when we change one symbol is one. Moreover, at most two $\nsmb$ change. Therefore, 
\begin{align}
\Delta(\entemp) &\leq  2\cdot \max_{j=1\ldots \ns-1} \absv{ \frac{j+1}{n}\log\frac{n}{j+1}-\frac{j}{n}\log\frac{n}{j}}\nonumber\\
&=  2\cdot \max_{j=1\ldots \ns-1}\absv{\frac{j}{\ns}\log \frac{j}{j+1}+\frac1{\ns}\log \frac{\ns}{j+1}}\label{eqn:sum-one}\\
&\leq  2 \cdot \max_{j=1\ldots \ns-1}\max\left\{\absv{\frac{j}{\ns}\log \frac{j}{j+1}}, \absv{ \frac1{\ns}\log \frac{\ns}{j+1}}\right\}\label{eqn:sum-two}\\
&\leq  2\cdot \max\left\{\frac 1\ns, \frac{\log \ns}{\ns}\right\},\nonumber\\
& = 2\cdot \frac{\log \ns}{\ns}.
\end{align}

\paragraph{Proof of~\eqref{eq:bias-emp}.} By the concavity of entropy function, we know that 
\[
\expectation{\entp{\emp}} \le \entp{\p}.
\]
Therefore, 
\begin{align}
&\expectation{\absv{\entp{\p}-\entp{\emp}}} =  \entp{\p}-\expectation{{\entp{\emp}}}\nonumber\\
&=  \expectation{\sum_{\smb}\Paren{\pemps\log \pemps-\ps\log\ps}} \nonumber\\
&=  \expectation{\sum_{\smb}\pemps\log \frac{\pemps}{\ps}}+\expectation{\sum_{\smb}(\pemps-\ps)\log\ps}\nonumber\\
&=  \expectation{\kldist{\emp}{\p}}\label{eqn:kl-exp}\\
&\leq \expectation{\chisquare{\emp}{\p}}\label{eqn:kl-chisq}\\
&=  \expectation{\sum_{\smb}\frac{(\pemps-\ps)^2}{\ps}}\nonumber\\
&\le  {\sum_{\smb}\frac{(\ps/\ns)}{\ps}}\label{eqn:var-bin}\\
&= \frac\ab\ns.
\end{align}

\paragraph{Proof of~\eqref{eq:variance-emp}.}The variance bound of $\frac{\log^2 k}{n}$ is given precisely in Lemma 15 of~\cite{JiaoVHW17}. 
To obtain the other half of the bound of, we apply the bounded differences inequality in the form stated in Corollary 3.2 of~\cite{BoucheronLM13}.
\begin{lemma}
	\label{lem:bdd-diff-var}
	Let $f:\Omega^{\ns}\to\RR$ be a function. Suppose further that 
	\[
	\max_{z_1,\ldots, z_n, z_{i}^{'}} \absv{f(z_1,\ldots,z_n) - f(z_1,\ldots, z_{i-1}, z_i^{'},\ldots,z_n)}\le c_i.
	\]
	Then for independent variables $Z_1,\ldots,Z_{\ns}$, 
	\[
	\Var\Paren{f(Z_1,\ldots,Z_{\ns})} \le \frac14\sum_{i=1}^{\ns} c_i^2.
	\]
\end{lemma}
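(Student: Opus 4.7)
The plan is to deduce this from the Efron--Stein inequality combined with Popoviciu's variance bound for a bounded random variable. First I would introduce the Doob martingale $M_i \ed \EE\brack{f(Z_1,\ldots,Z_{\ns}) \mid Z_1,\ldots, Z_i}$ for $i = 0, 1, \ldots, \ns$, so that $M_0 = \EE\brack{f(Z_1,\ldots,Z_\ns)}$ and $M_{\ns} = f(Z_1,\ldots,Z_\ns)$. The consecutive martingale differences $D_i \ed M_i - M_{i-1}$ are uncorrelated by the tower property ($\EE\brack{D_i D_j} = \EE\brack{D_i \EE\brack{D_j \mid Z_1,\ldots,Z_{j-1}}} = 0$ for $i < j$), so the variance telescopes:
\[
\Var\Paren{f(Z_1,\ldots,Z_{\ns})} = \sum_{i=1}^{\ns} \EE\brack{D_i^2}.
\]

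Next I would bound each term $\EE\brack{D_i^2}$ by $c_i^2/4$. Conditional on $Z_1,\ldots,Z_{i-1}$, the increment $D_i$ equals $g(Z_i) - \EE\brack{g(Z_i)}$, where $g(z) \ed \EE\brack{f(Z_1,\ldots,Z_{\ns}) \mid Z_1,\ldots,Z_{i-1}, Z_i = z}$; I am using here that $Z_{i+1},\ldots,Z_{\ns}$ are independent of $Z_i$ given the past, so that conditioning on $Z_i = z$ amounts to substituting $z$ into the $i$th argument and averaging over the remaining coordinates. The bounded-differences hypothesis then yields $|g(z) - g(z')| \le c_i$ for all $z, z'$ in the range of $Z_i$, because one may push the absolute value inside the conditional expectation. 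Thus $g(Z_i)$ almost surely takes values in an interval of length at most $c_i$, and Popoviciu's inequality (any random variable supported in $[a,b]$ has variance at most $(b-a)^2/4$) gives $\EE\brack{D_i^2 \mid Z_1,\ldots,Z_{i-1}} \le c_i^2/4$ pointwise. Taking outer expectations and summing over $i$ completes the proof.

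This is a classical companion of the bounded-differences concentration inequality, so I do not expect a real obstacle. The one point that benefits from a line of care is the oscillation bound $|g(z)-g(z')| \le c_i$, which uses independence of the coordinates in an essential way; Popoviciu's $(b-a)^2/4$ bound itself can be checked in two lines via $\Var(X) \le \EE\Paren{(X-(a+b)/2)^2} \le (b-a)^2/4$, so I would either quote it or derive it on the spot. An equally clean alternative would be to cite the Efron--Stein inequality directly in the form $\Var(f) \le \sum_i \EE\brack{\Var_i(f)}$ and apply Popoviciu to each conditional variance $\Var_i(f) \le c_i^2/4$, bypassing the martingale setup entirely.
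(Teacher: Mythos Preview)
Your argument is correct and complete. The Doob martingale decomposition followed by Popoviciu's bound on each conditional increment is a standard and clean way to obtain the $\frac14\sum_i c_i^2$ variance bound; the one subtle step, namely that $|g(z)-g(z')|\le c_i$ after averaging out the remaining independent coordinates, you have handled correctly.

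As for comparison with the paper: the paper does not actually prove this lemma. It is stated there as a known fact and attributed to Corollary~3.2 of Boucheron--Lugosi--Massart. So your write-up supplies a self-contained proof where the paper simply quotes the result. Either of the two routes you sketch (the martingale decomposition you wrote out, or the direct Efron--Stein form $\Var(f)\le\sum_i\EE[\Var_i(f)]$ plus Popoviciu) is exactly how the cited reference derives it, so there is no genuine methodological difference to discuss.
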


Therefore, using Lemma~\ref{lem:bdd-diff-var} and Equation~\eqref{eqn:sensitivity-emp}
\[
\variance{\entemp}\le \ns\cdot\Paren{\frac{4\log^2\ns}{\ns^2}} = \frac{4\log^2\ns}{\ns}.
\]

%

\subsubsection{Upper Bound for Entropy Estimation: Best-Polynomial Approximation}
\label{sec:entropy-ub-poly}
We prove an upper bound on the sample complexity for private entropy estimation if one  adds Laplace noise into best-polynomial estimator.
This will give us the second upper bound in Theorem~\ref{thm:entropy}.

In the non-private setting the optimal sample complexity of estimating $H(\p)$ over $\Delta_k$ is given by Theorem 1 of~\cite{WuY16}
\[
\Theta\Paren{\frac{\ab}{\dist\log\ab}+\frac{\log^2(\min\{\ab,\ns\})}{\dist^2}}.
\]
However, this estimator can have a large sensitivity.~\cite{AcharyaDOS17} designed an estimator that has the same sample complexity but a smaller sensitivity. We restate Lemma 6 of~\cite{AcharyaDOS17} here:
\begin{lemma}
Let $\lambda>0$ be a fixed small constant, which may be taken to be any value between $0.01$ and $1$. Then there is an entropy estimator with sample complexity 
\[
\Theta\Paren{\frac1{\lambda^2}\cdot\frac{\ab}{\dist\log\ab}+\frac{\log^2(\min\{\ab,\ns\})}{\dist^2}},
\]
and has sensitivity $\ns^\lambda/\ns$. 
\end{lemma}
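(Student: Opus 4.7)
The plan is to follow the best-polynomial-approximation recipe of Wu--Yang and Jiao--Venkat--Han--Weissman, but with the approximating polynomial degree tuned by $\lambda$ so as to trade a modest statistical penalty for dramatically smaller sensitivity. After passing to Poisson sampling so the counts $N_x$ are independent Poissons with means $\ns \p(x)$, I fix a threshold $\tau = c_1\log\ab$ and a polynomial degree $L = \lfloor\lambda\log\ab\rfloor$ (the only change from the non-private construction, which takes $L\asymp\log\ab$). For each symbol $x$: if $N_x>2\tau$, output the bias-corrected plug-in estimate $-(N_x/\ns)\log(N_x/\ns) + 1/(2\ns)$; if $N_x\le 2\tau$, output $\sum_{j=1}^L a_j\,N_x^{\underline{j}}/\ns^j$, where $N_x^{\underline{j}}$ is the $j$-th falling factorial (an unbiased estimator of $\p(x)^j$ under Poisson sampling) and $\{a_j\}$ are the monomial coefficients of the best degree-$L$ polynomial approximation to $f(p)=-p\log p$ on $[0,2\tau/\ns]$. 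The final estimator is the sum over $x$, smoothed across the threshold as in~\cite{WuY16}.

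For the statistical rate, the per-symbol bias of the polynomial piece equals the best-approximation error $E_L(f)_{[0,2\tau/\ns]} = \Theta\Paren{\tau/(\ns L^2)}$ by the classical Bernstein-type bound for $-x\log x$; summing over $\ab$ symbols this contributes $O\Paren{\ab\tau/(\ns L^2)} = O\Paren{\ab/(\lambda^2 \ns\log\ab)}$, which is the first term of the stated sample complexity. The large-probability bias is controlled by Taylor expansion of the plug-in estimator, and the variance is $O\Paren{\log^2(\min\{\ab,\ns\})/\ns}$ by Poisson concentration on each symbol together with boundedness of the polynomial piece, both exactly as in the non-private analyses. Neither of these depends on $\lambda$, so requiring concentration at scale $\dist$ yields the second term $\log^2(\min\{\ab,\ns\})/\dist^2$.

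The main obstacle is the sensitivity bound, which is the entire reason for shrinking $L$. Changing one sample alters at most two counts $N_x$ by $\pm 1$; the plug-in part contributes $O(\log\ns/\ns)$ as in Section~\ref{sec:entropy-ub-empirical}, so the work is to bound
\[
\bigl|\sum_{j=1}^L a_j\,\tfrac{j\,N_x^{\underline{j-1}}}{\ns^j}\bigr|
\]
uniformly over $N_x\le 2\tau$. Pulling the Chebyshev expansion of the best approximation on $[-1,1]$ back through the affine change of variables onto $[0,2\tau/\ns]$ bounds the monomial coefficients by $|a_j| \le C^L\cdot\|f\|_\infty\cdot(\ns/\tau)^j$ for an absolute constant $C$, where $\|f\|_\infty\lesssim (\tau/\ns)\log(\ns/\tau)$ is the sup-norm of $-p\log p$ on $[0,2\tau/\ns]$. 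Plugging this in and using $N_x^{\underline{j-1}}\le (2\tau)^{j-1}$ collapses each summand to $\mathrm{polylog}(\ns)\cdot (2C)^L/\ns$, and summing over $j\le L$ gives $\mathrm{polylog}(\ns)\cdot (2C)^L/\ns$. With $L = \lambda\log\ab\le\lambda\log\ns$ this is $\ns^{O(\lambda)}/\ns$, and absorbing the implicit constant factor into a slightly smaller $\lambda$ yields the claimed sensitivity $\ns^\lambda/\ns$. The intrinsic exponential blow-up $C^L$ of the monomial coefficients of best-polynomial approximations is precisely why one cannot take $L\asymp\log\ab$ in the private setting; it is the same mechanism that forces the $1/\lambda^2$ trade-off on the first term of the complexity, since statistical accuracy improves as $L^{-2}$ while sensitivity degrades as $C^L$.
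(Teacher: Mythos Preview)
The paper does not prove this lemma at all: it is quoted verbatim as Lemma~6 of~\cite{AcharyaDOS17}, and Section~\ref{sec:entropy-ub-poly} simply invokes it as a black box before applying Lemma~\ref{lem:main-sensitivity}. So there is no in-paper argument to compare against.

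Your sketch is a faithful reconstruction of the argument in the cited reference: take the Wu--Yang/JVHW polynomial estimator, shrink the approximation degree from $\Theta(\log\ab)$ to $L=\lfloor\lambda\log\ab\rfloor$, pay the $1/L^2=1/(\lambda^2\log^2\ab)$ price in the approximation bias (producing the $\ab/(\lambda^2\dist\log\ab)$ term), and observe that the monomial coefficients of the best approximant blow up only as $C^L$, so the per-sample change in the falling-factorial estimator is $\mathrm{polylog}(\ns)\cdot C^L/\ns = \ns^{O(\lambda)}/\ns$. That is exactly the trade-off the lemma records. A couple of minor points to tidy if you write this out in full: (i) the degree is more naturally set as $\lambda\log\ns$ rather than $\lambda\log\ab$ so that the sensitivity bound $\ns^\lambda/\ns$ holds for all $\ns$, not just $\ns$ near the sample complexity where $\log\ns\asymp\log\ab$; (ii) the sum $\sum_{j\le L} j\cdot 2^{j-1}$ hidden in your ``summing over $j\le L$'' contributes an extra factor $L\cdot 2^L$, not just a constant, but this is harmless since it is absorbed into the $C^L$ and the final rescaling of $\lambda$. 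Neither affects the correctness of the outline.
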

We can now invoke Lemma~\ref{lem:main-sensitivity}
on the estimator in this lemma to obtain the upper bound on private entropy estimation. 

%

\subsubsection{Lower Bound for Entropy Estimation}
\label{sec:entropy-lb}
We now prove the lower bound for entropy estimation. 
Note that any lower bound on privately testing two distributions $\p$, and $\q$ such that $H(\p)-H(\q)= \Theta(\alpha)$ is a lower bound on estimating entropy. 

We analyze the following construction for Proposition 2 of~\cite{WuY16}. The two distributions $\p$, and $\q$ over $[k]$ are defined as:
\begin{align}
p(1) = \frac23,& p(i) = \frac{1-p(1)}{k-1}, \text{for $i=2,\ldots,k$},\\
q(1) = \frac{2-\eta}3,& q(i) = \frac{1-q(1)}{k-1}, \text{for $i=2,\ldots,k$}.
\end{align}
Then, by the grouping property of entropy, 
\[
H(\p) = h(2/3) + \frac13\cdot \log (k-1), \text{ and } H(\q) = h((2-\eta)/3) + \frac{1+\eta}3\cdot \log (k-1),
\]
which gives
\[
H(\p)-H(\q) = \Omega(\eta\log k). 
\]
For $\eta = \dist/\log\ab$, the entropy difference becomes $\Theta(\dist)$. 

The total variation distance between $\p$ and $\q$ is $\eta/3$. By Lemma 5 in the paper, 
there is a coupling over $\Xon$, and $\Yon$ generated from $\p$ and $\q$ with expected Hamming distance at most $\dtv(\p,\q)\cdot \ns$. 
This along with Lemma 2 in the paper 
gives a lower bound of $\Omega\Paren{\log k/\dist\eps}$ on the sample complexity.  

\section{Experiments}
\label{sec:exp}
We evaluated our methods for entropy estimation and support coverage on both synthetic and real data.
Overall, we found that privacy is quite cheap: private estimators achieve accuracy which is comparable or near-indistinguishable to non-private estimators in many settings.
Our results on entropy estimation and support coverage appear in Sections~\ref{sec:exp-entropy} and~\ref{sec:exp-coverage}, respectively.
Code of our implementation is available at \url{https://github.com/HuanyuZhang/INSPECTRE}.

\subsection{Entropy}
\label{sec:exp-entropy}
We compare the performance of our entropy estimator with a number of alternatives, both private and non-private.
Non-private algorithms considered include the plug-in estimator (\plugin),  the Miller-Madow Estimator (\MM)~\cite{Miller55}, the sample optimal polynomial approximation estimator (\polyn) of~\cite{WuY16}.
We analyze the privatized versions of plug-in, and \polyn\ in Sections~\ref{sec:entropy-ub-empirical} and~\ref{sec:entropy-ub-poly}, respectively.
The implementation of the latter is based on code from the authors of~\cite{WuY16}\footnote{See \url{https://github.com/Albuso0/entropy} for their code for entropy estimation.}.
We compare performance on different distributions including uniform, a distribution with two steps, Zipf(1/2), a distribution with Dirichlet-1 prior, and a distribution with Dirichlet-$1/2$ prior, and over varying support sizes.

While \plugin, and \MM\ are parameter free,  \polyn\ (and its private counterpart) have to choose the degree $L$ of the polynomial  to use, which manifests in the parameter $\lambda$ in the statement of Theorem~\ref{thm:entropy}.~\cite{WuY16} suggest the value of $L=1.6\log k$ in their experiments. However, since we add further noise, we choose a single $L$ as follows: (i) Run  privatized \polyn\ for different $L$ values and distributions for $k=2000$, $\eps=1$, (b) Choose the value of $L$ that performs well across different distributions (See Figure~\ref{fig:Lcomparison}). We choose $L=1.2\cdot \log k$ from this, and  use it for all other experiments. 
To evaluate the sensitivity of \polyn, we computed the estimator's value at all possible input values, computed the sensitivity, (namely, $\Delta = \max_{\ham{\Xon}{\Yon}\le 1}|\polyn(\Xon)-\polyn(\Yon)|$), and added noise distributed as $\text{Lap}\left(0,\frac{\Delta}{\eps}\right)$.

%
%
\begin{figure*}[h]
\centering
\captionsetup[subfigure]{labelformat=empty}
\subfigure{
\includegraphics[width=0.18\textwidth]{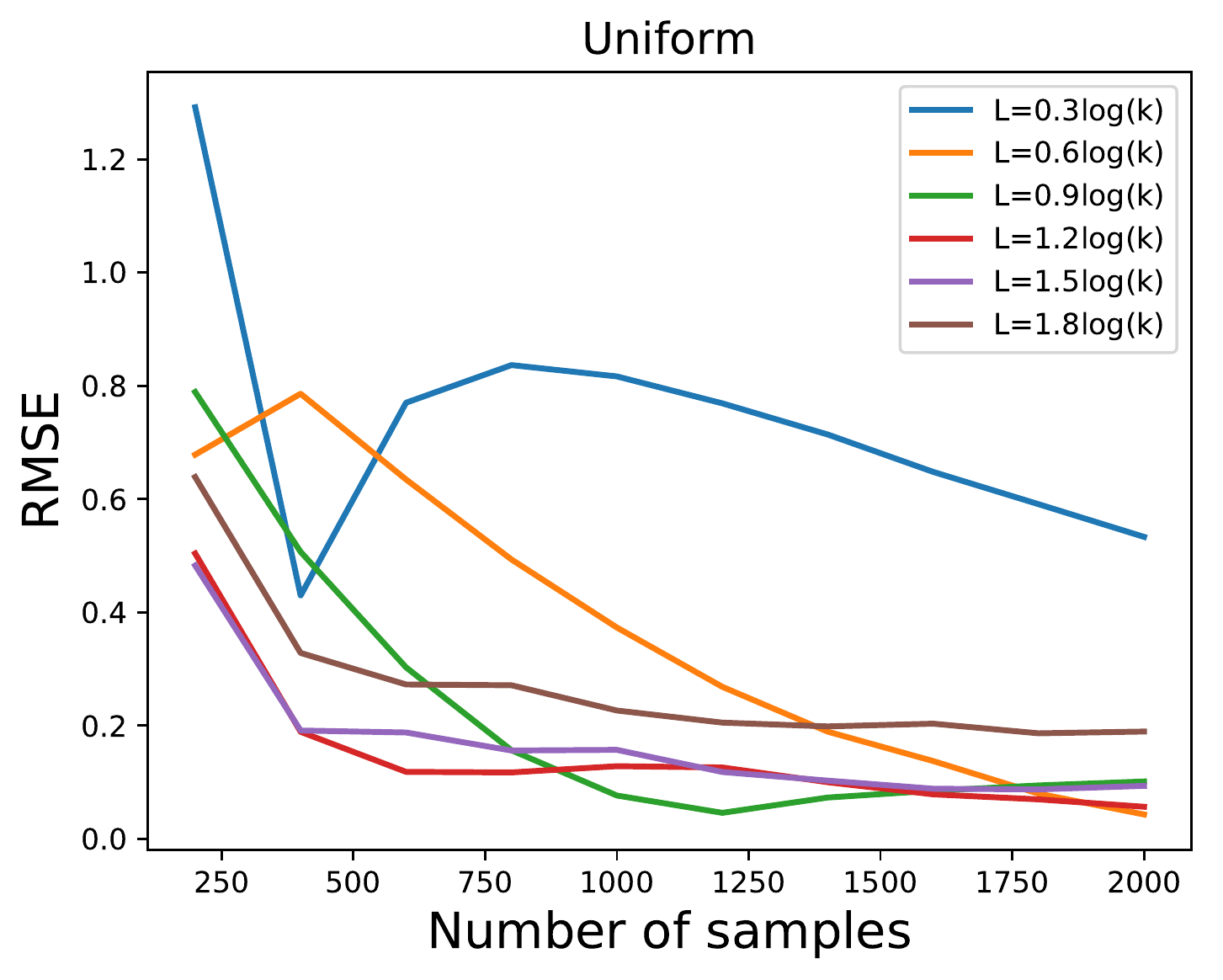}
}
\subfigure{
\includegraphics[width=0.18\textwidth]{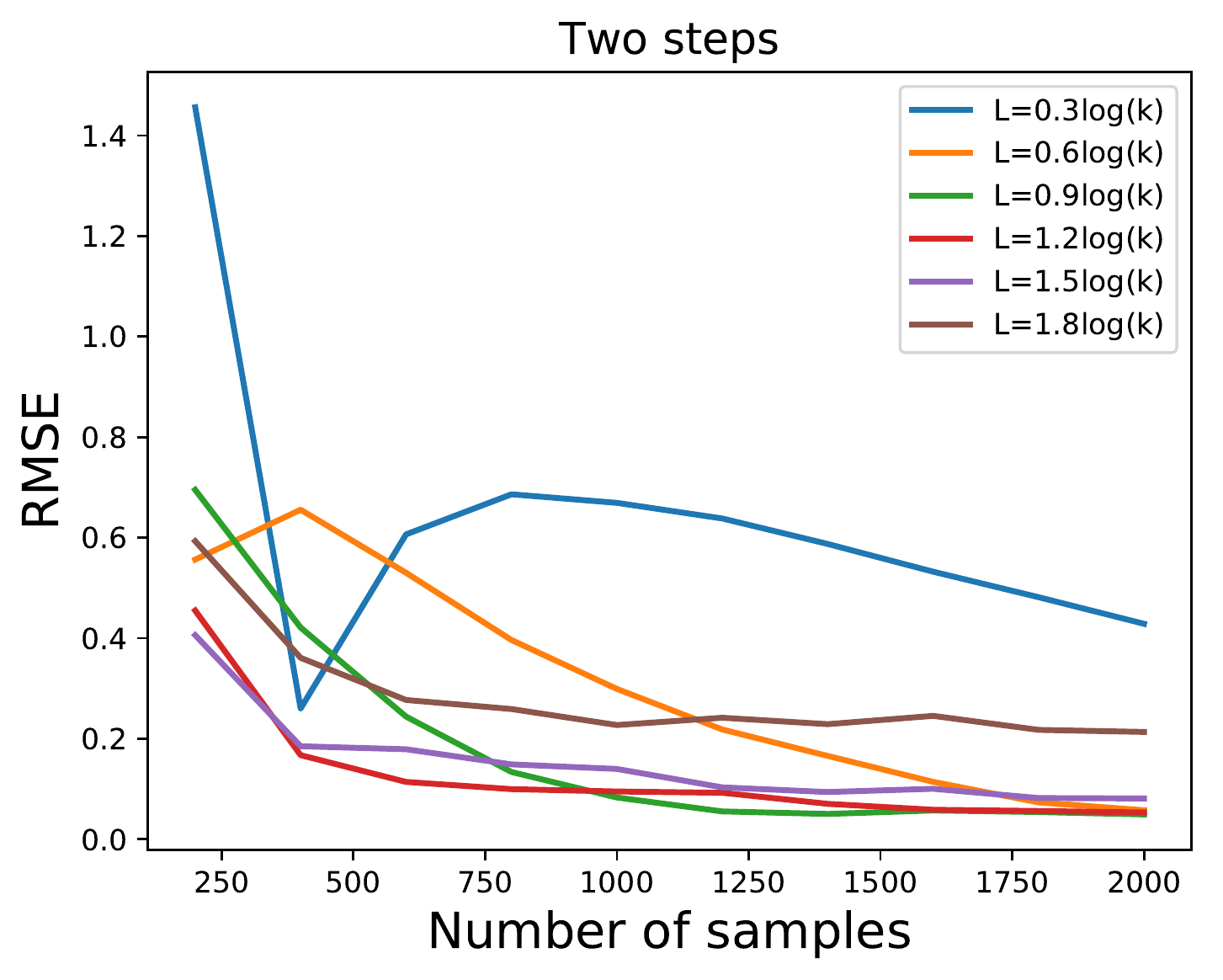}
}
\subfigure{
\includegraphics[width=0.18\textwidth]{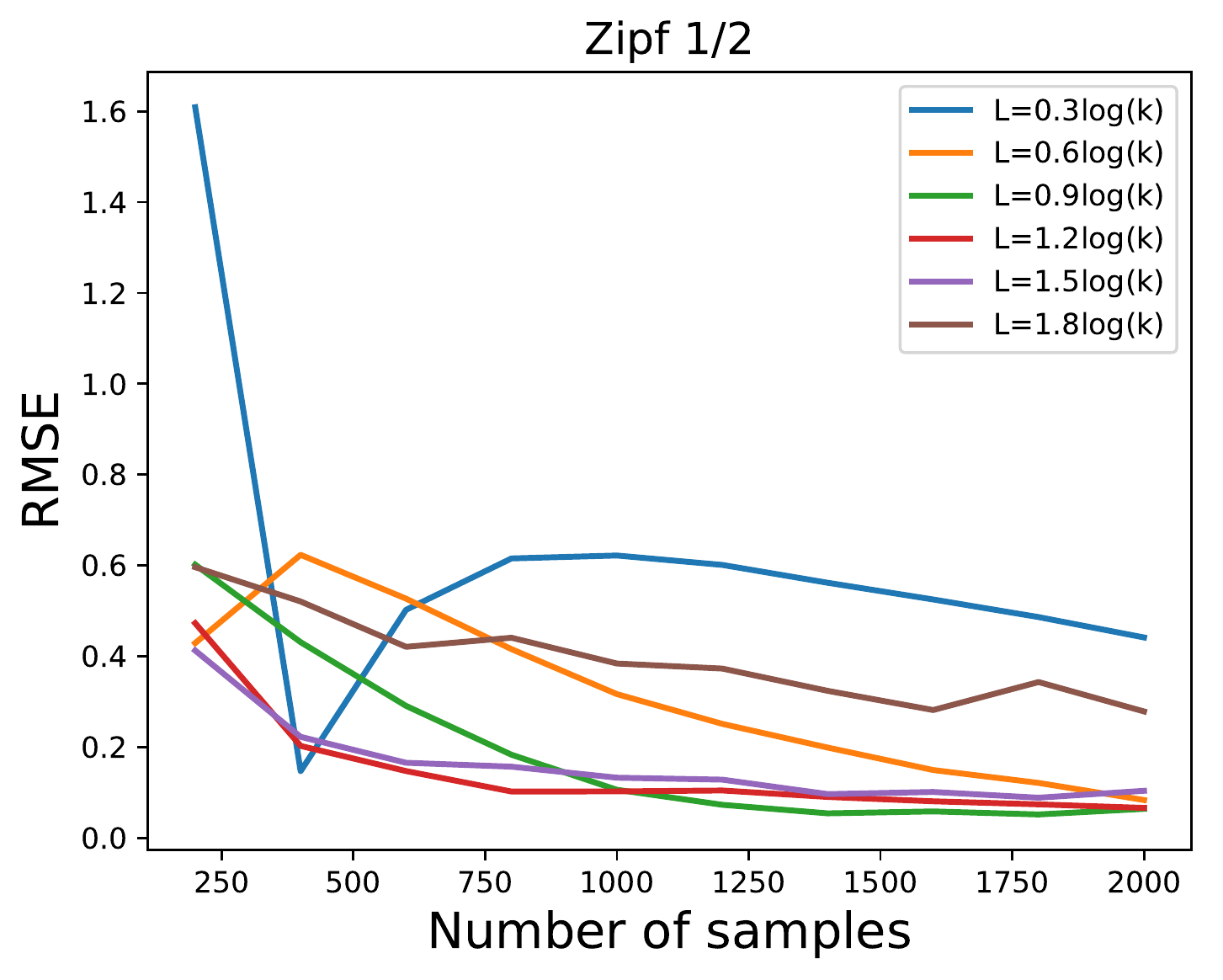}
}
\subfigure{
\includegraphics[width=0.18\textwidth]{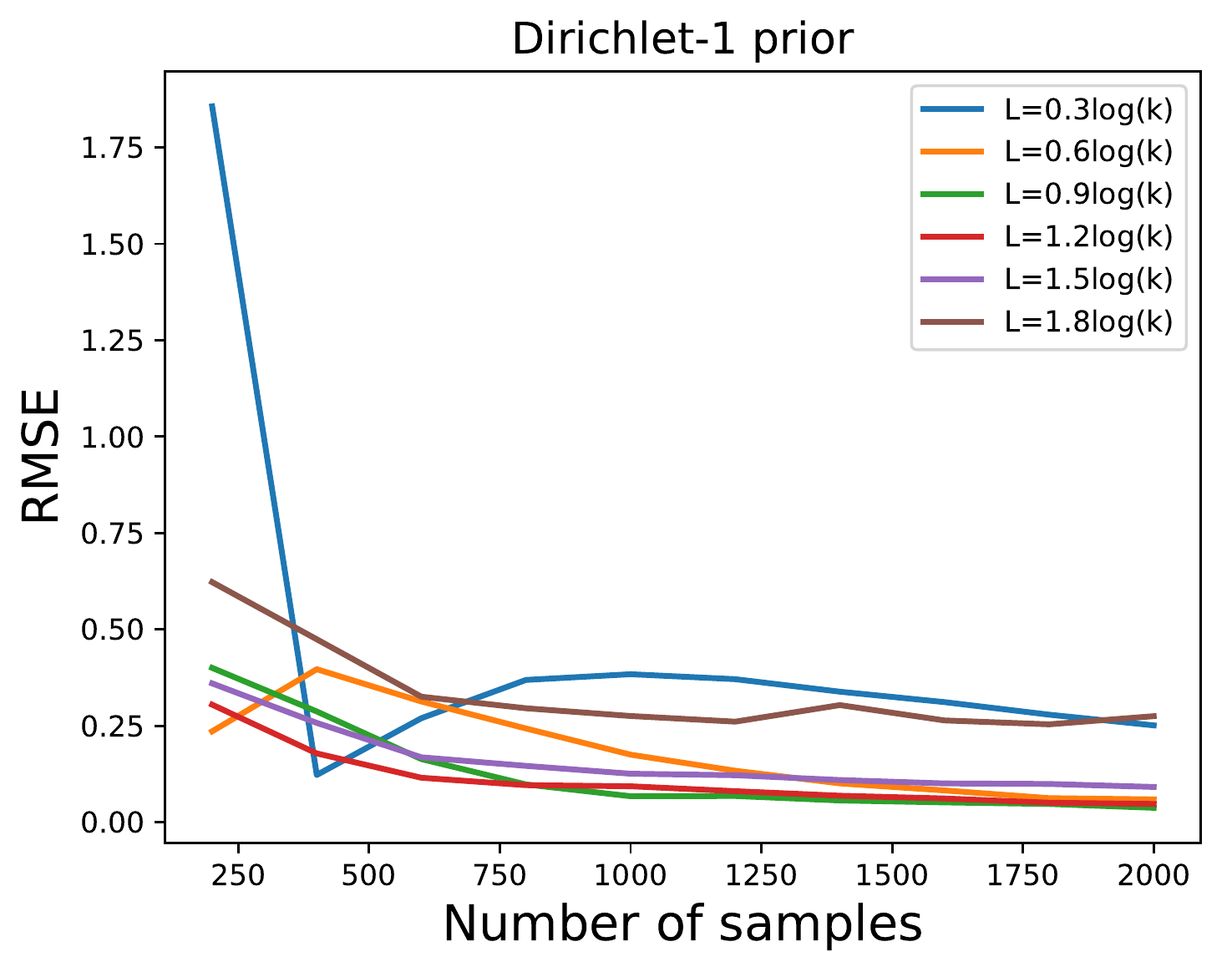}
}
\subfigure{
\includegraphics[width=0.18\textwidth]{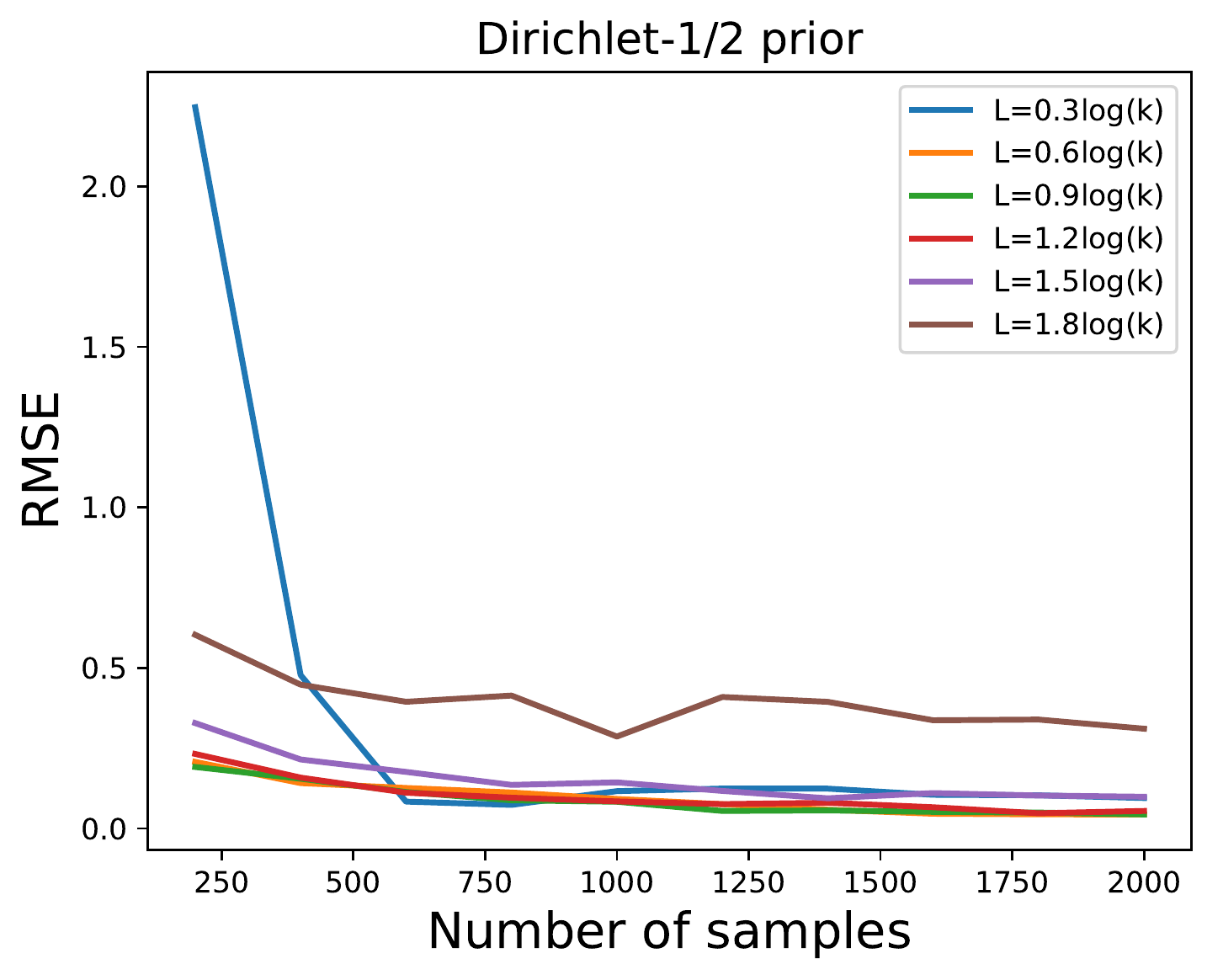}
}
\caption{RMSE comparison for private Polynomial Approximation Estimator with various values for degree $L$, $k=2000$, $\eps=1$.} 
\label{fig:Lcomparison} 

\end{figure*}


The RMSE of various estimators for $k=1000$, and $\eps=1$ for various distributions  are illustrated in Figure~\ref{fig:entropy_k1000}.
The RMSE is averaged over 100 iterations in the plots.

%
%
\begin{figure*}[h]
\centering
\captionsetup[subfigure]{labelformat=empty}
\subfigure{
\includegraphics[width=0.18\textwidth]{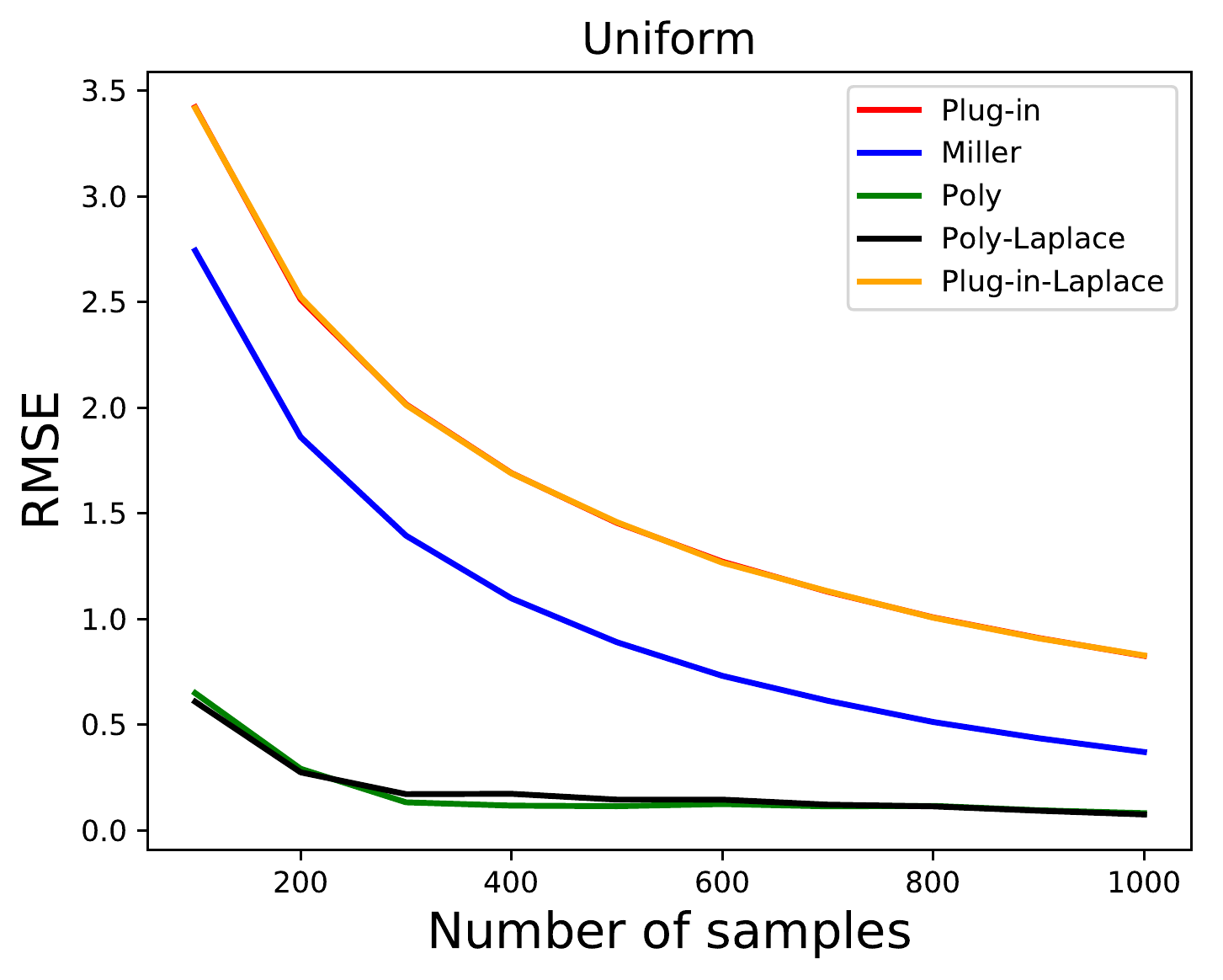}
}
\subfigure{
\includegraphics[width=0.18\textwidth]{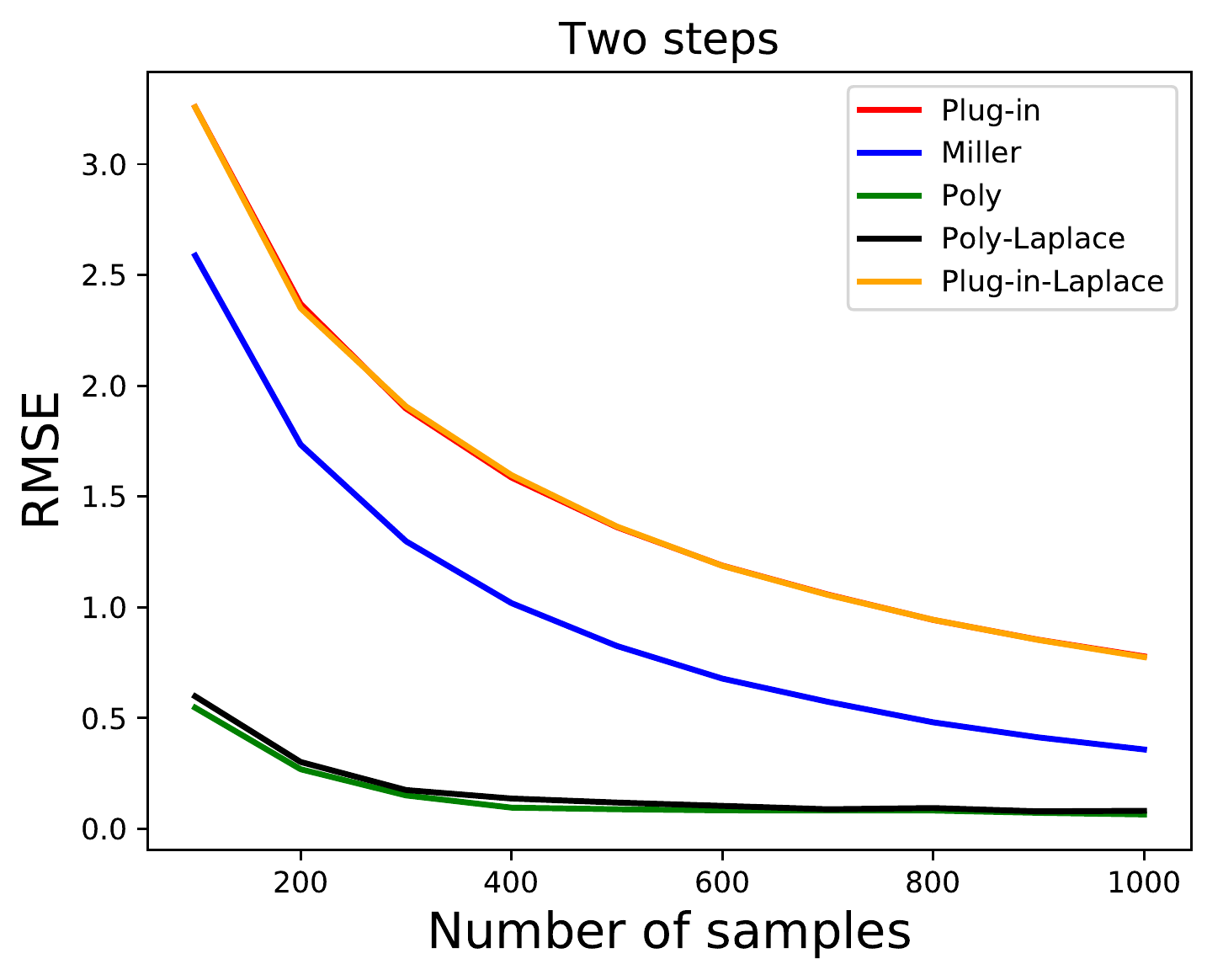}
}
\subfigure{
\includegraphics[width=0.18\textwidth]{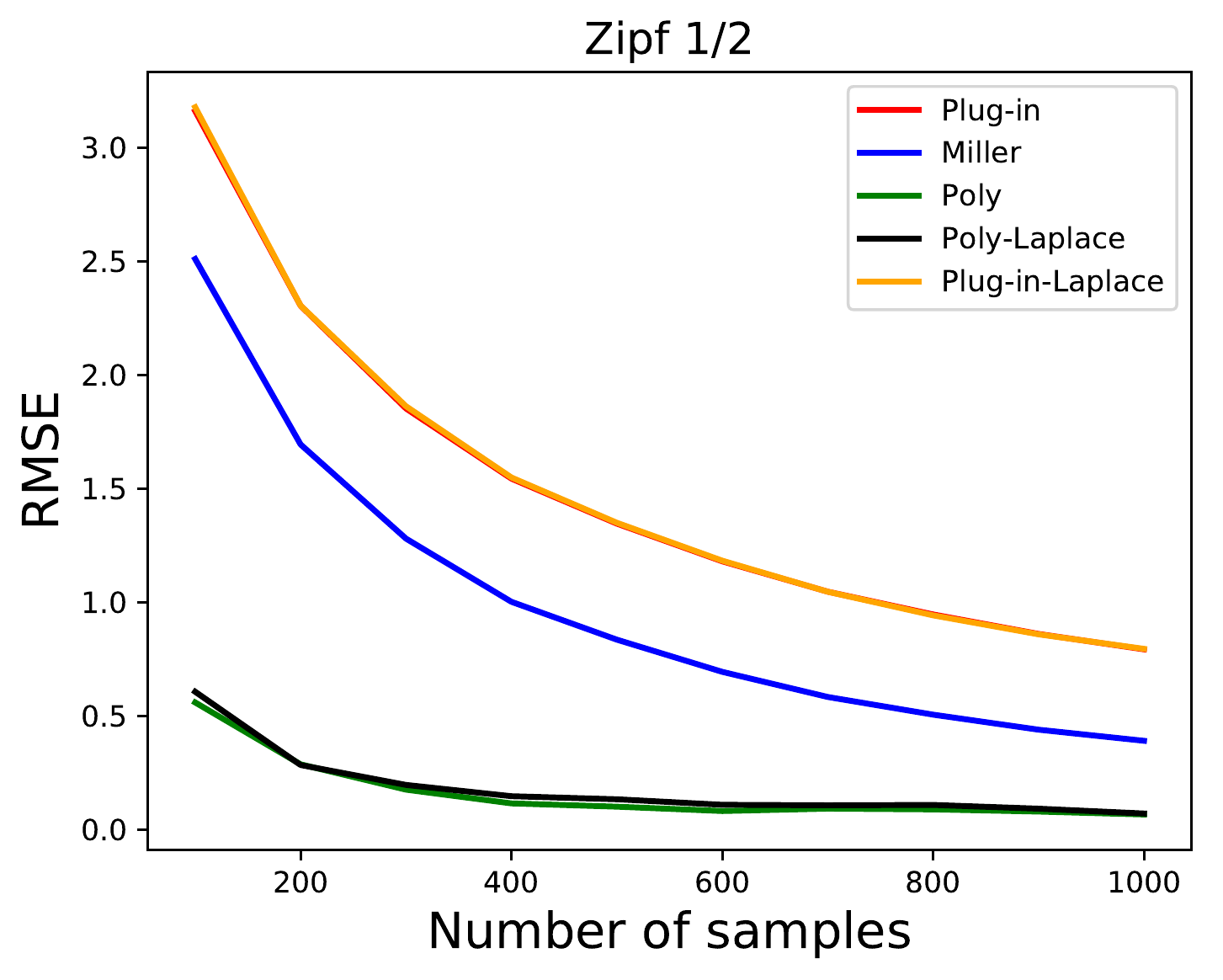}
}
\subfigure{
\includegraphics[width=0.18\textwidth]{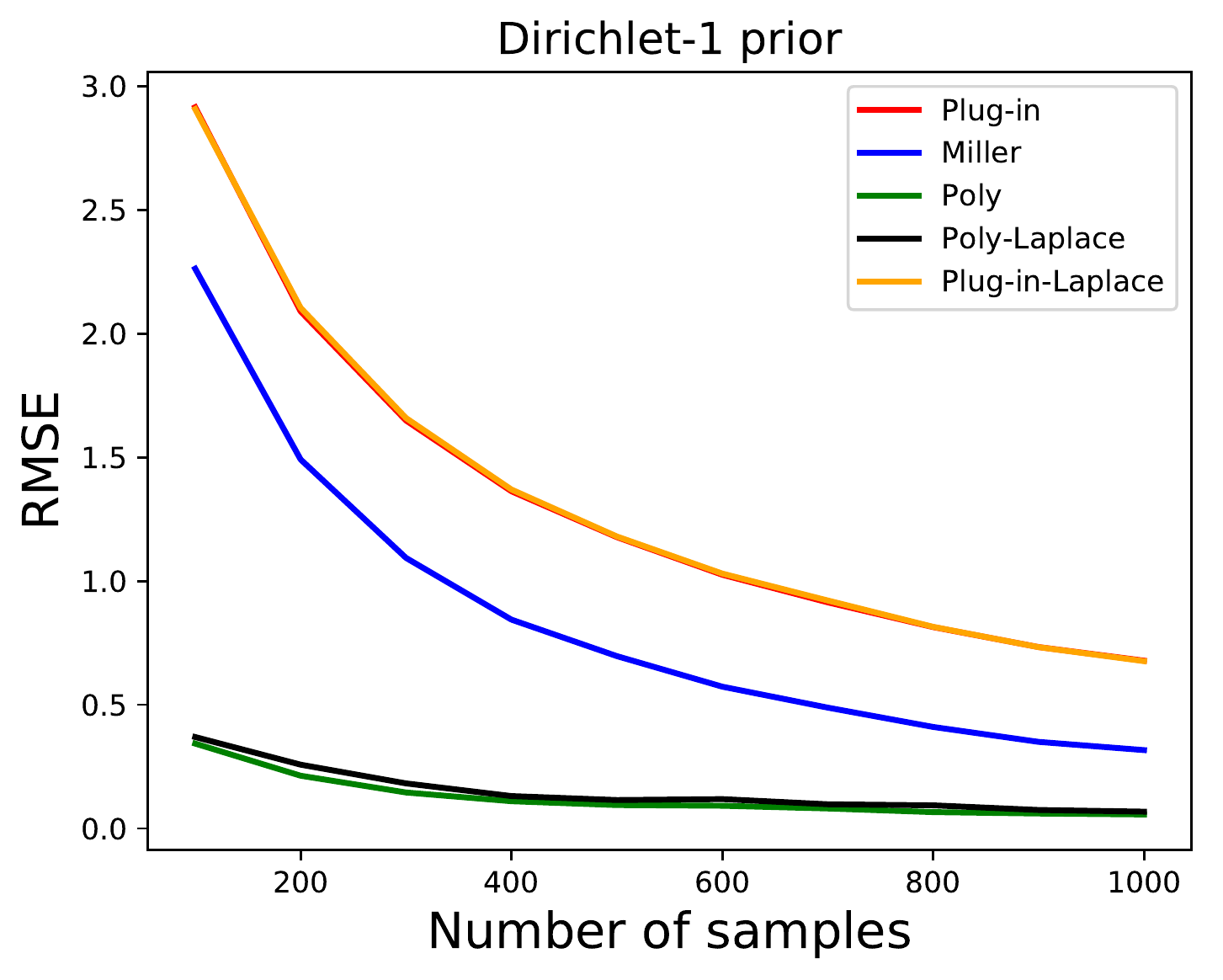}
}
\subfigure{
\includegraphics[width=0.18\textwidth]{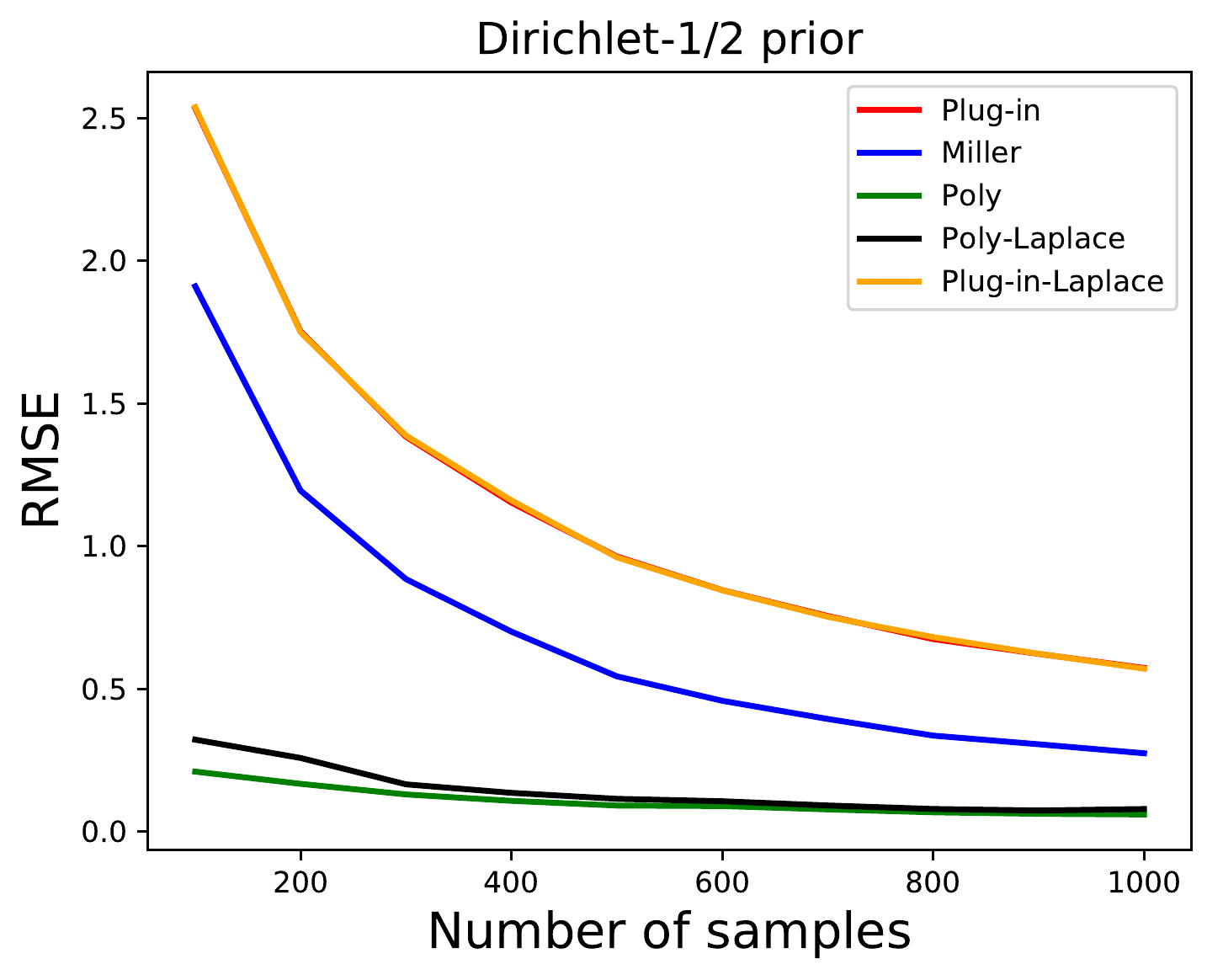}
}
\caption{Comparison of various estimators for the entropy, $k=1000$, $\eps =1$.} 
\label{fig:entropy_k1000} 

\end{figure*}

We observe that the performance of our private-\polyn\ is near-indistinguishable from the non-private \polyn, particularly as the number of samples increases.
It also performs significantly better than all other alternatives, including the non-private Miller-Madow and the plug-in estimator.
The cost of privacy is minimal for several other settings of $k$ and $\varepsilon$, for which results appear in Section~\ref{sec:supp-experiments}.

\subsection{Support Coverage}
\label{sec:exp-coverage}

We investigate the cost of privacy for the problem of support coverage.
We provide a comparison between the Smoothed Good-Toulmin  estimator (SGT) of~\cite{OrlitskySW16} and our algorithm, which is a privatized version of their statistic (see Section~\ref{sec:coverage-ub}).
Our implementation is based on code provided by the authors of~\cite{OrlitskySW16}.
As shown in our theoretical results, the sensitivity of SGT  is at most $2 (1+e^r(t-1))$, necessitating the addition of Laplace noise with parameter  $2 (1+e^{r(t-1)})/\eps$.
Note that while the theory suggests we select the parameter $r = \log(1/\dist)$, $\dist$ is unknown.
We instead set $r = \frac{1}{2t} \log_{e} \frac{n(t+1)^2}{t-1}$, as previously done in~\cite{OrlitskySW16}.

\subsubsection{Evaluation on Synthetic Data}

In our synthetic experiments, we consider different distributions over different support sizes $\ab$. We generate $\ns = \ab/2$ samples, and then estimate the support coverage at $m=\ns \cdot t$. For large $t$, estimation is harder. 
Some results of our evaluation on synthetic are displayed in Figure~\ref{fig:synthetic_k20000}.
We compare the performance of SGT, and privatized versions of SGT with parameters $\eps = 1, 2,$ and $10$.
For this instance, we fixed the domain size $k = 20000$.
We ran the methods described above with $n = k/2$ samples, and estimated the support coverage at $m = nt$, for $t$ ranging from $1$ to $10$.
The performance of the estimators is measured in terms of RMSE over 1000 iterations.
\begin{figure*}[h]
\centering

\subfigure{
\includegraphics[width=0.18\textwidth]{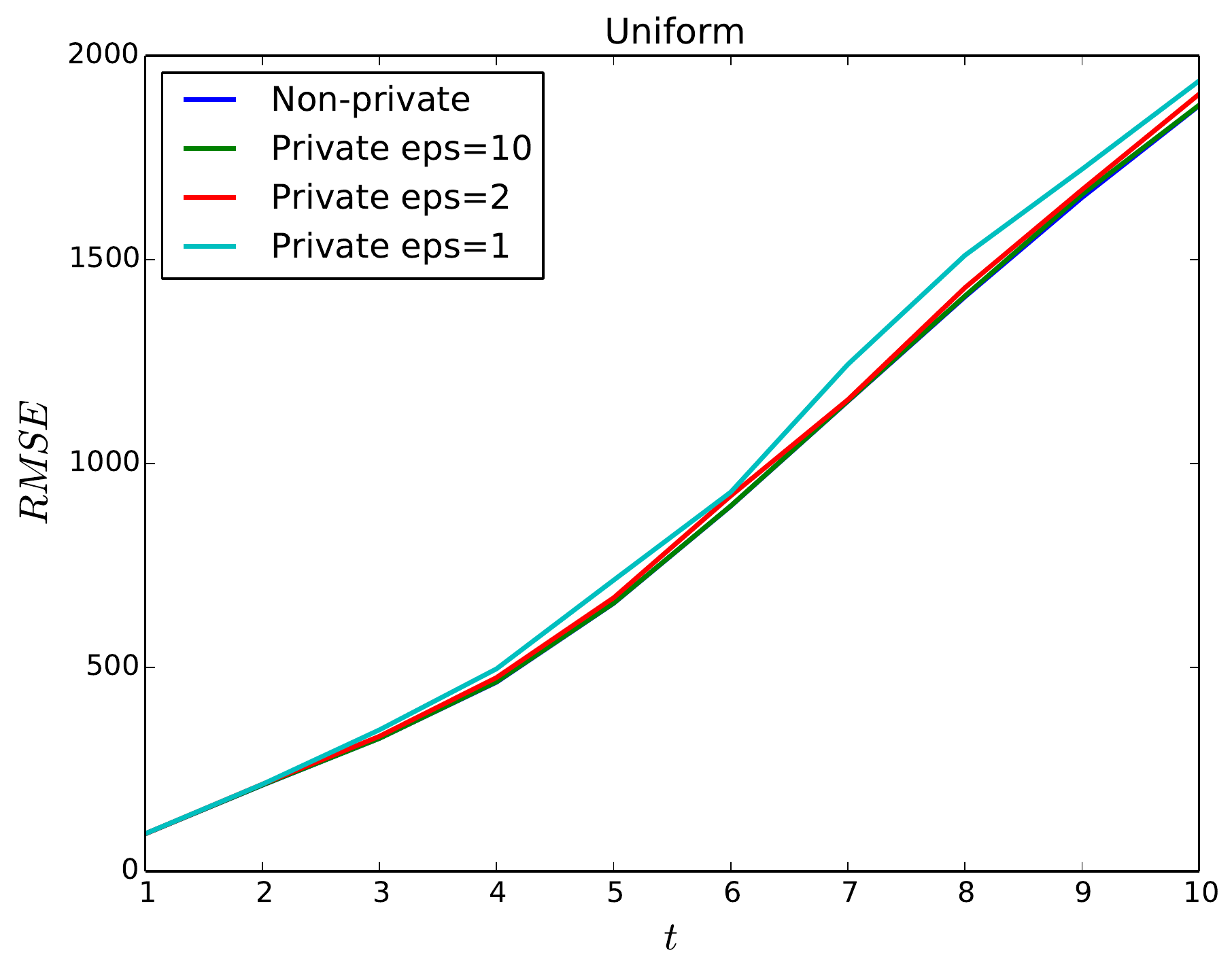}
}
\subfigure{
\includegraphics[width=0.18\textwidth]{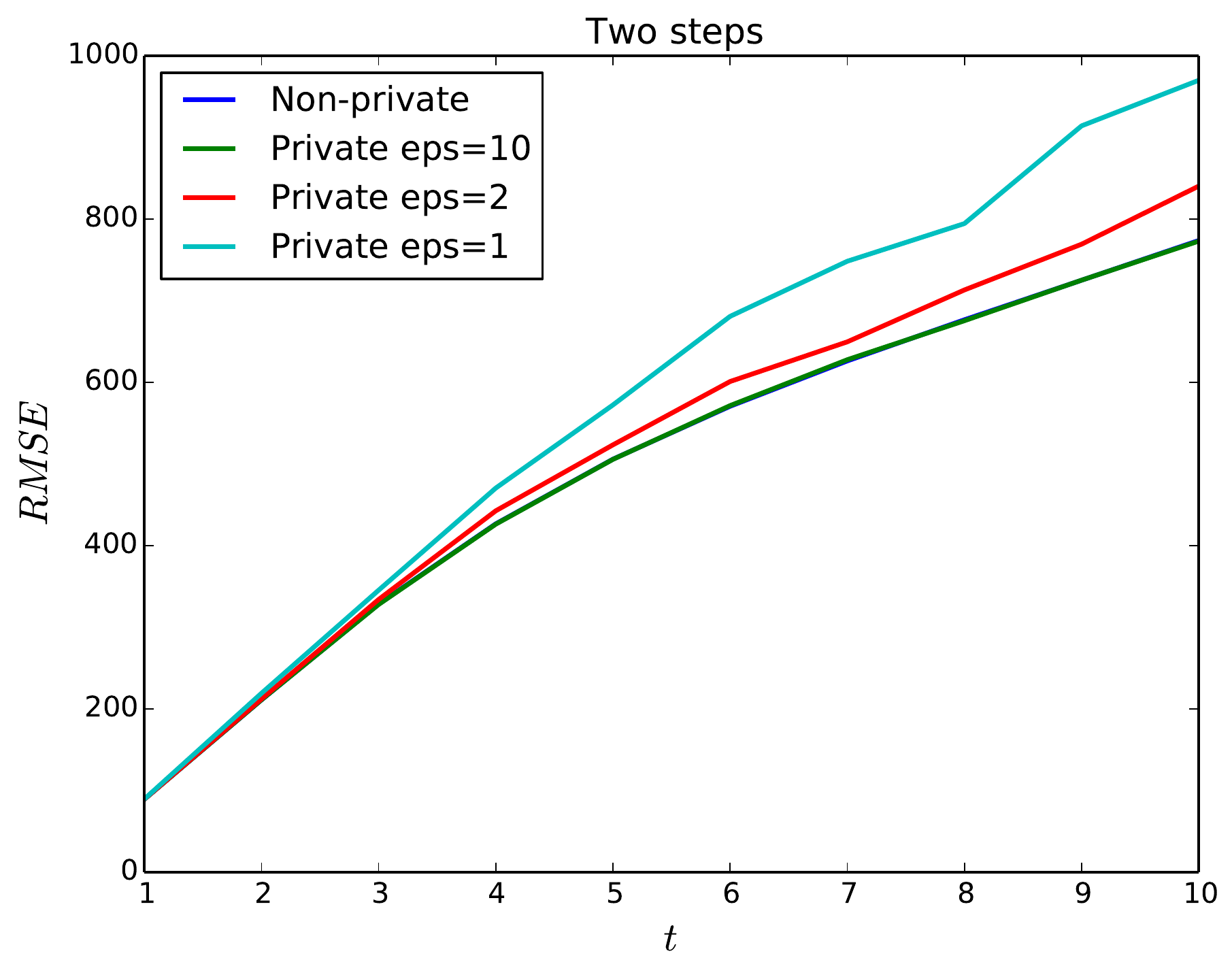}
}
\subfigure{
\includegraphics[width=0.18\textwidth]{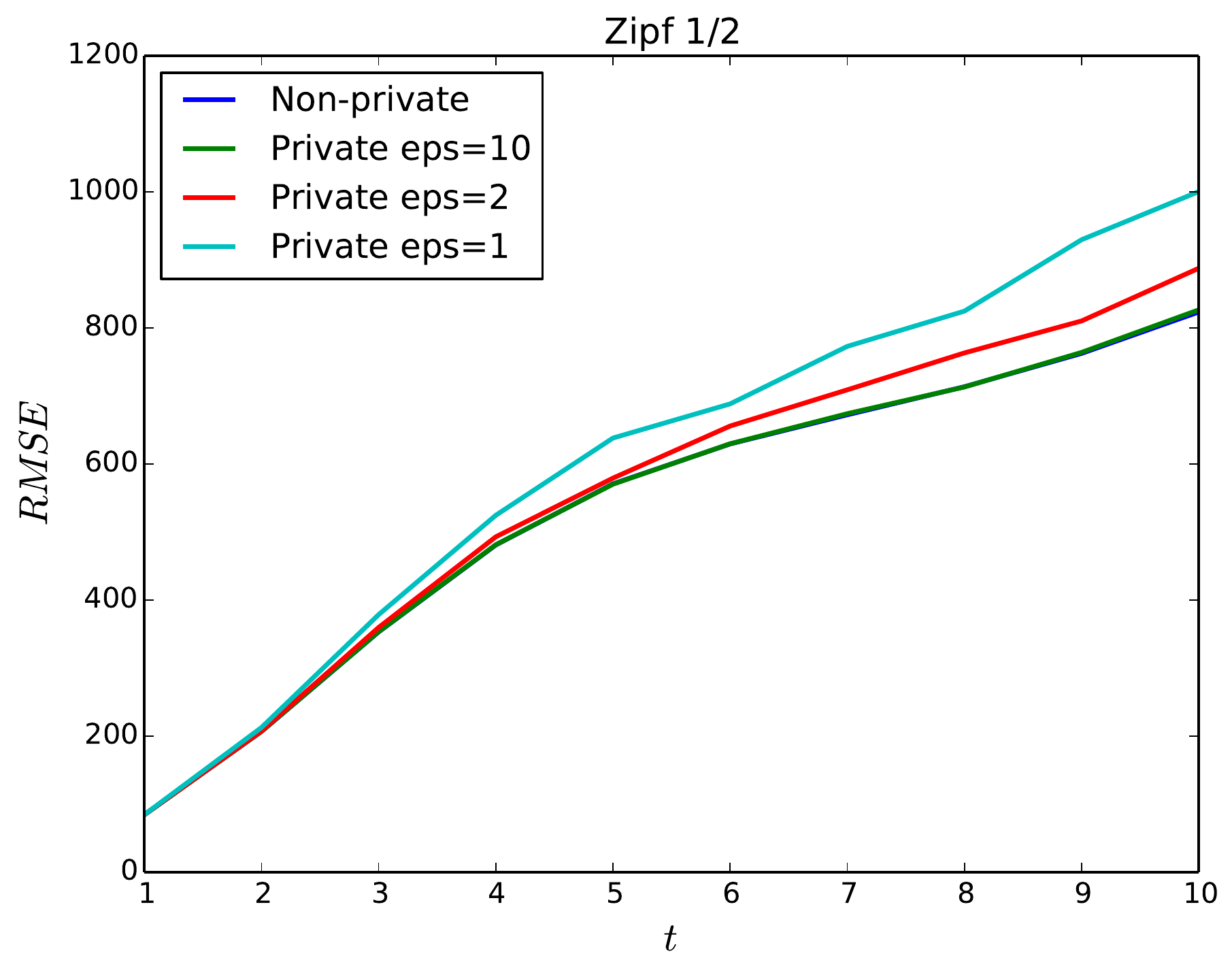}
}
\subfigure{
\includegraphics[width=0.18\textwidth]{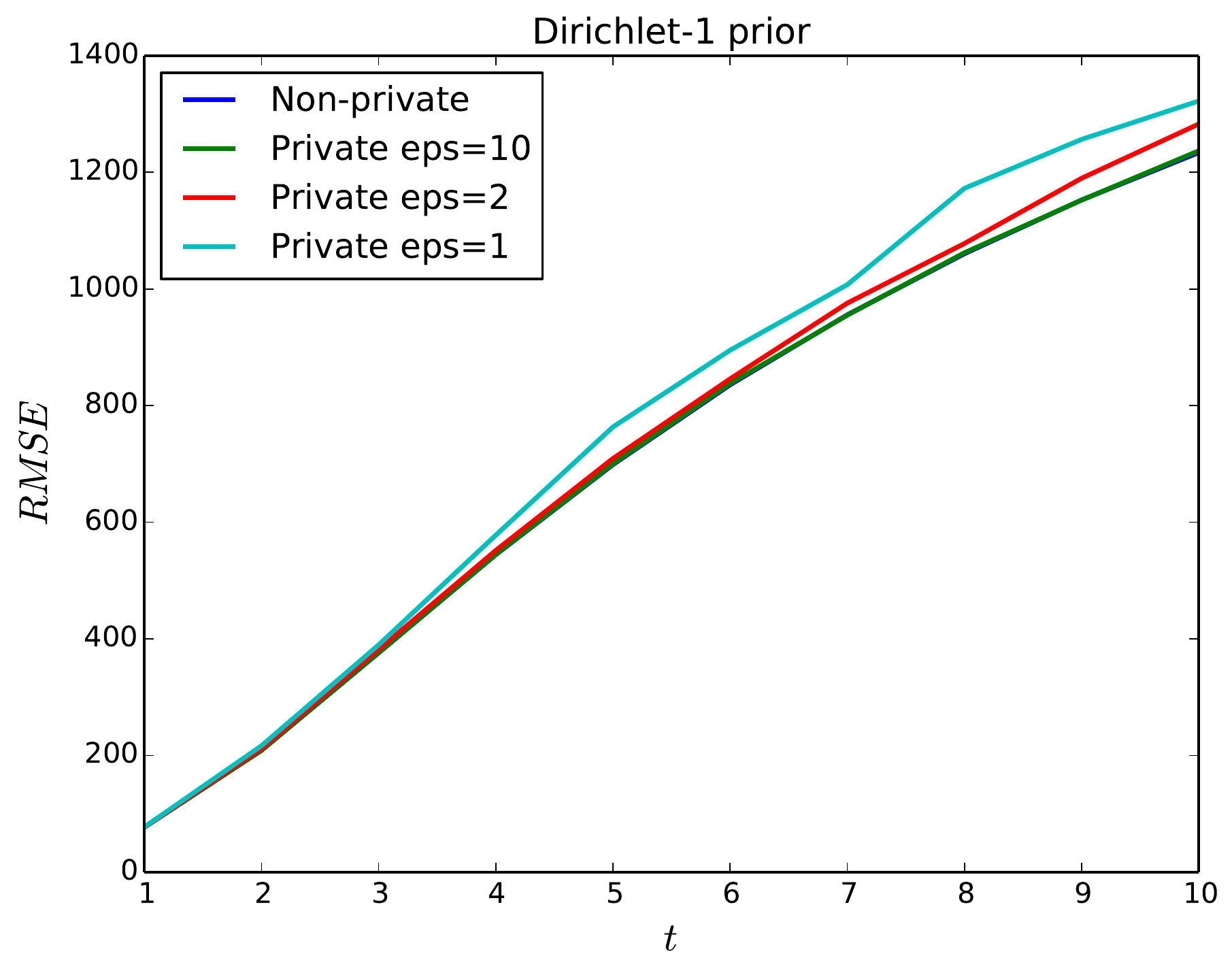}
}
\subfigure{
\includegraphics[width=0.18\textwidth]{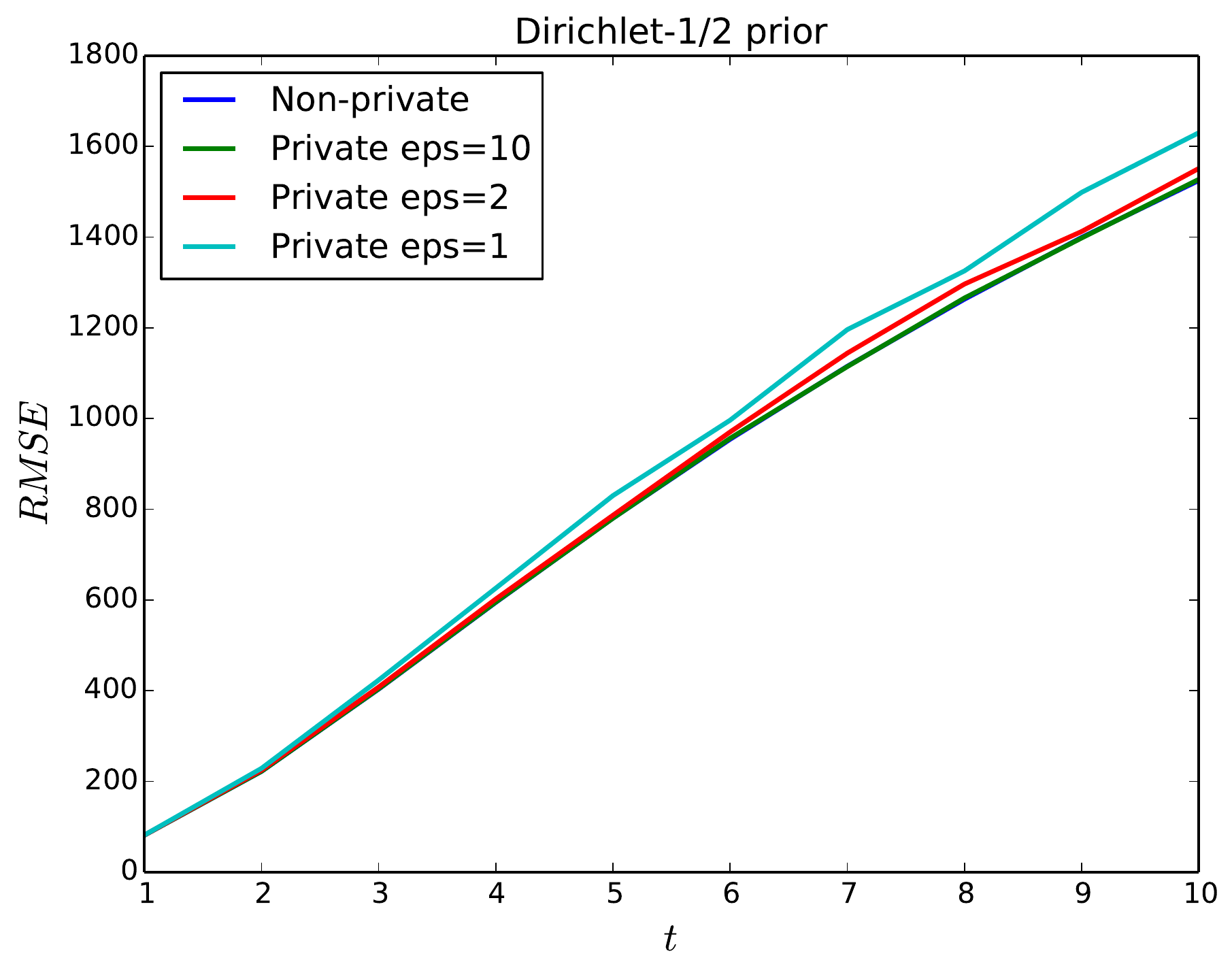}
}

\caption{Comparison between the private estimator with the non-private SGT when $k=20000$} 
\label{fig:synthetic_k20000} 

\end{figure*}

We observe that, in this setting, the cost of privacy is relatively small for reasonable values of $\eps$.
This is as predicted by our theoretical results, where unless $\eps$ is extremely small (less than $1/k$) the non-private sample complexity dominates the privacy requirement.
However, we found that for smaller support sizes (as shown in Section~\ref{sec:supp-exp-coverage}), the cost of privacy can be  significant.
We provide an intuitive explanation for why no private estimator can perform well on such instances.
To minimize the number of parameters, we instead argue about the related problem of support-size estimation.
Suppose we are trying to distinguish between distributions which are uniform over supports of size $100$ and $200$.
We note that, if we draw $n = 50$ samples, the ``profile'' of the samples (i.e., the histogram of the histogram) will be very similar for the two distributions.
In particular, if one modifies only a few samples (say, five or six), one could convert one profile into the other.
In other words, these two profiles are almost-neighboring datasets, but simultaneously correspond to very different support sizes.
This pits the two goals of privacy and accuracy at odds with each other, thus resulting in a degradation in accuracy.

\subsubsection{Evaluation on Census Data and Hamlet}
We conclude with experiments for support coverage on two real-world datasets, 
the 2000 US Census data and the text of Shakespeare's play Hamlet, inspired by investigations in~\cite{OrlitskySW16} and~\cite{ValiantV17b}.
Our investigation on US Census data is also inspired by the fact that this is a setting where privacy is of practical importance, evidenced by the proposed adoption of differential privacy in the 2020 US Census~\cite{DajaniLSKRMGDGKKLSSVA17}.

The Census dataset contains a list of last names that appear at least 100 times.
Since the dataset is so oversampled, even a small fraction of the data is likely to contain almost all the names.
As such, we make the task non-trivial by subsampling $m_{total} = 86080$ individuals from the data, obtaining $20412$ distinct last names.
We then sample $n$ of the $m_{total}$ individuals without replacement and attempt to estimate the total number of last names.
Figure~\ref{fig:name} displays the RMSE over 100 iterations of this process. We observe that even an exceptionally stringent privacy budget of $\eps = 0.5$, the performance is almost indistinguishable from the non-private SGT estimator.


\begin{figure}[h]
\centering 
\includegraphics [scale=0.3]{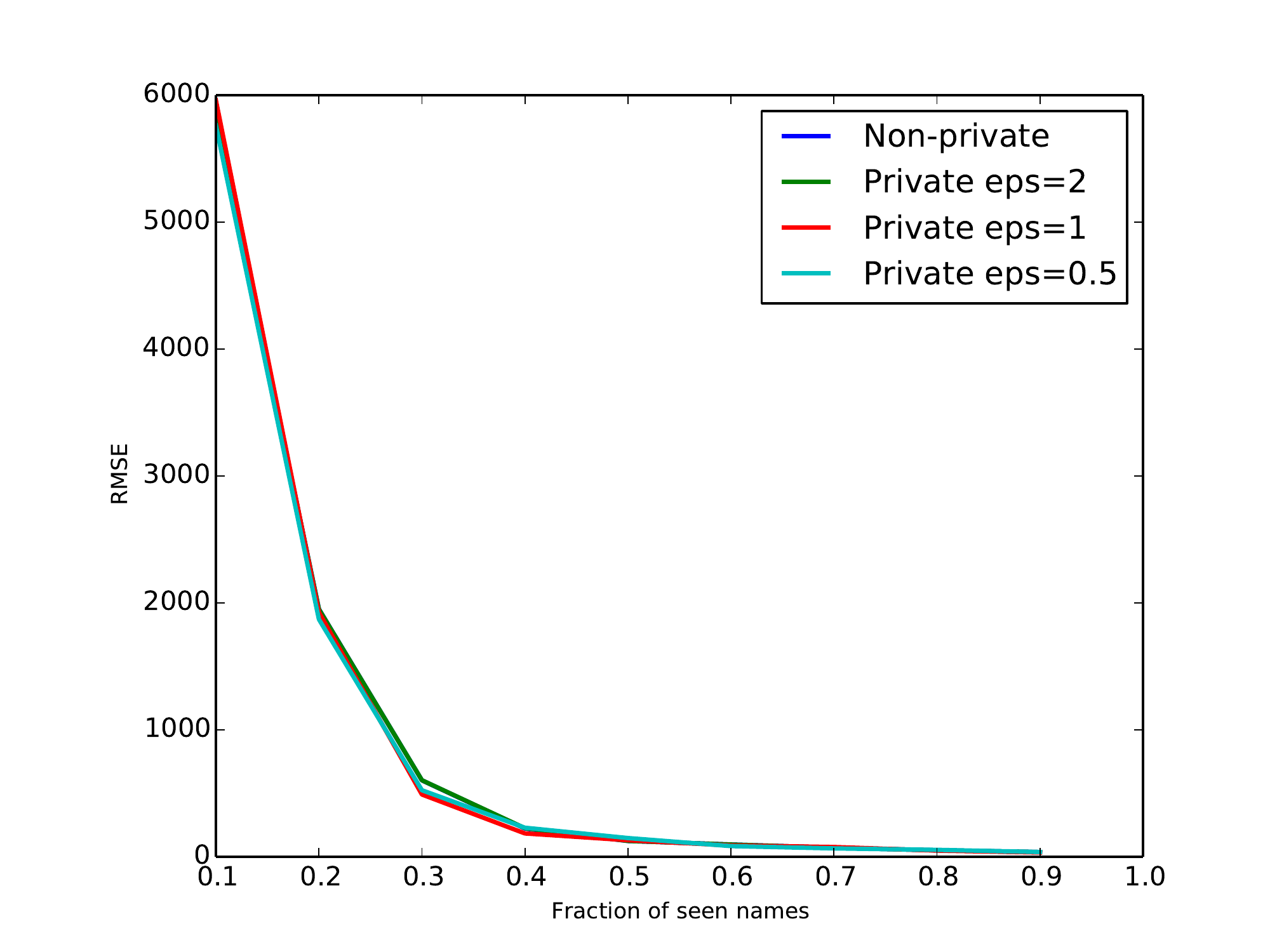}
\caption{Comparison between our private estimator with the SGT on Census Data} 
\label{fig:name} 
\end{figure}

The Hamlet dataset has $m_{total} = 31,999$ words, of which 4804 are distinct.
Since the distribution is not as oversampled as the Census data, we do not need to subsample the data.
Besides this difference, the experimental setup is identical to that of the Census dataset.
Once again, as we can see in Figure~\ref{fig:hamlet}, we get near-indistinguishable performance between the non-private and private estimators, even for very small values of $\eps$.
Our experimental results demonstrate that privacy is realizable in practice, with particularly accurate performance on real-world datasets. 

\begin{figure}[h]
\centering 
\includegraphics [scale=0.3]{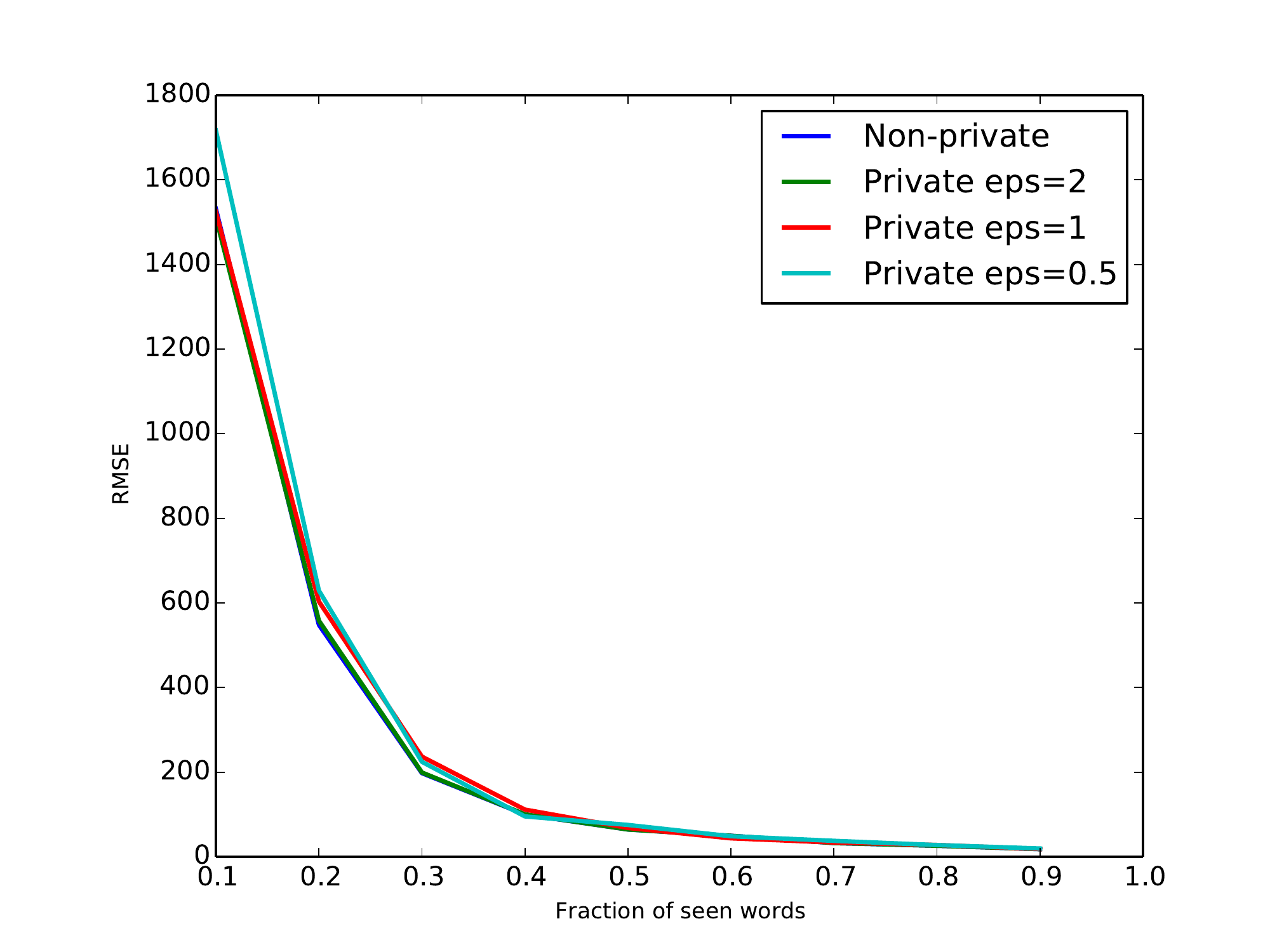}
\caption{Comparison between our private estimator with the SGT on Hamlet} 
\label{fig:hamlet} 
\end{figure}

\bibliographystyle{alpha}
\bibliography{biblio}
\appendix
\section{Additional Experimental Results}
\label{sec:supp-experiments}
This section contains additional plots of our synthetic experimental results.
Section~\ref{sec:supp-exp-entropy} contains experiments on entropy estimation, while Section~\ref{sec:supp-exp-coverage} contains experiments on estimation of support coverage.
\subsection{Entropy Estimation}
\label{sec:supp-exp-entropy}
We present four more plots of our synthetic experimental results for entropy estimation.
Figures~\ref{fig:entropy-k100-eps1} and~\ref{fig:entropy-k100-eps2} are on a smaller support of $k =100$, with $\eps = 1$ and $2$, respectively.
Figures~\ref{fig:entropy-k1000-eps05} and~\ref{fig:entropy-k1000-eps2} are on a support of $k=1000$, with $\eps = 0.5$ and $2$.
\begin{figure*}
\centering
\subfigure[]{
\begin{minipage}[b]{0.3\textwidth}
\includegraphics[width=1\textwidth]{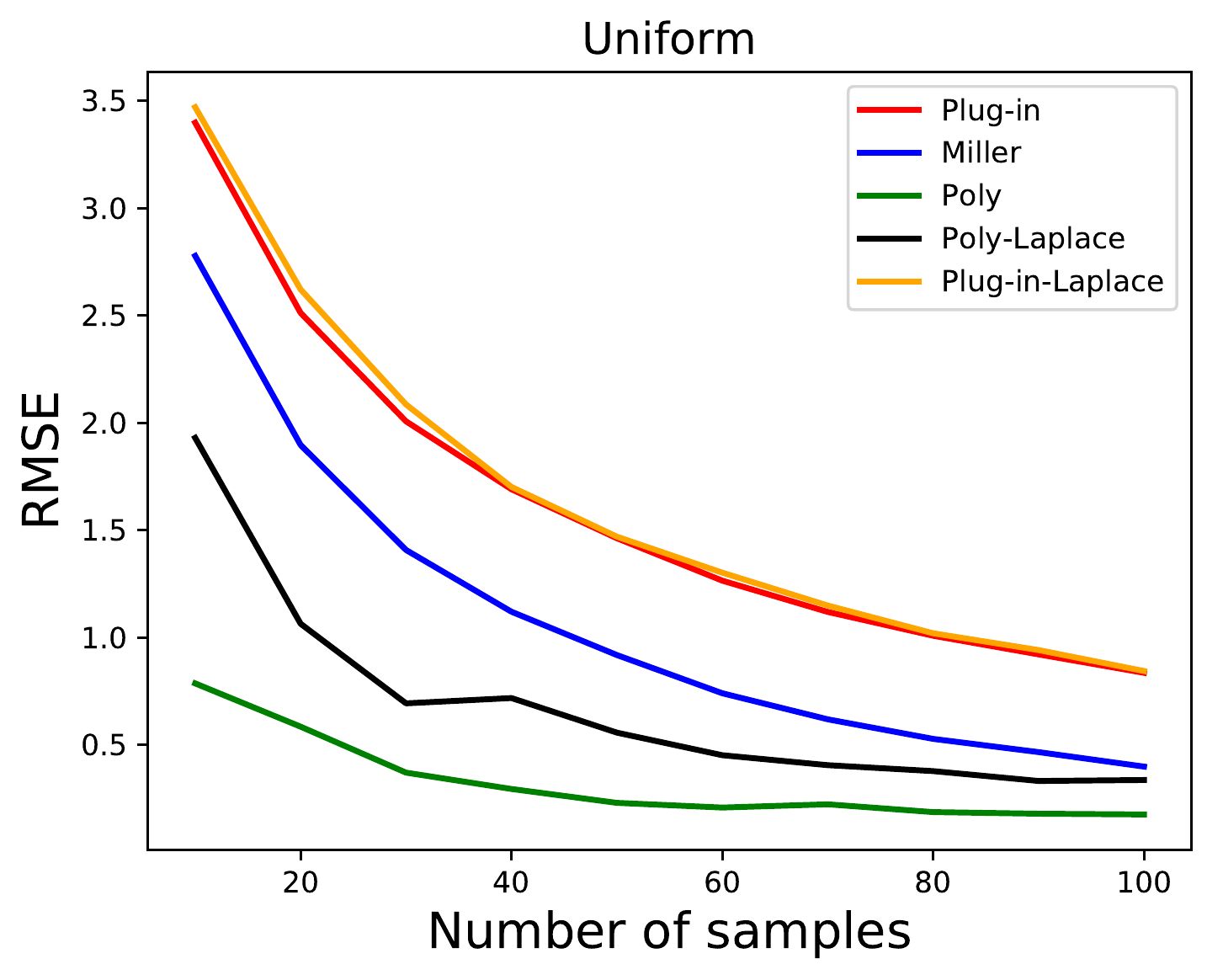}
\includegraphics[width=1\textwidth]{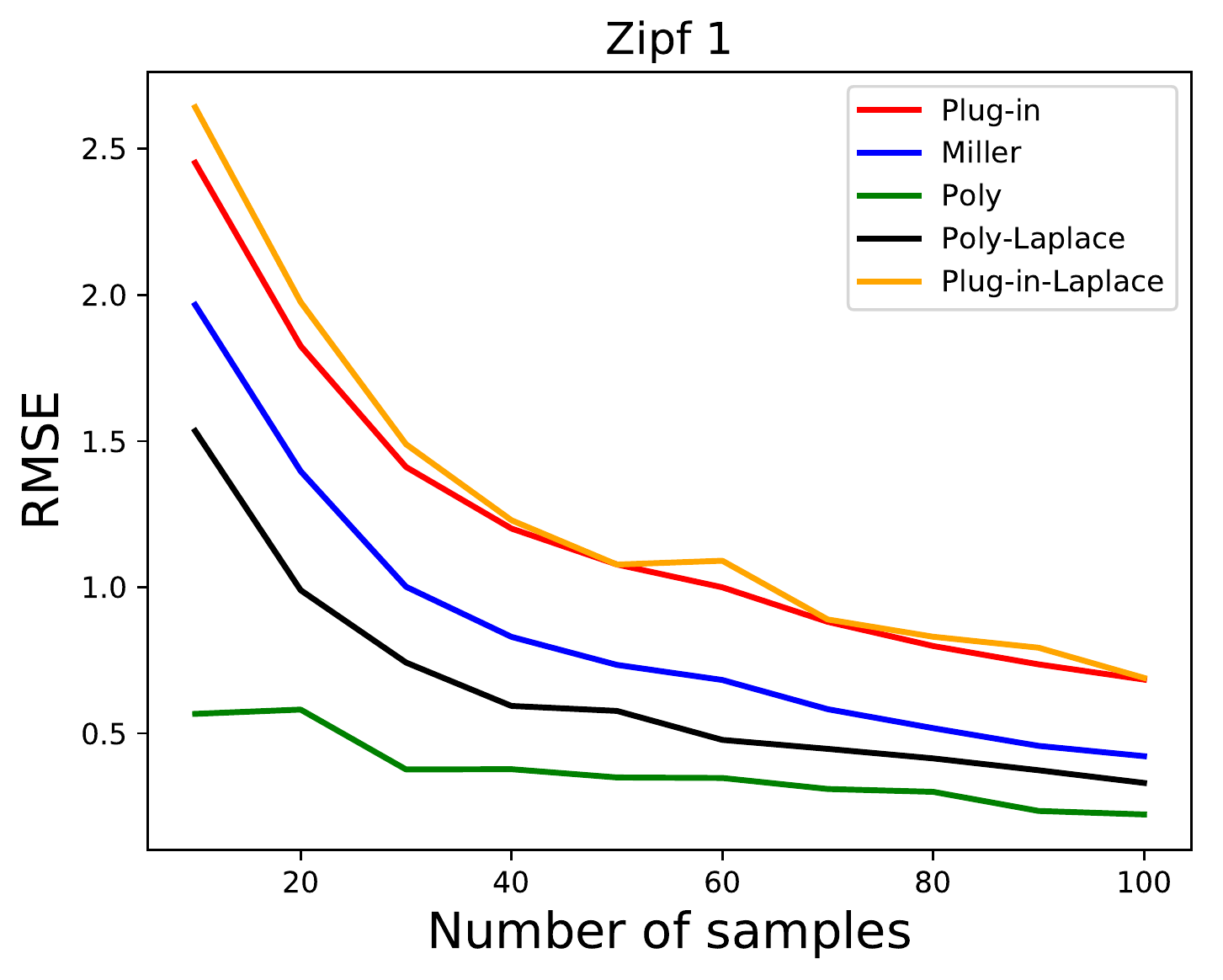}
\end{minipage}
}
\subfigure[]{
\begin{minipage}[b]{0.3\textwidth}
\includegraphics[width=1\textwidth]{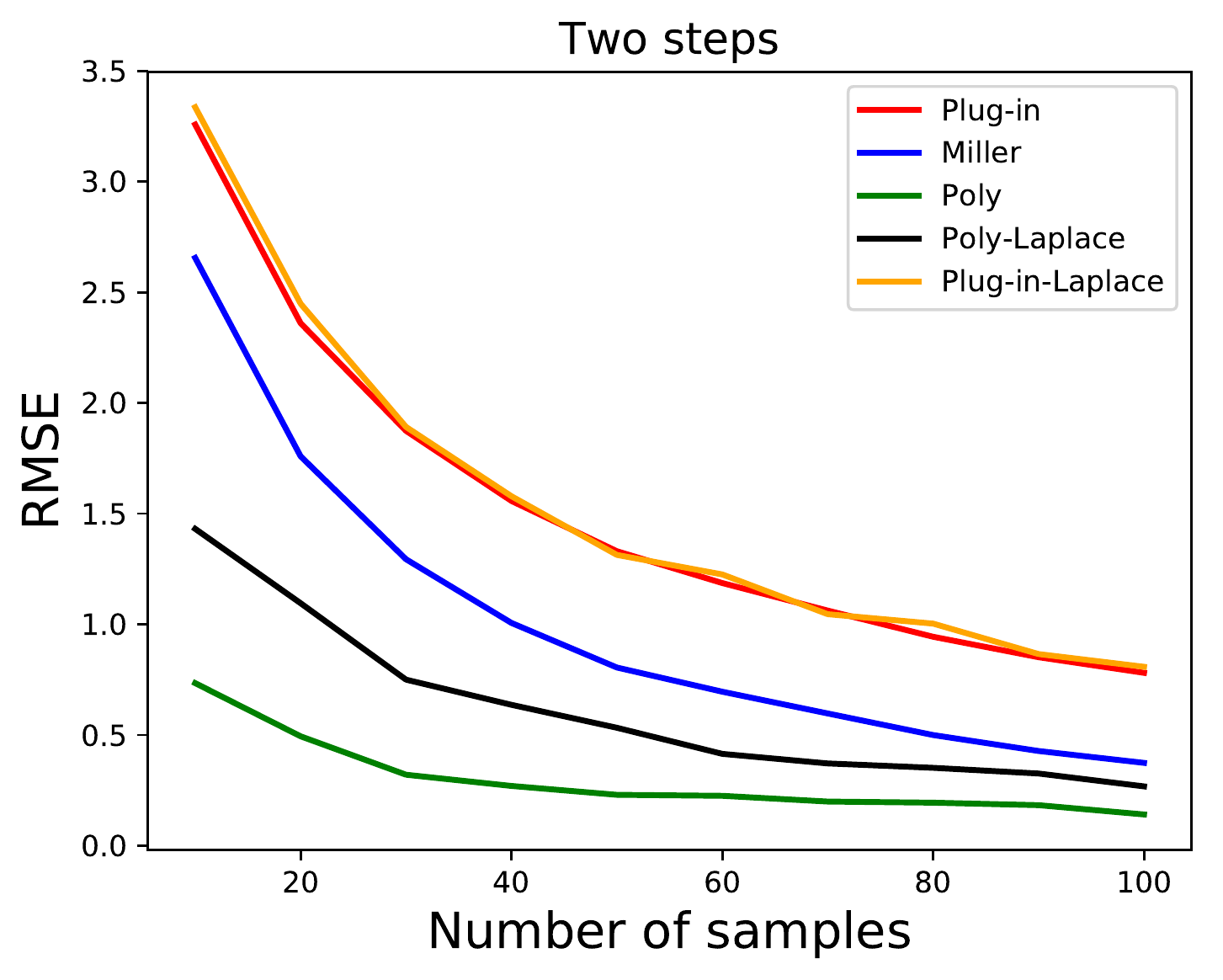}
\includegraphics[width=1\textwidth]{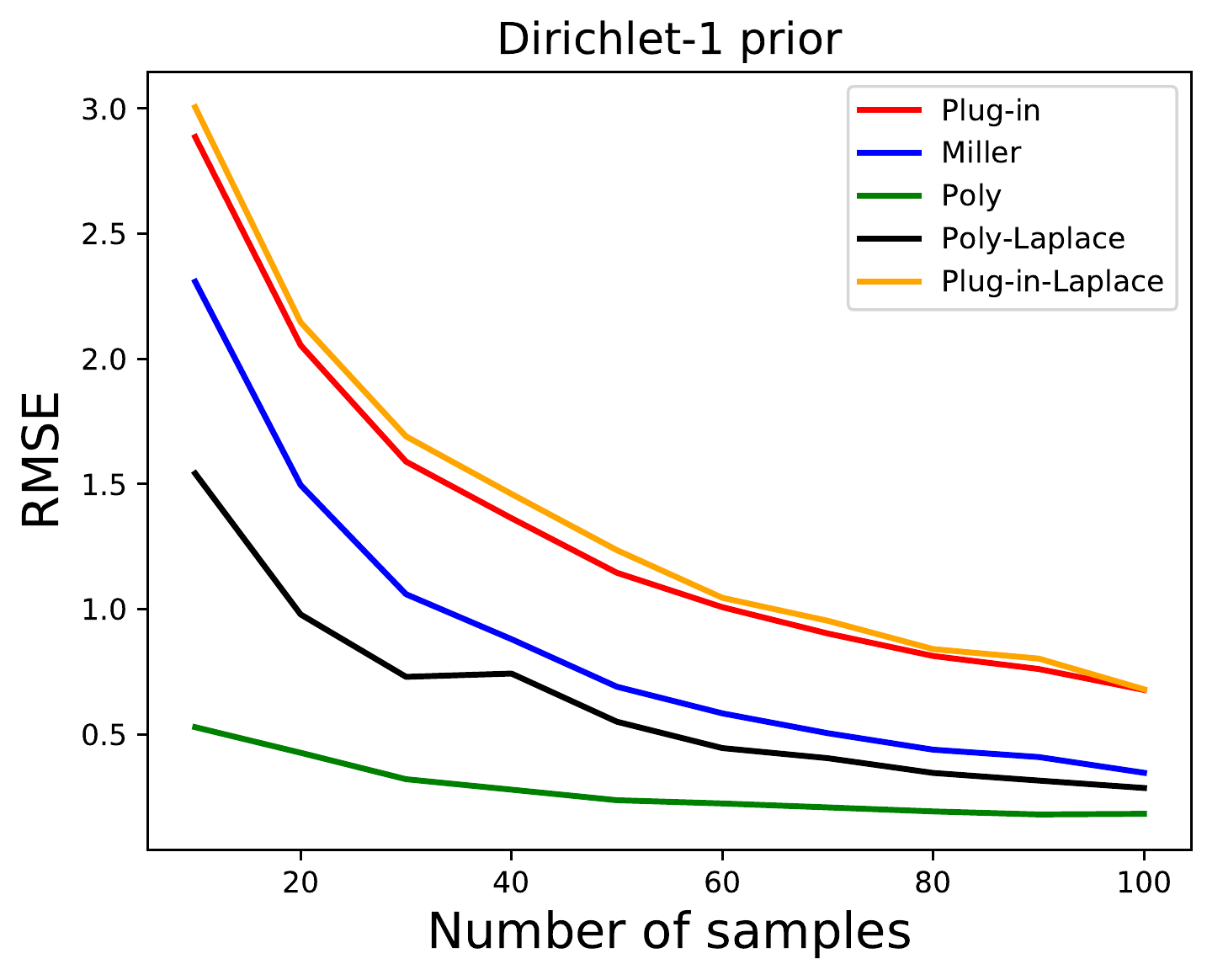}
\end{minipage}
}
\subfigure[]{
\begin{minipage}[b]{0.3\textwidth}
\includegraphics[width=1\textwidth]{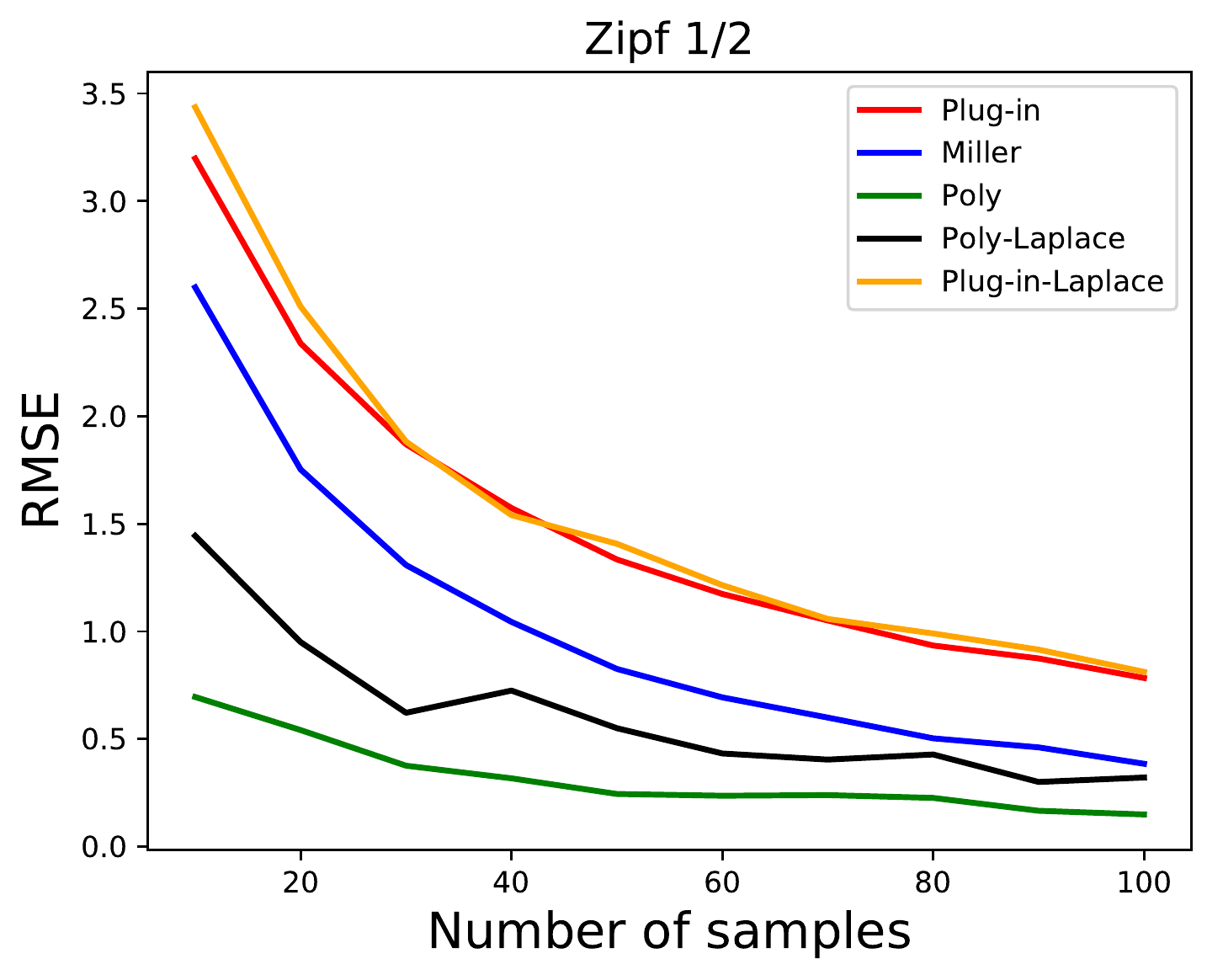}
\includegraphics[width=1\textwidth]{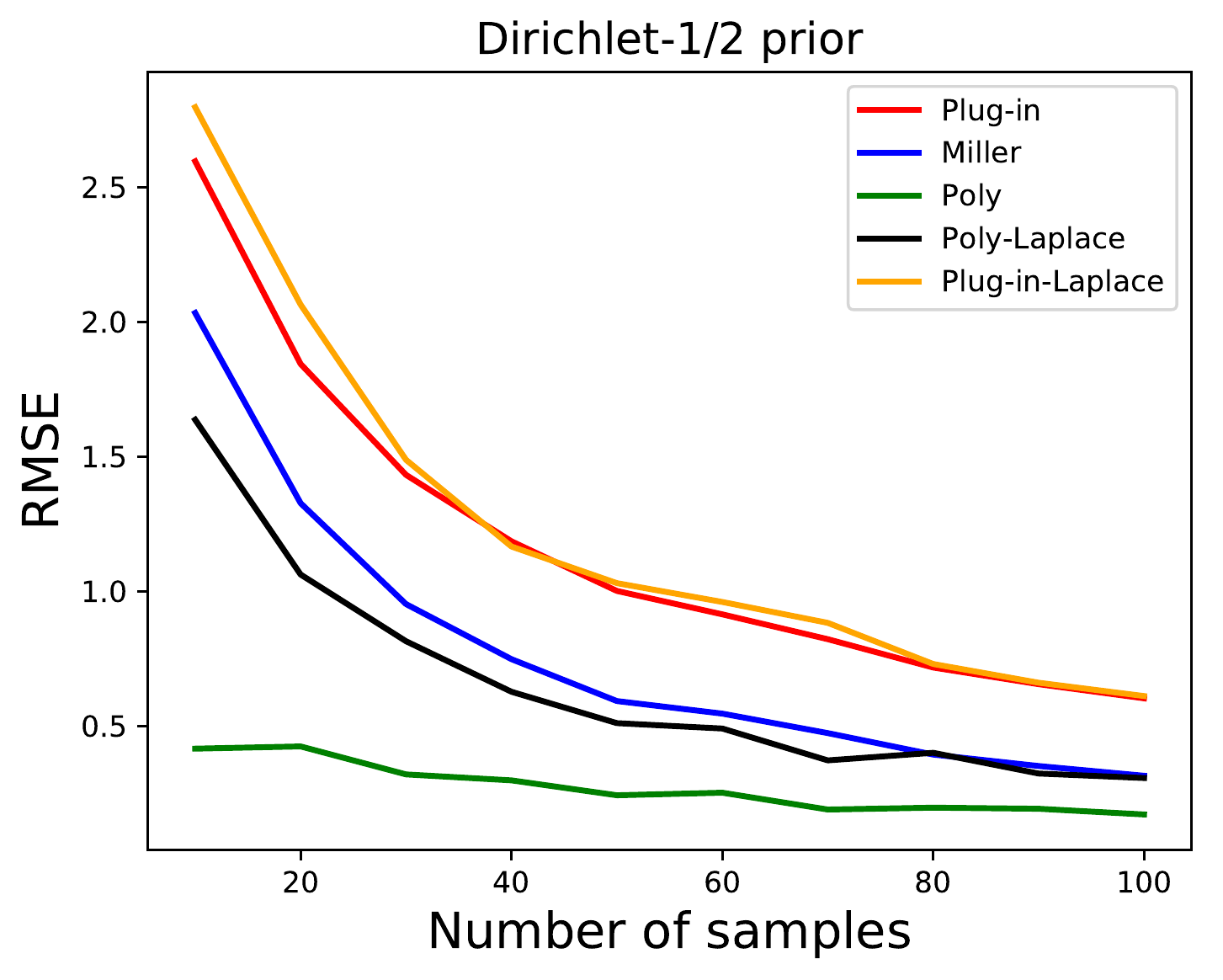}
\end{minipage}
}
\caption{Comparison of various estimators for the entropy, $k=100$, $\eps =1$.} 
\label{fig:entropy-k100-eps1}
\end{figure*}
\begin{figure*}
\centering
\subfigure[]{
\begin{minipage}[b]{0.3\textwidth}
\includegraphics[width=1\textwidth]{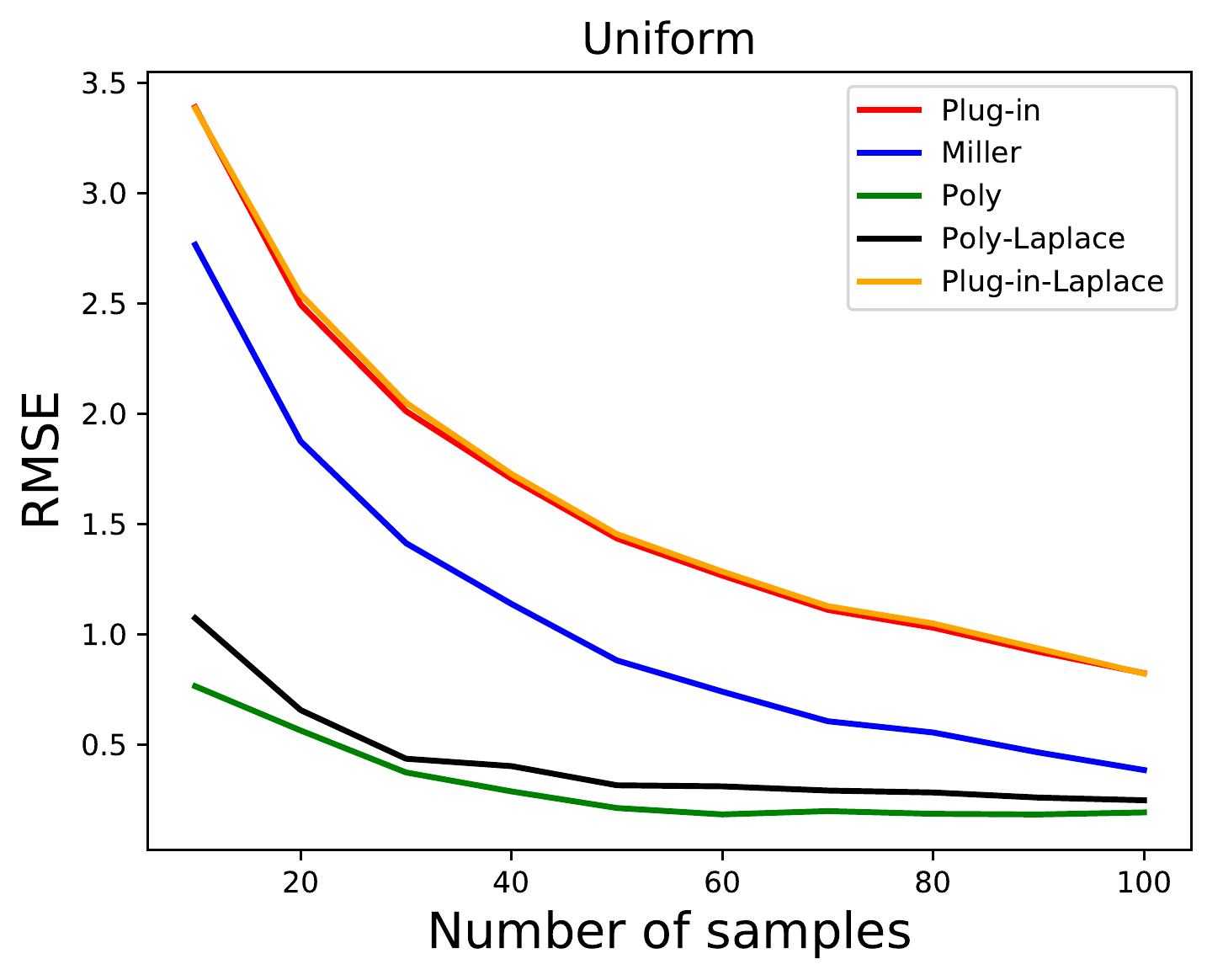}
\includegraphics[width=1\textwidth]{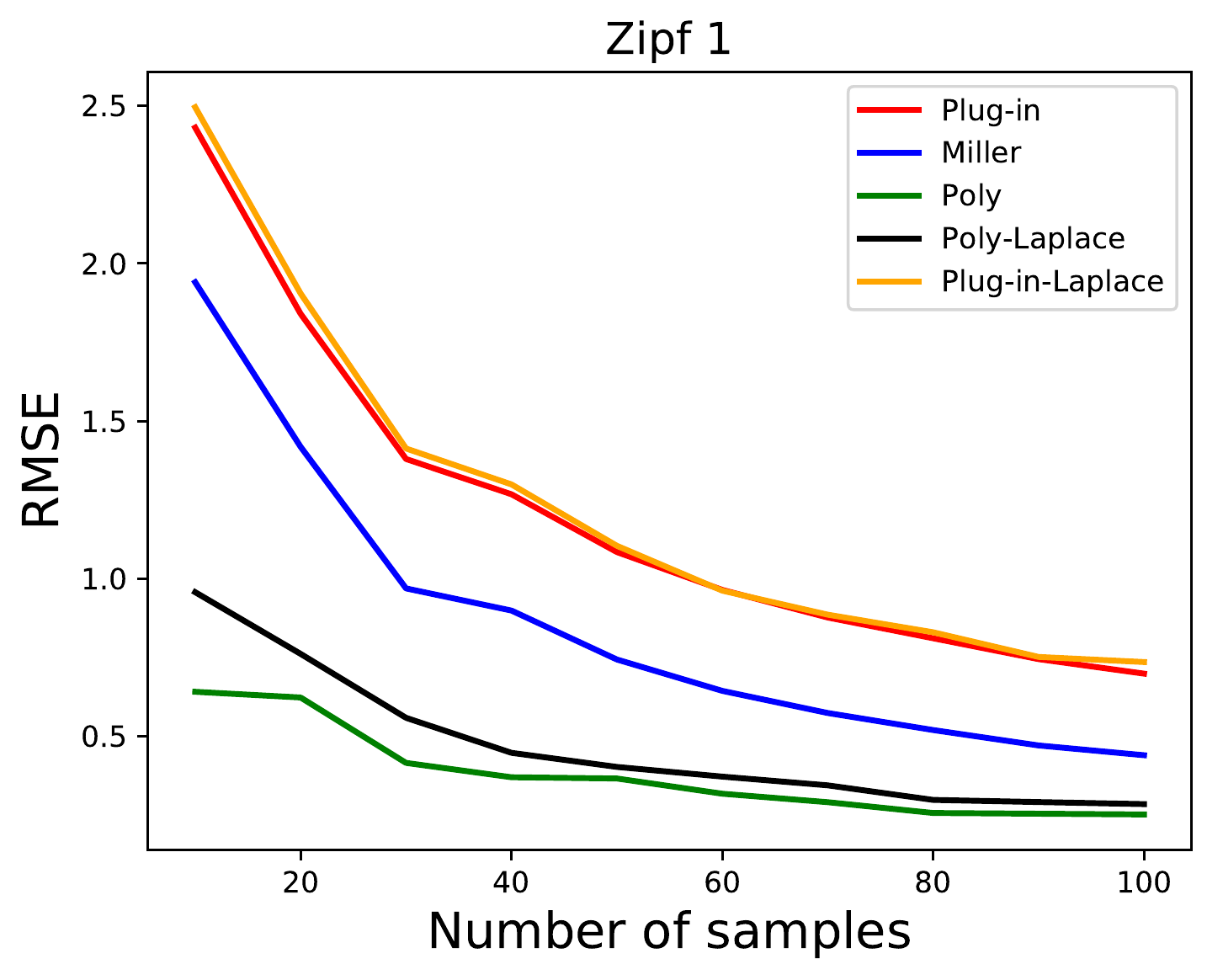}
\end{minipage}
}
\subfigure[]{
\begin{minipage}[b]{0.3\textwidth}
\includegraphics[width=1\textwidth]{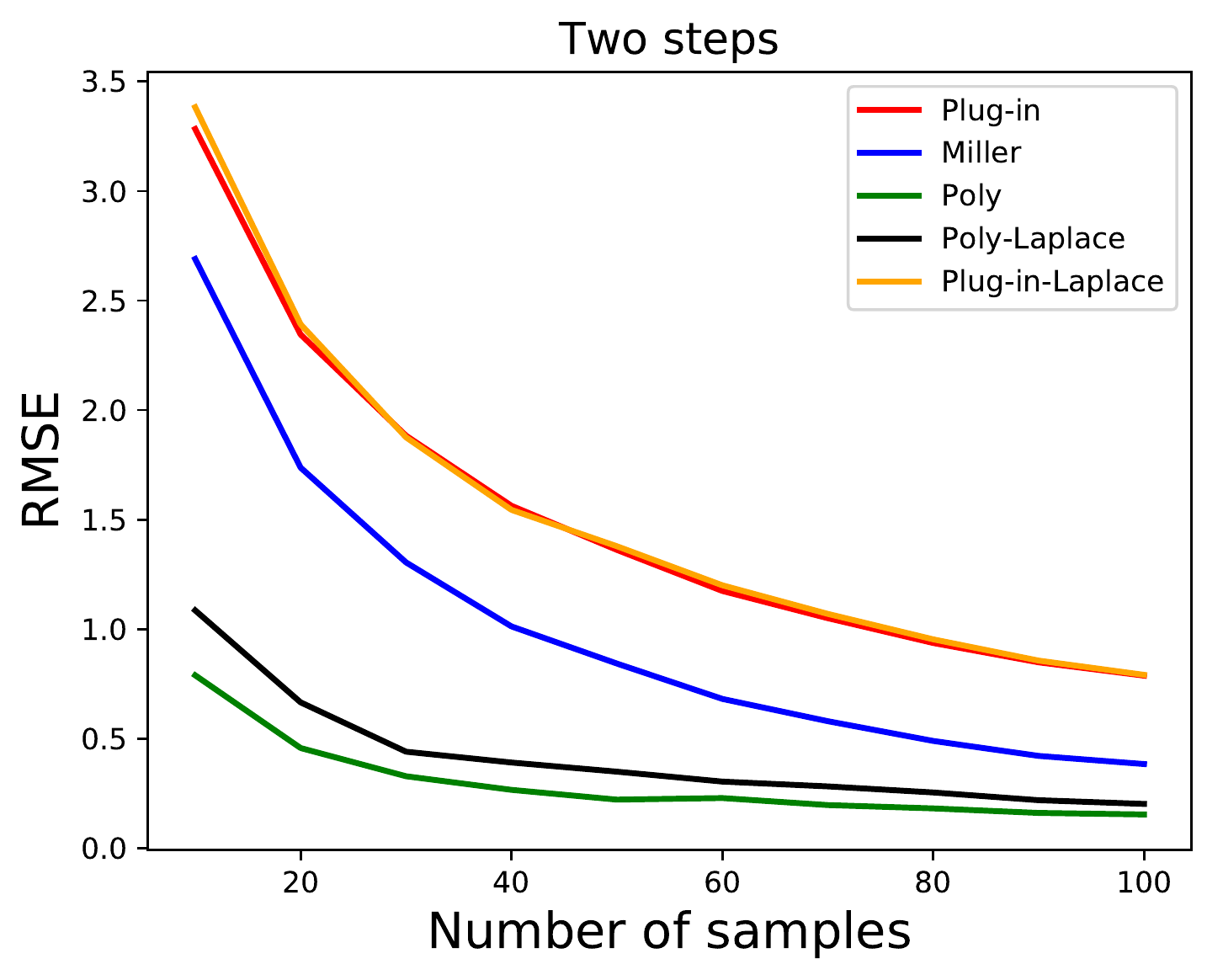}
\includegraphics[width=1\textwidth]{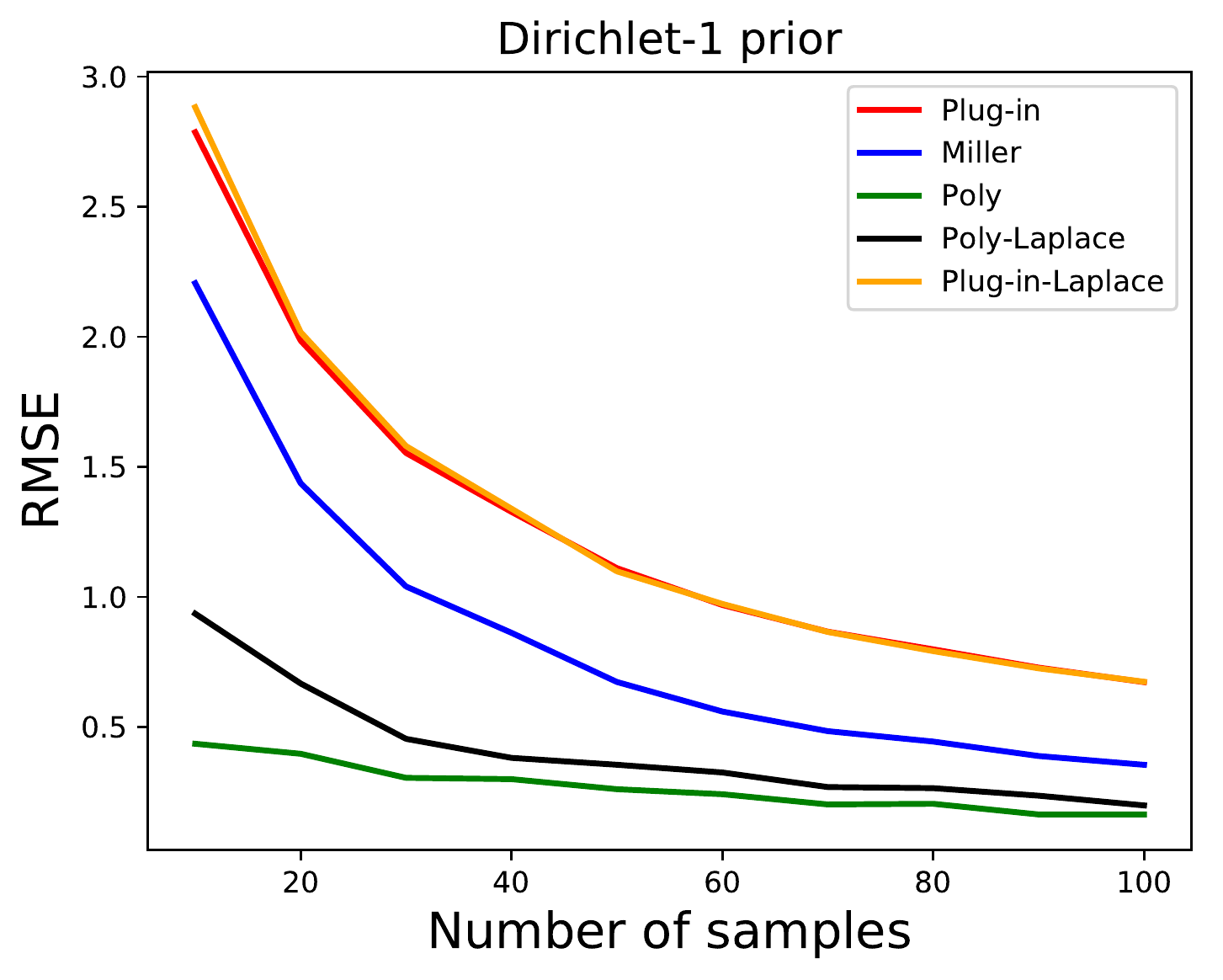}
\end{minipage}
}
\subfigure[]{
\begin{minipage}[b]{0.3\textwidth}
\includegraphics[width=1\textwidth]{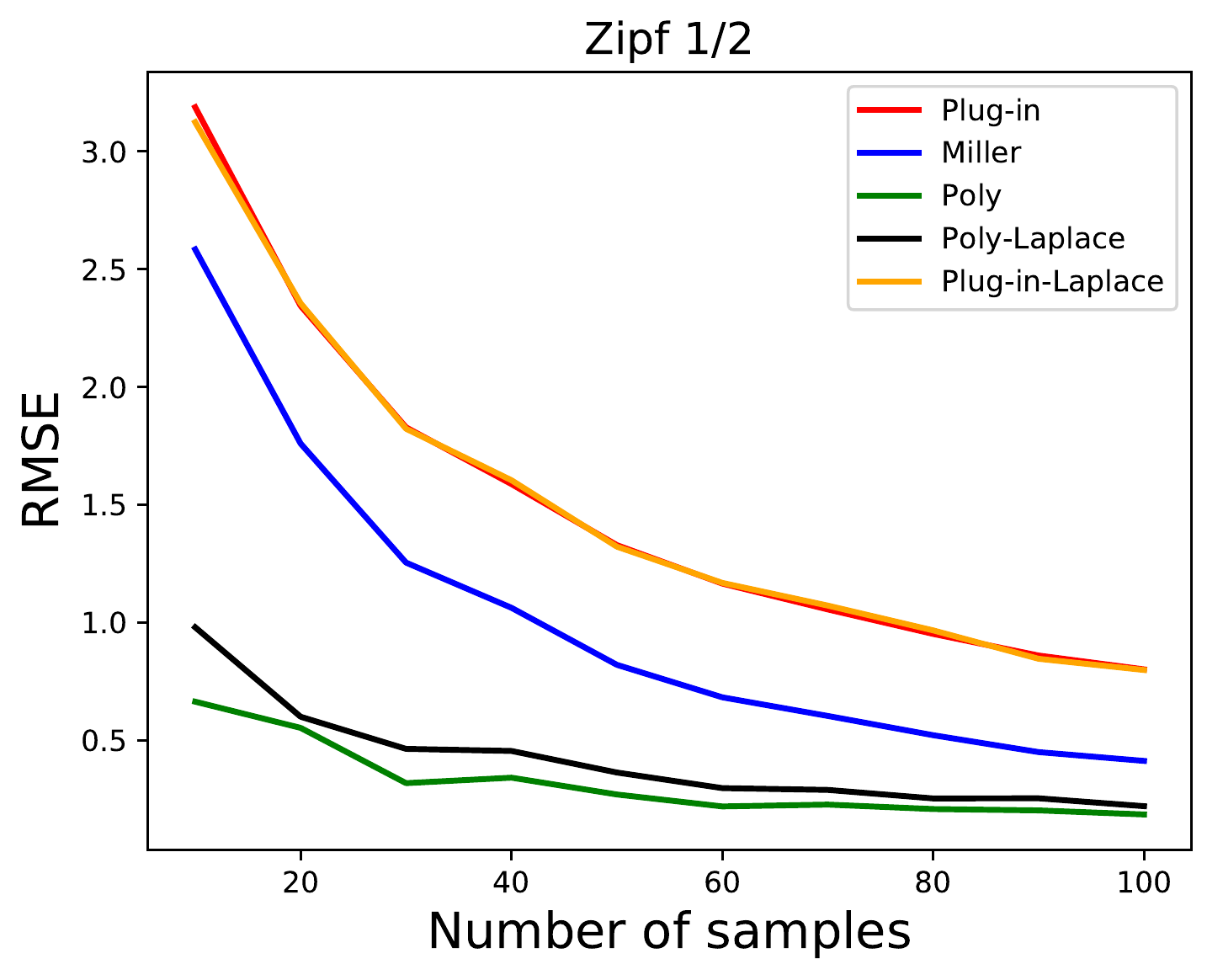}
\includegraphics[width=1\textwidth]{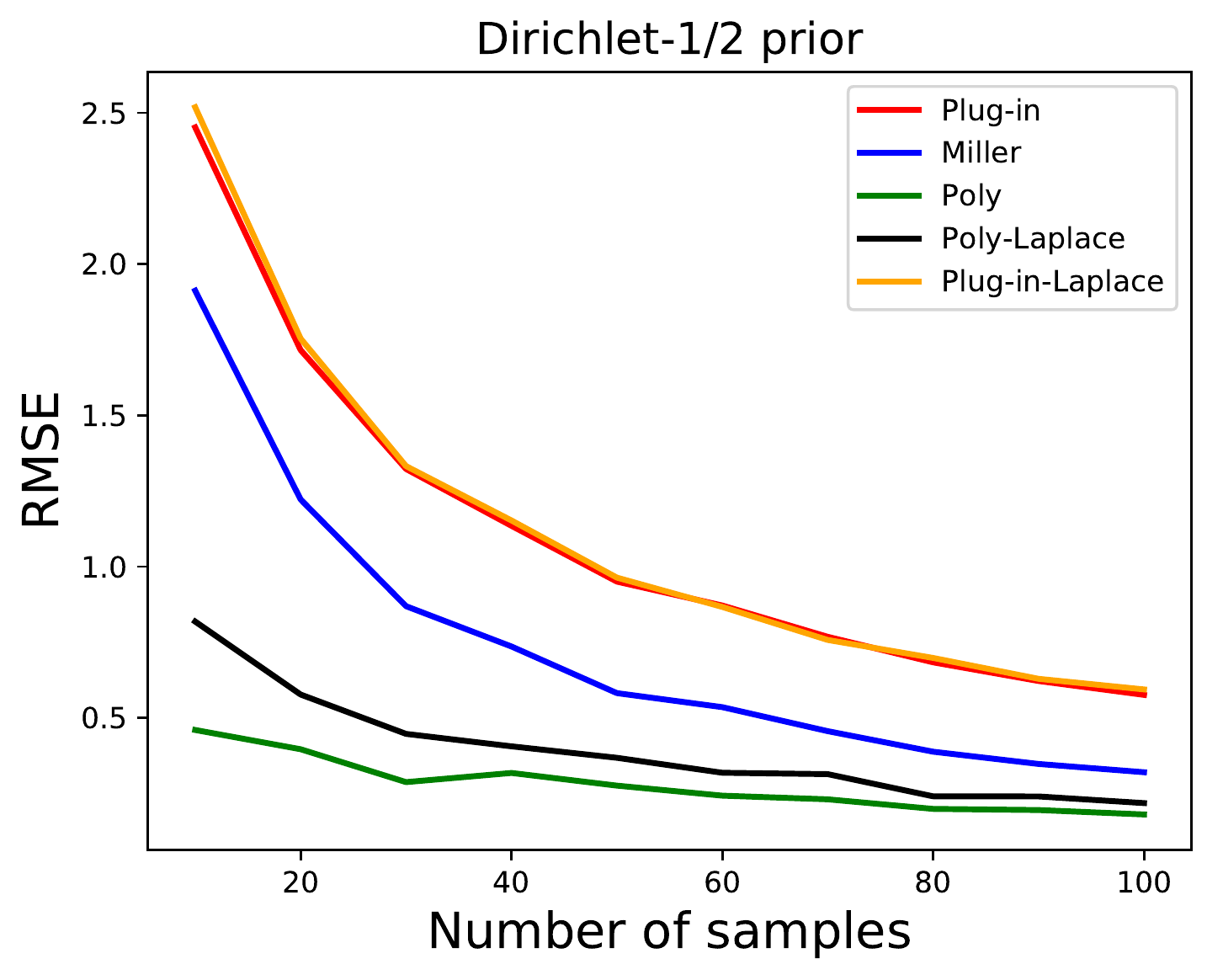}
\end{minipage}
}
\caption{Comparison of various estimators for the entropy, $k=100$, $\eps =2$.} 
\label{fig:entropy-k100-eps2}
\end{figure*}
\begin{figure*}
\centering
\subfigure[]{
\begin{minipage}[b]{0.3\textwidth}
\includegraphics[width=1\textwidth]{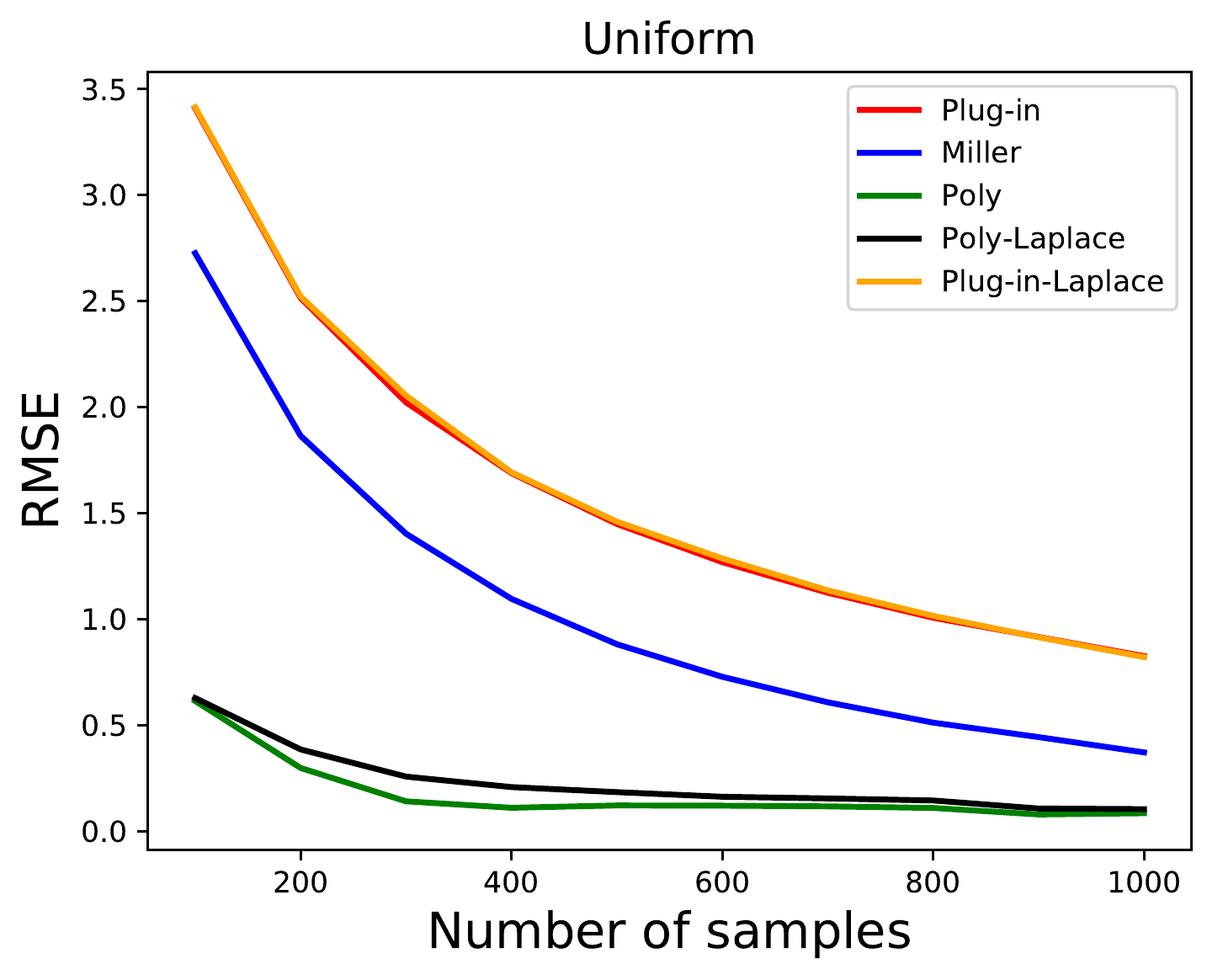}
\includegraphics[width=1\textwidth]{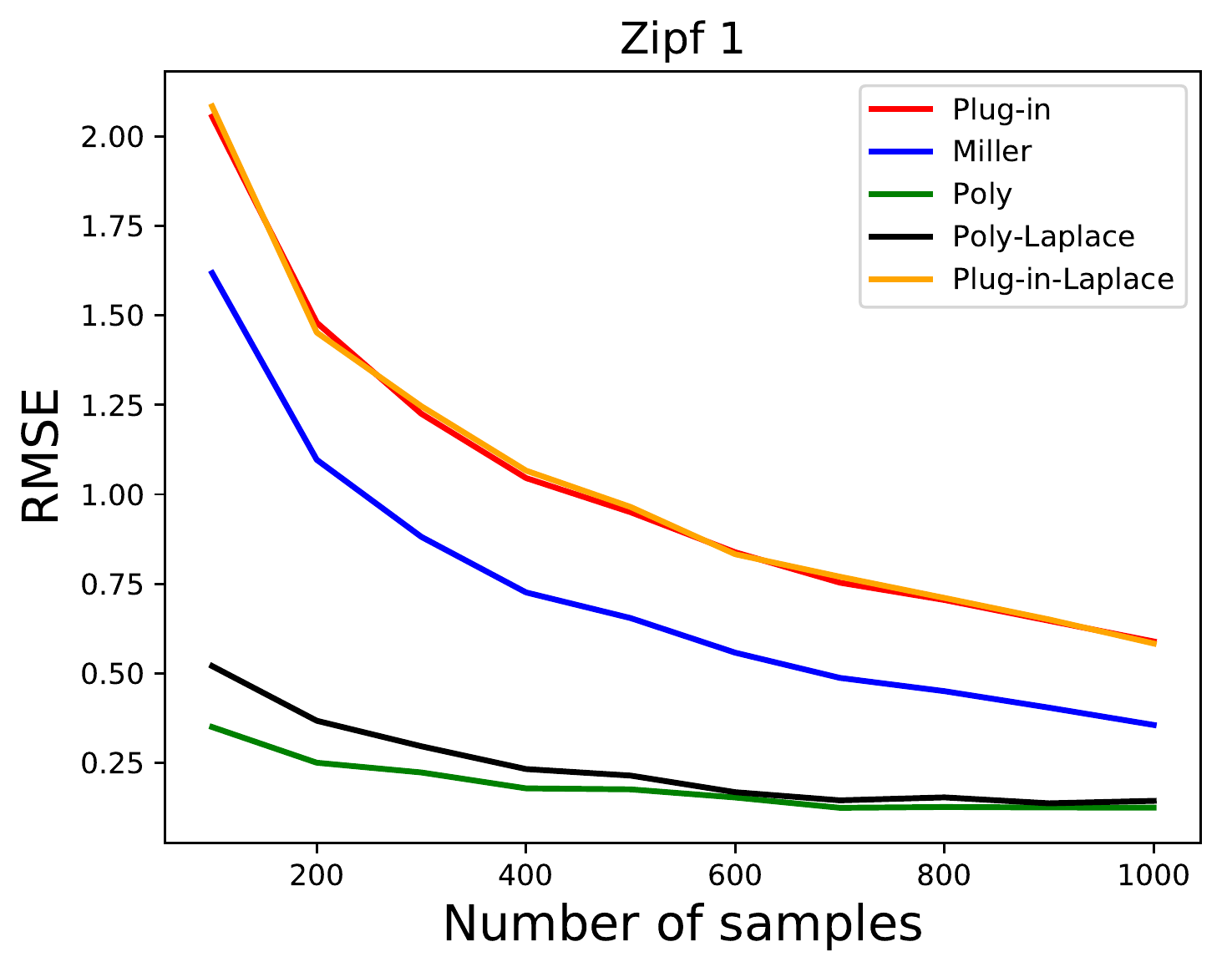}
\end{minipage}
}
\subfigure[]{
\begin{minipage}[b]{0.3\textwidth}
\includegraphics[width=1\textwidth]{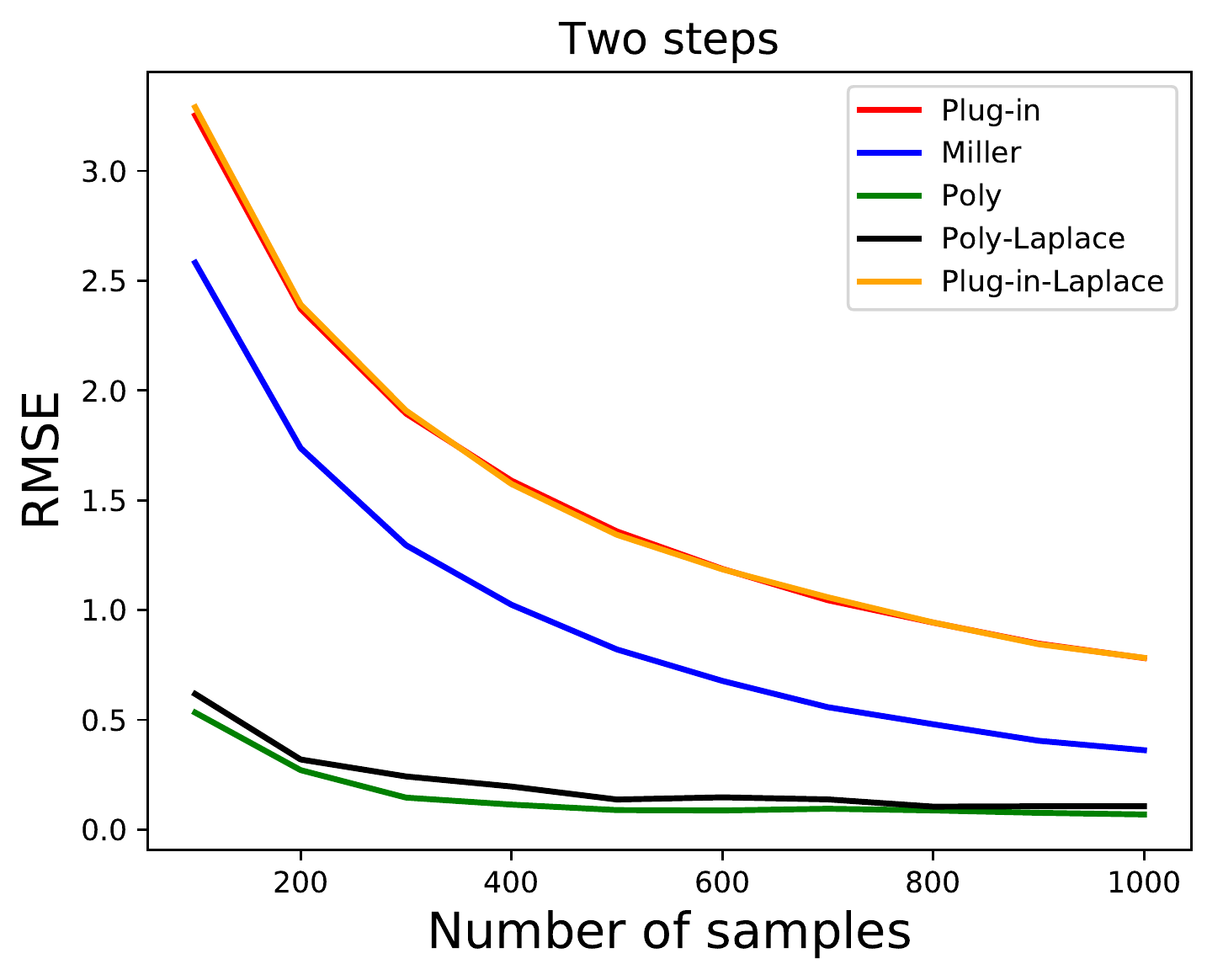}
\includegraphics[width=1\textwidth]{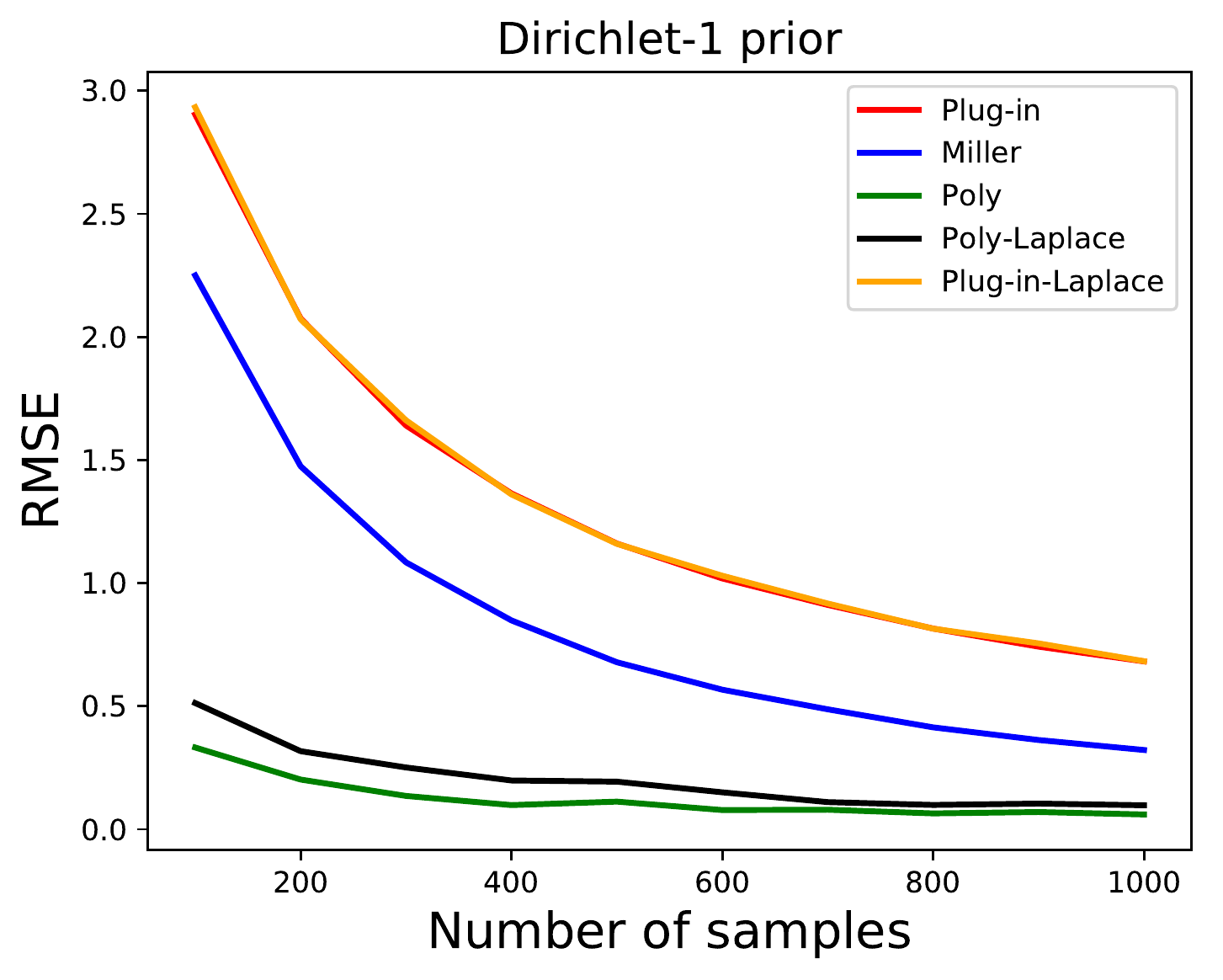}
\end{minipage}
}
\subfigure[]{
\begin{minipage}[b]{0.3\textwidth}
\includegraphics[width=1\textwidth]{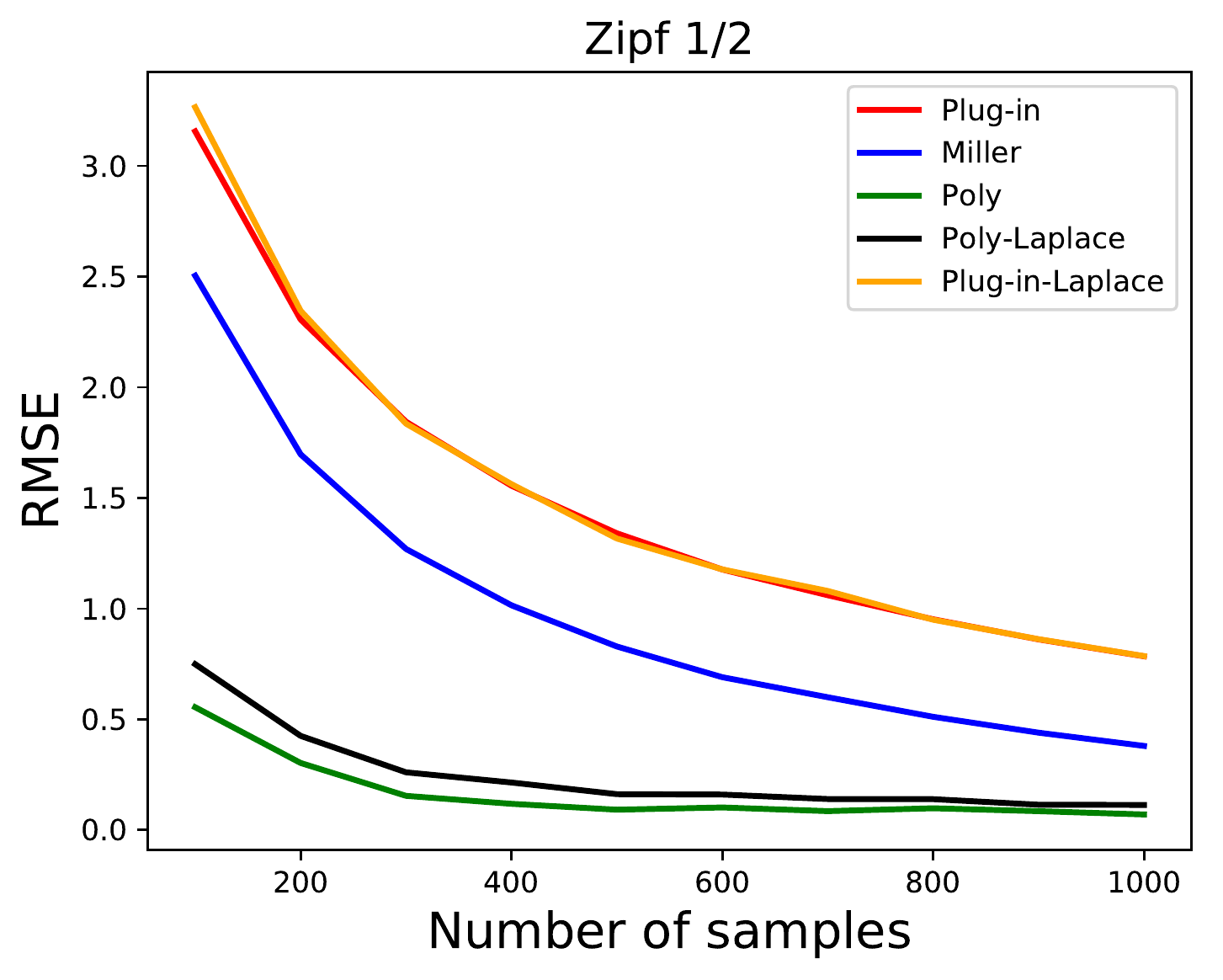}
\includegraphics[width=1\textwidth]{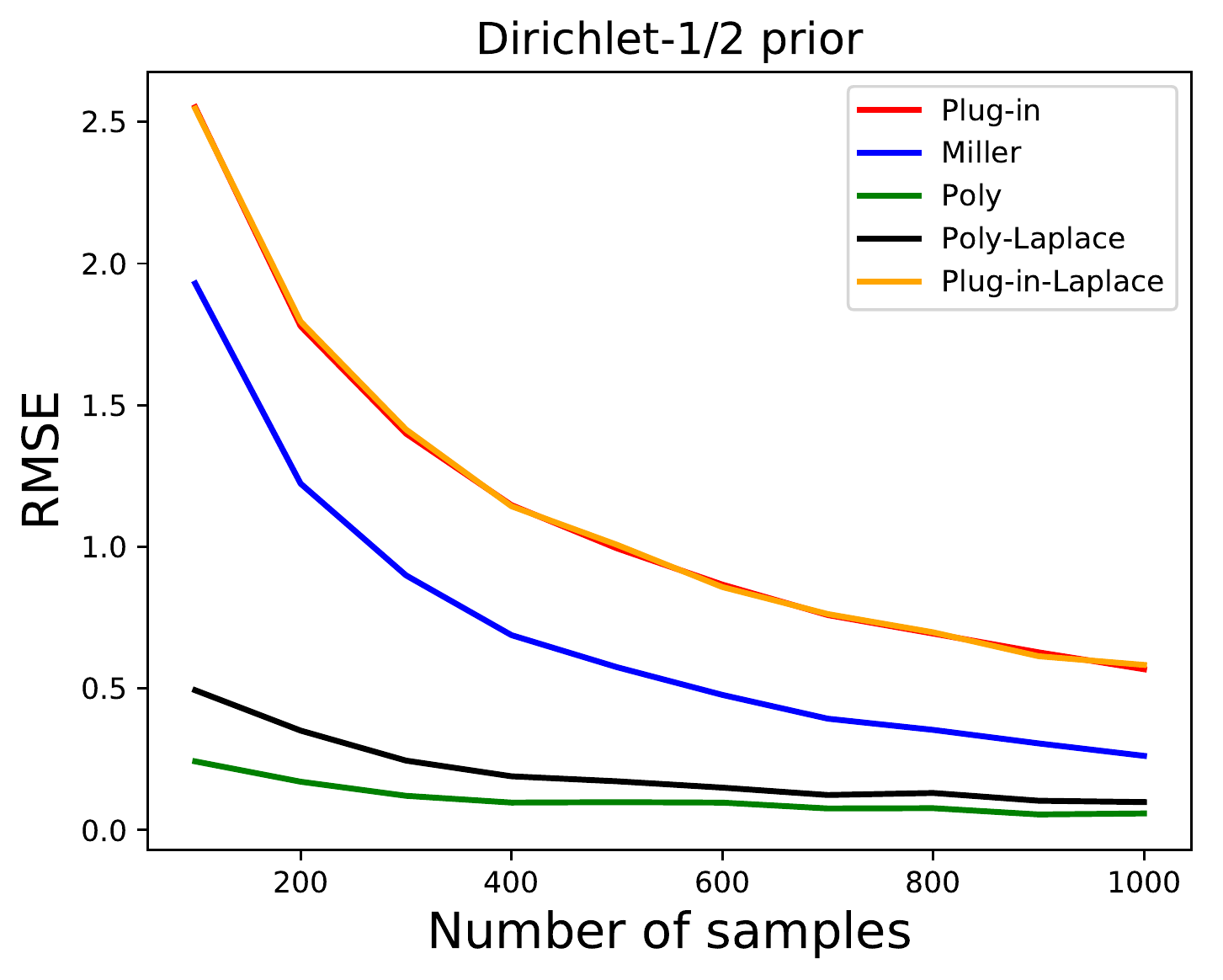}
\end{minipage}
}
\caption{Comparison of various estimators for the entropy, $k=1000$, $\eps =0.5$.} 
\label{fig:entropy-k1000-eps05}
\end{figure*}
\begin{figure*}
\centering
\subfigure[]{
\begin{minipage}[b]{0.3\textwidth}
\includegraphics[width=1\textwidth]{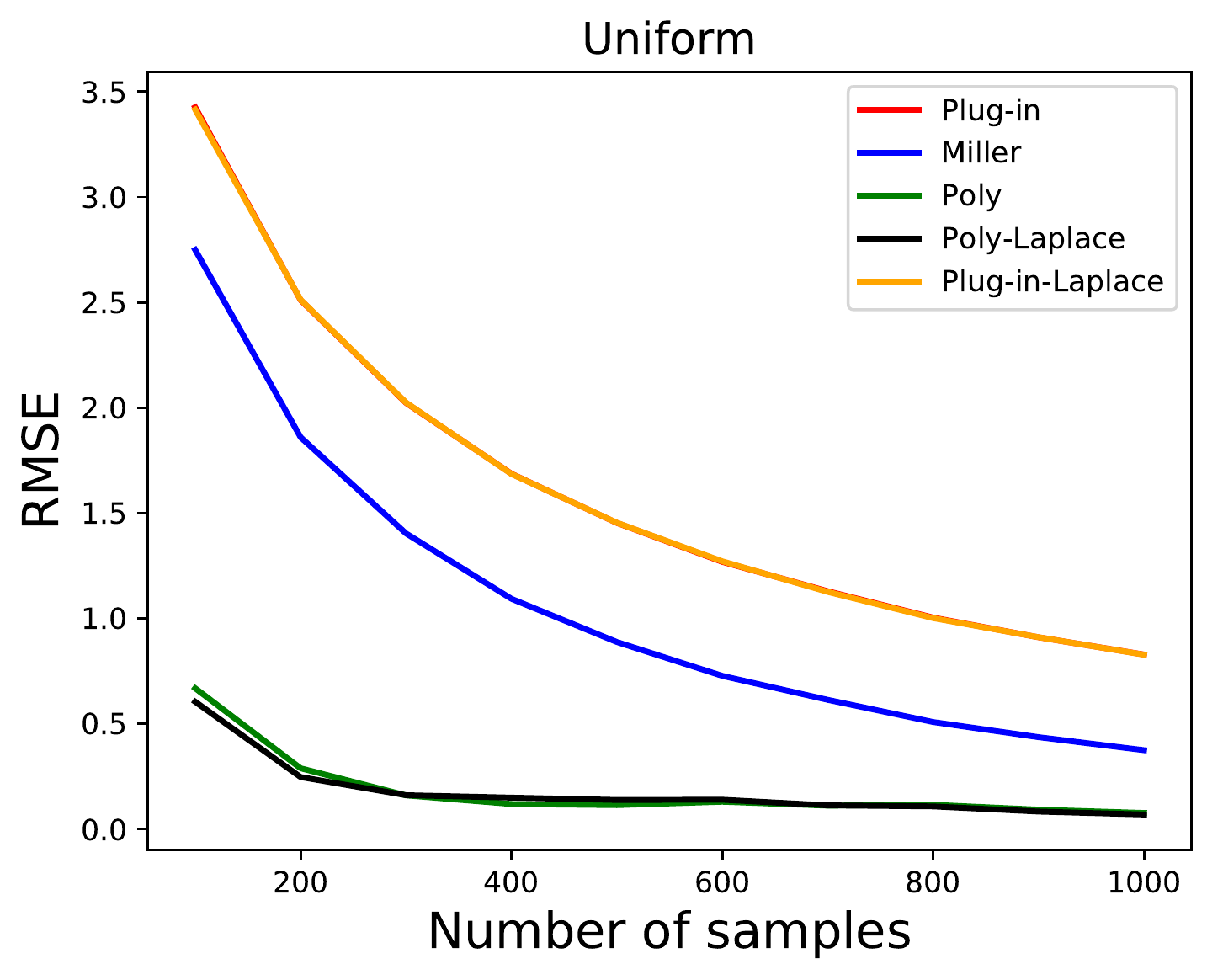}
\includegraphics[width=1\textwidth]{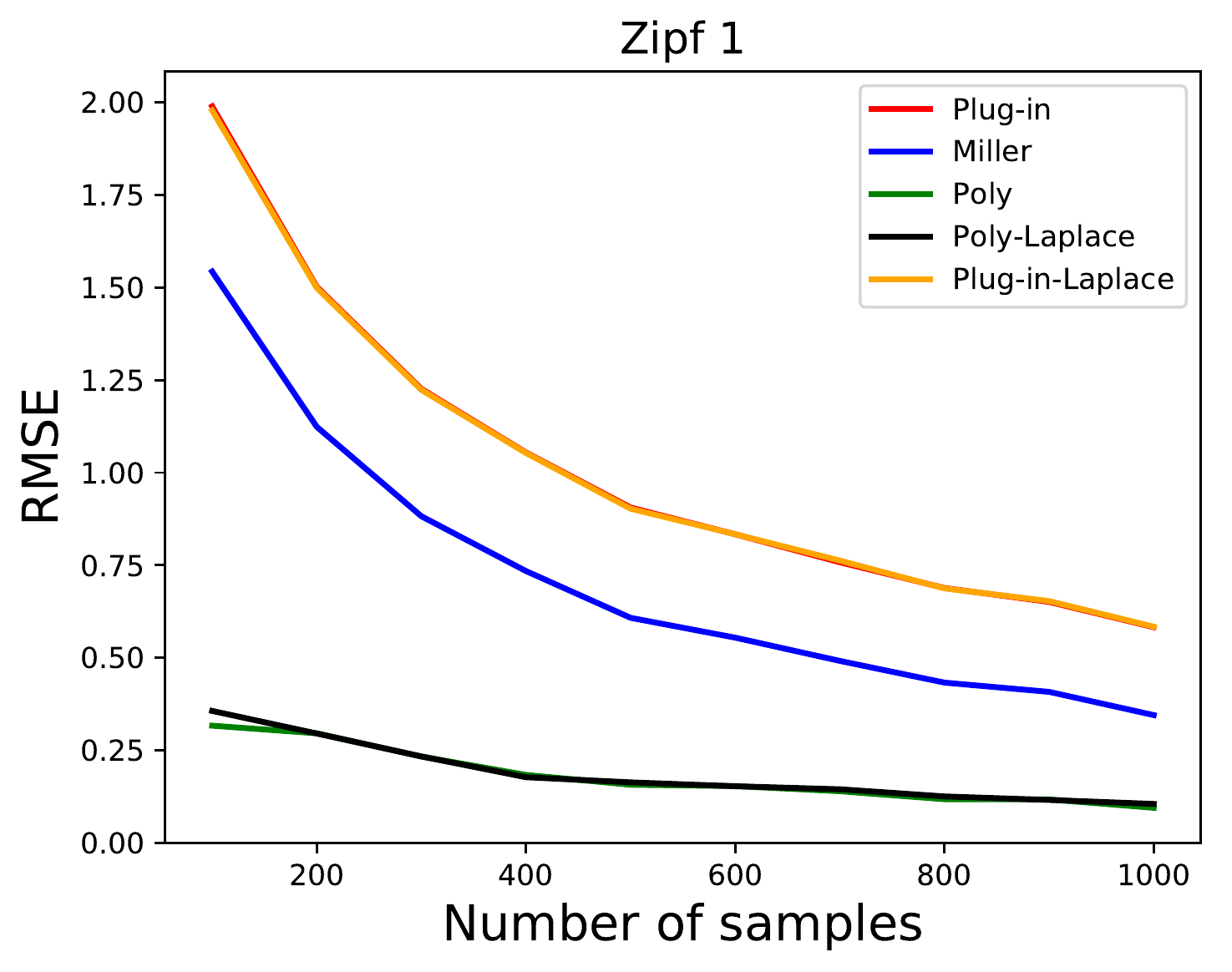}
\end{minipage}
}
\subfigure[]{
\begin{minipage}[b]{0.3\textwidth}
\includegraphics[width=1\textwidth]{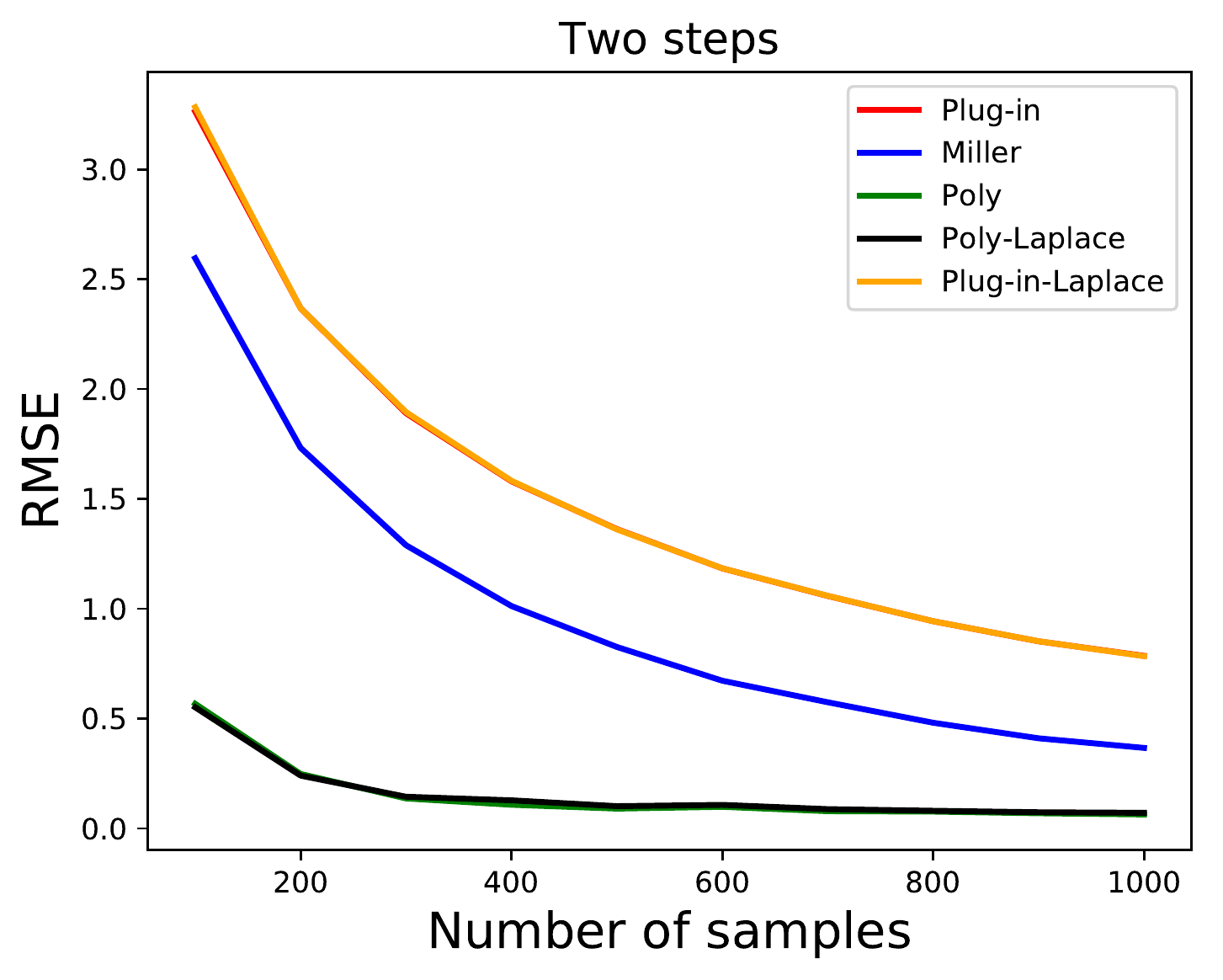}
\includegraphics[width=1\textwidth]{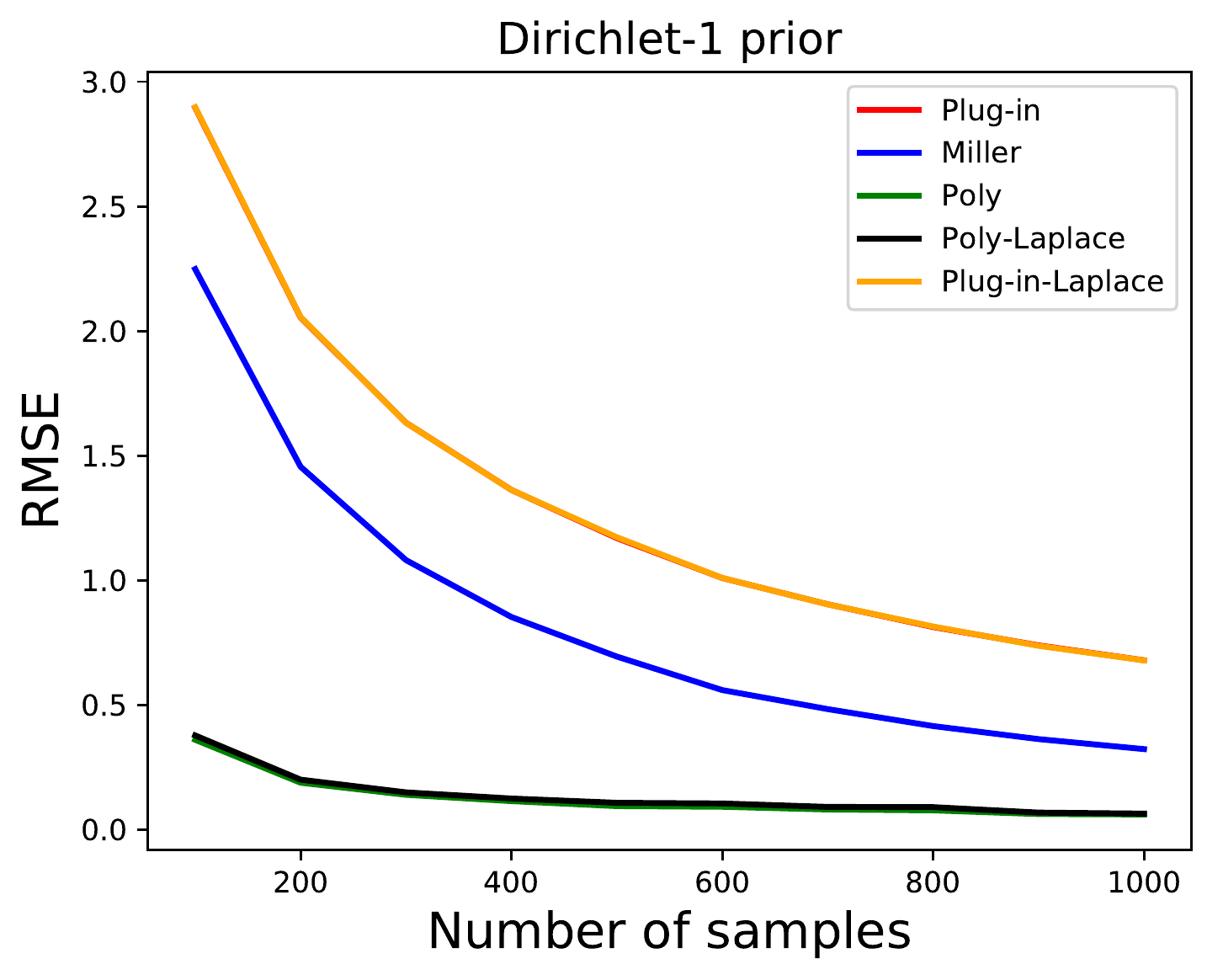}
\end{minipage}
}
\subfigure[]{
\begin{minipage}[b]{0.3\textwidth}
\includegraphics[width=1\textwidth]{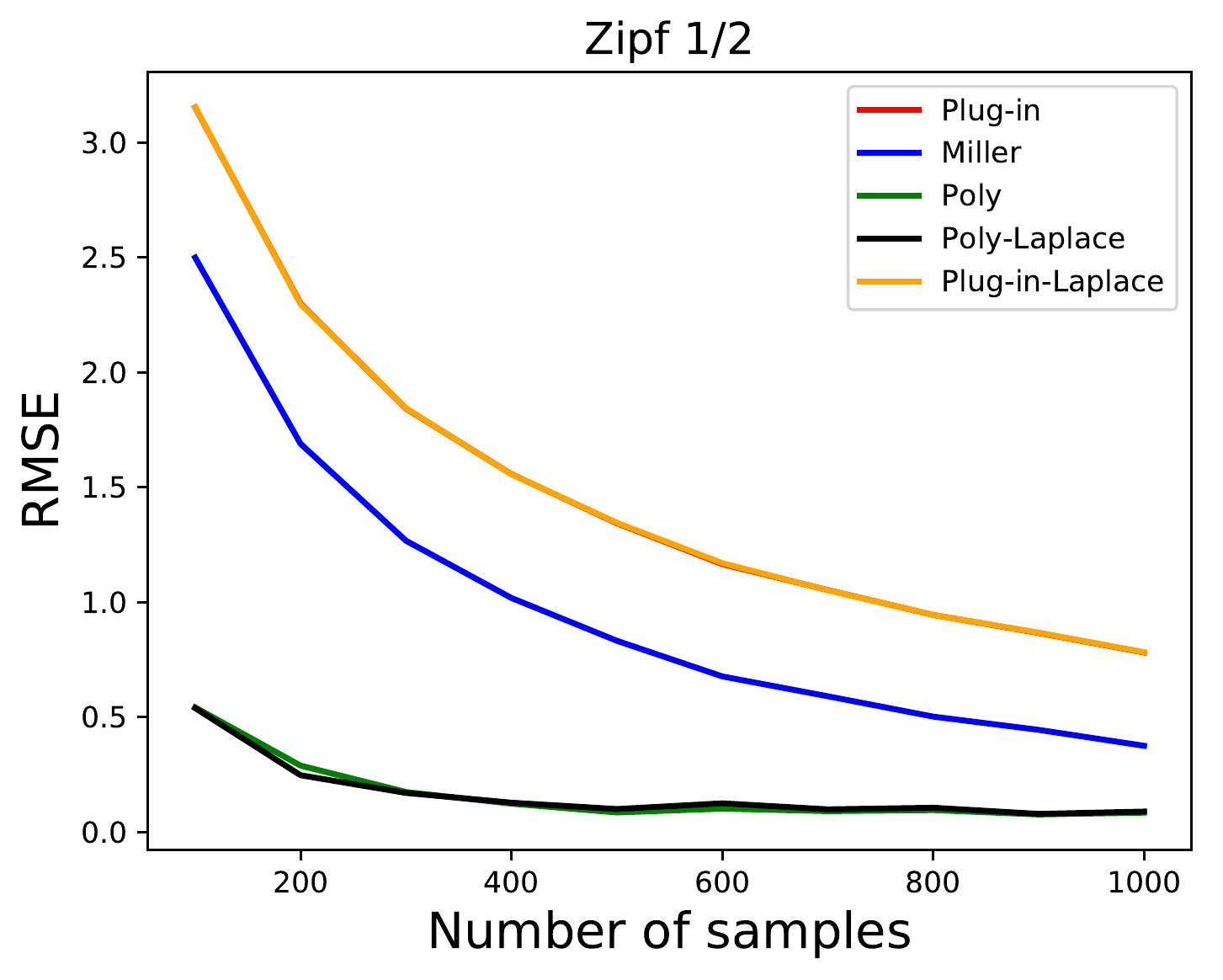}
\includegraphics[width=1\textwidth]{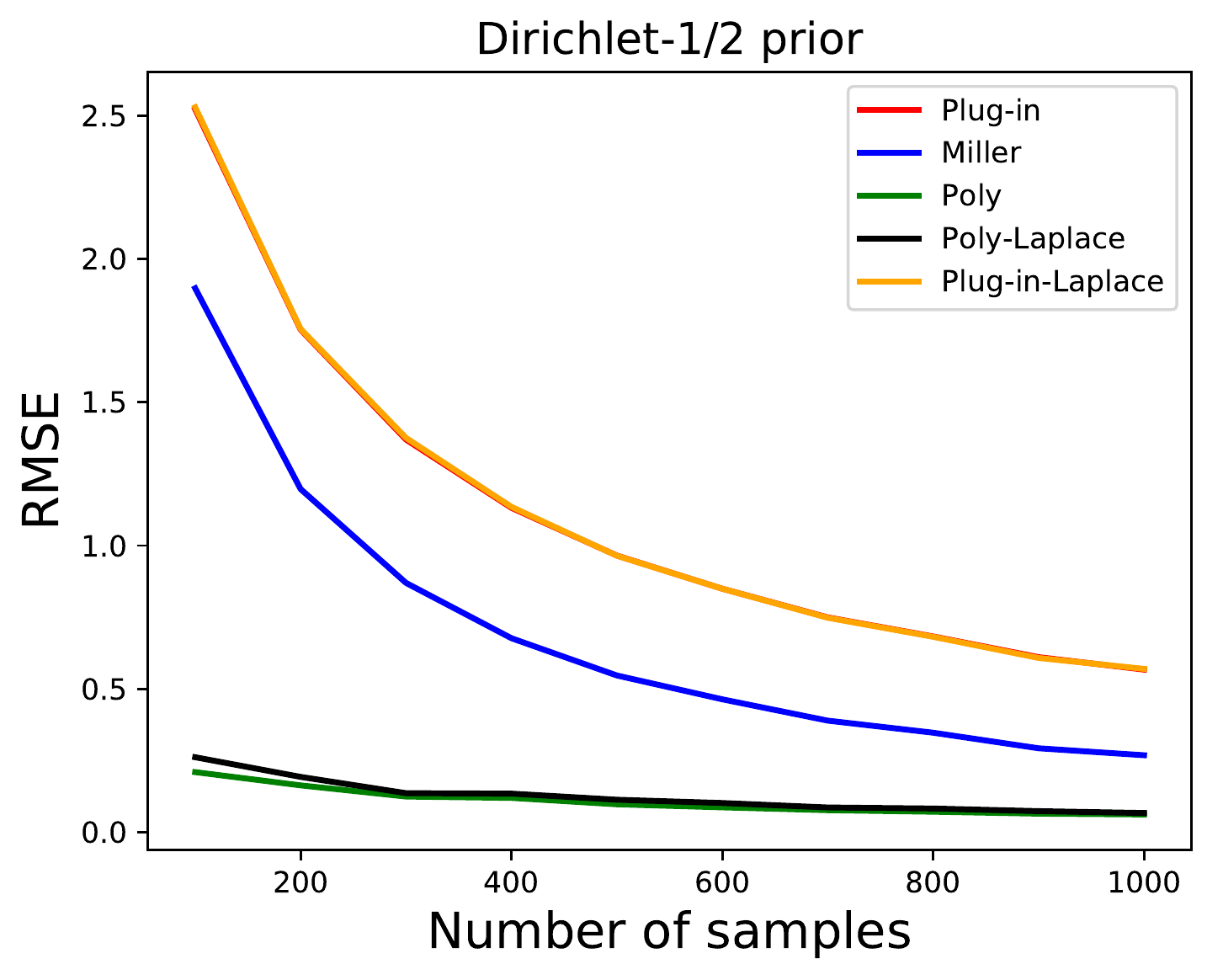}
\end{minipage}
}
\caption{Comparison of various estimators for the entropy, $k=1000$, $\eps =2$.} 
\label{fig:entropy-k1000-eps2}
\end{figure*}

\subsection{Support Coverage}
\label{sec:supp-exp-coverage}
We present three additional plots of our synthetic experimental results for support coverage estimation.
In particular, Figures~\ref{fig:coverage-k1000}, \ref{fig:coverage-k5000}, and~\ref{fig:coverage-k100000} show support coverage for $k$ = 1000, 5000, 100000.
\begin{figure*}
\centering
\subfigure[]{
\begin{minipage}[b]{0.3\textwidth}
\includegraphics[width=1\textwidth]{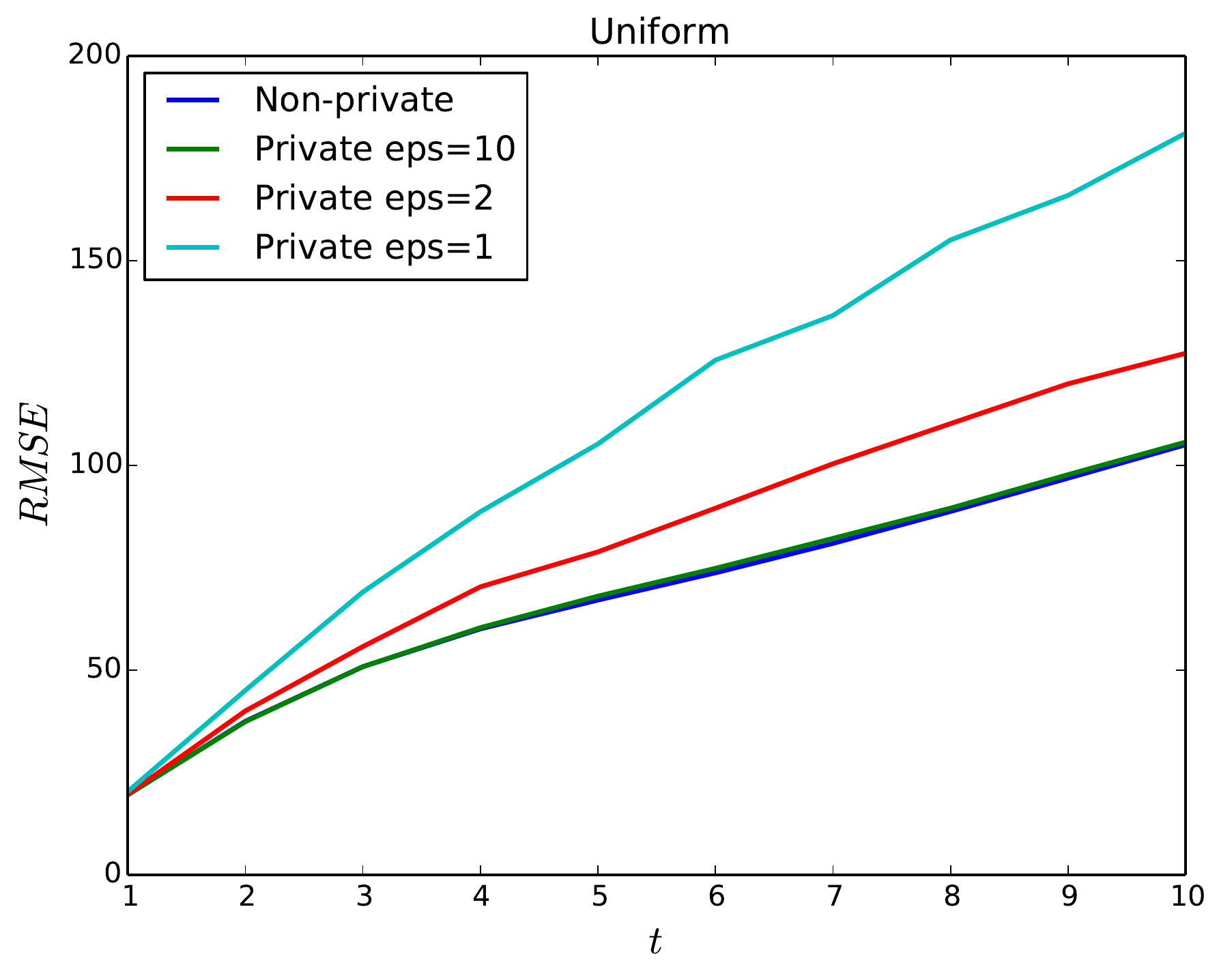}
\includegraphics[width=1\textwidth]{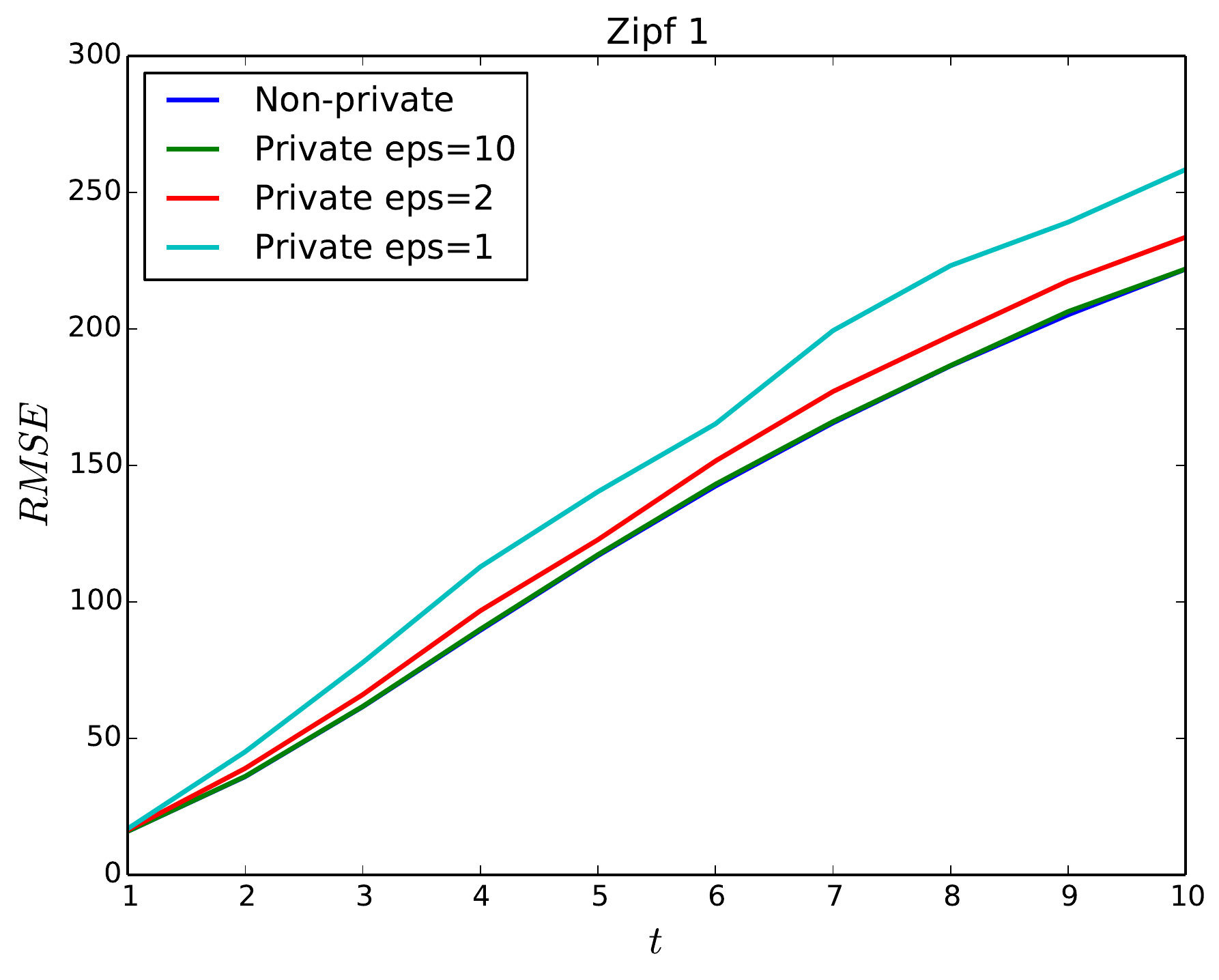}
\end{minipage}
}
\subfigure[]{
\begin{minipage}[b]{0.3\textwidth}
\includegraphics[width=1\textwidth]{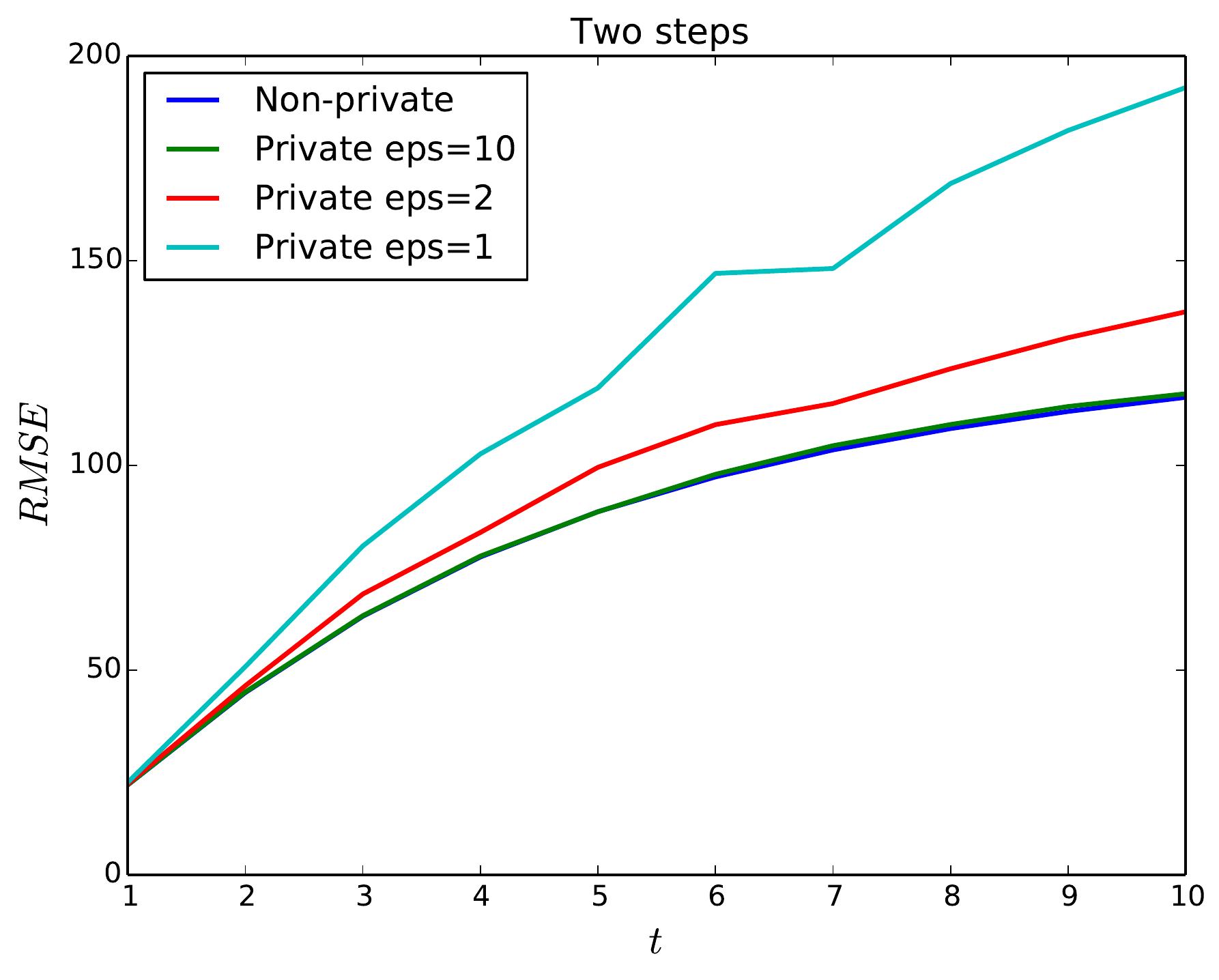}
\includegraphics[width=1\textwidth]{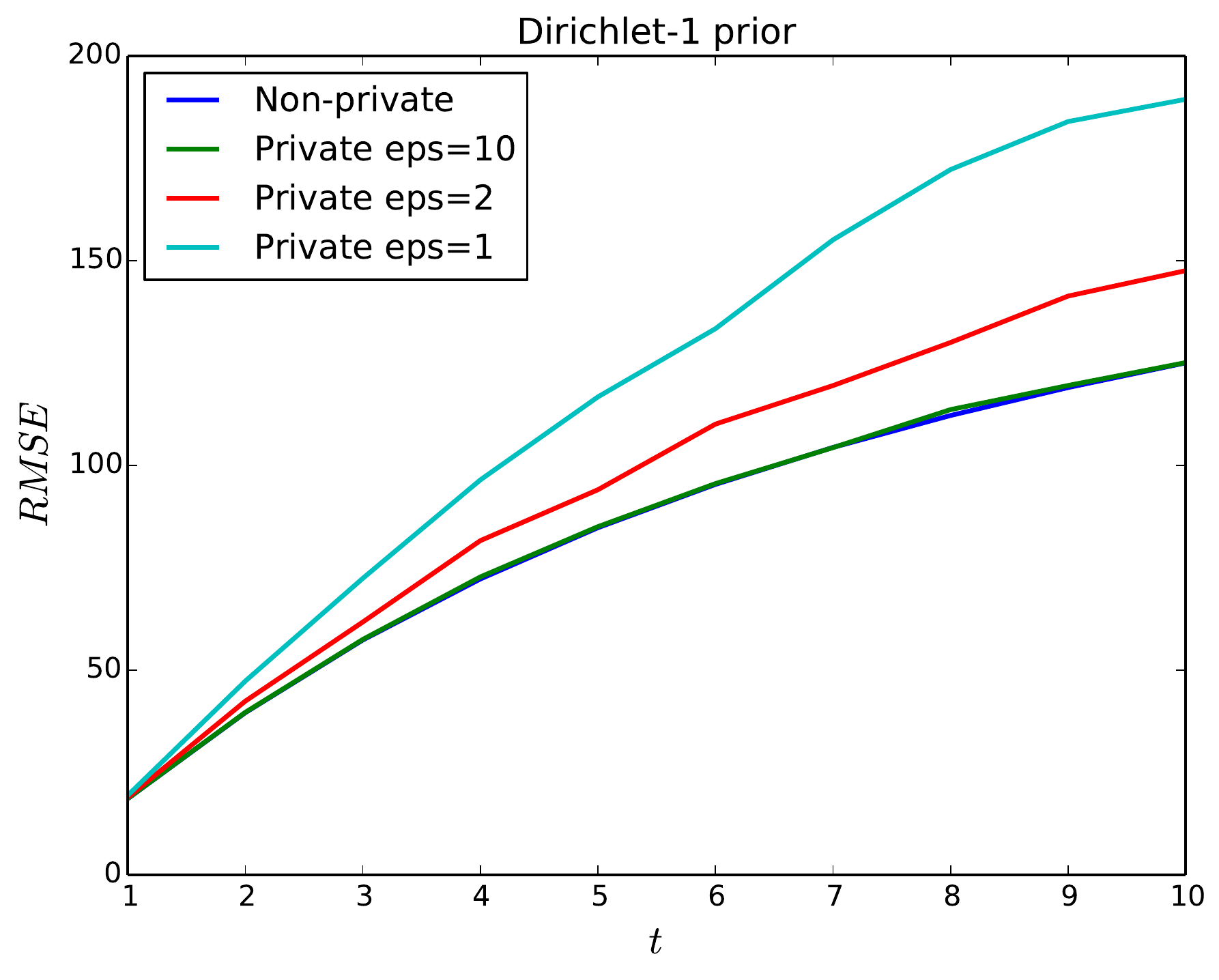}
\end{minipage}
}
\subfigure[]{
\begin{minipage}[b]{0.3\textwidth}
\includegraphics[width=1\textwidth]{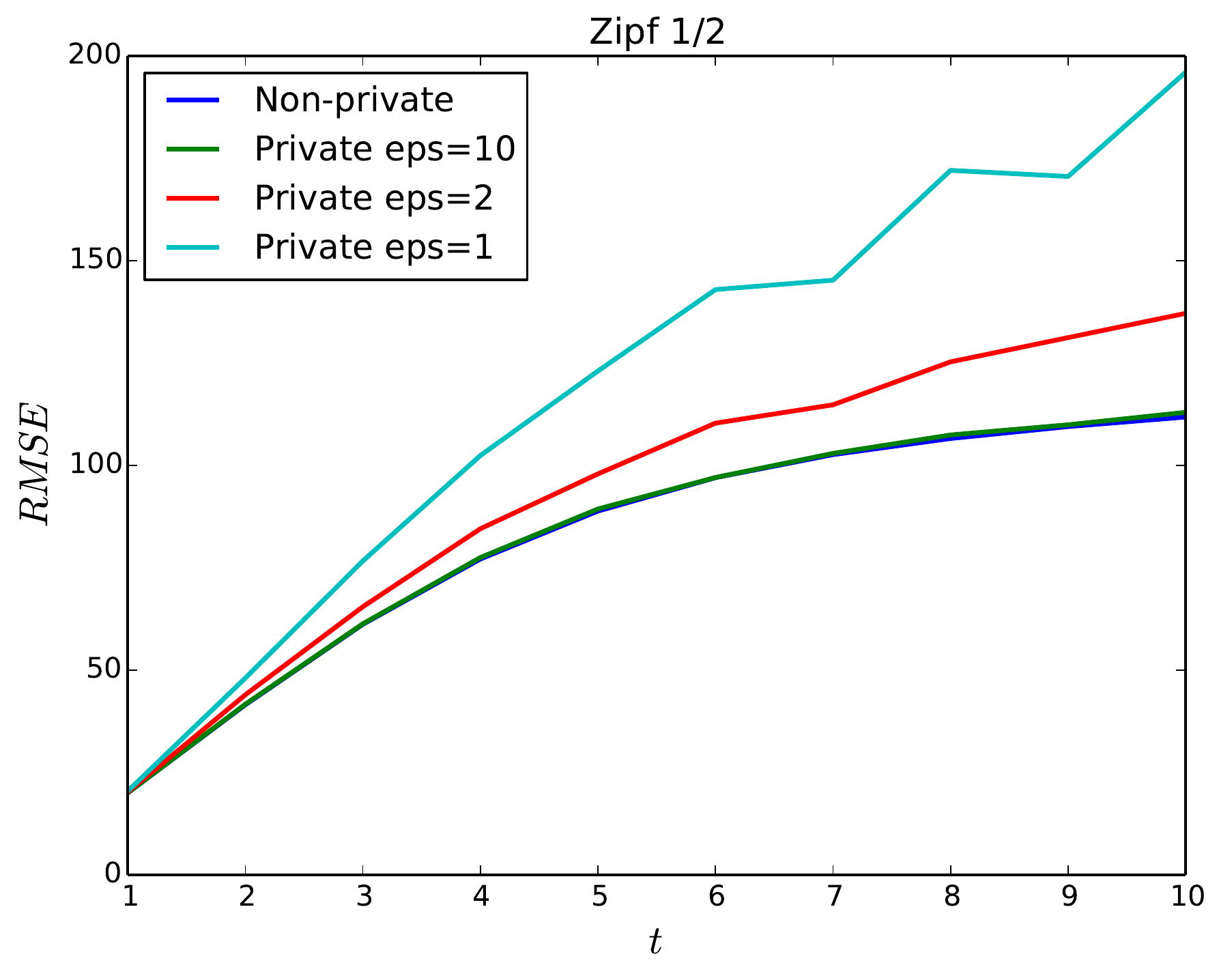}
\includegraphics[width=1\textwidth]{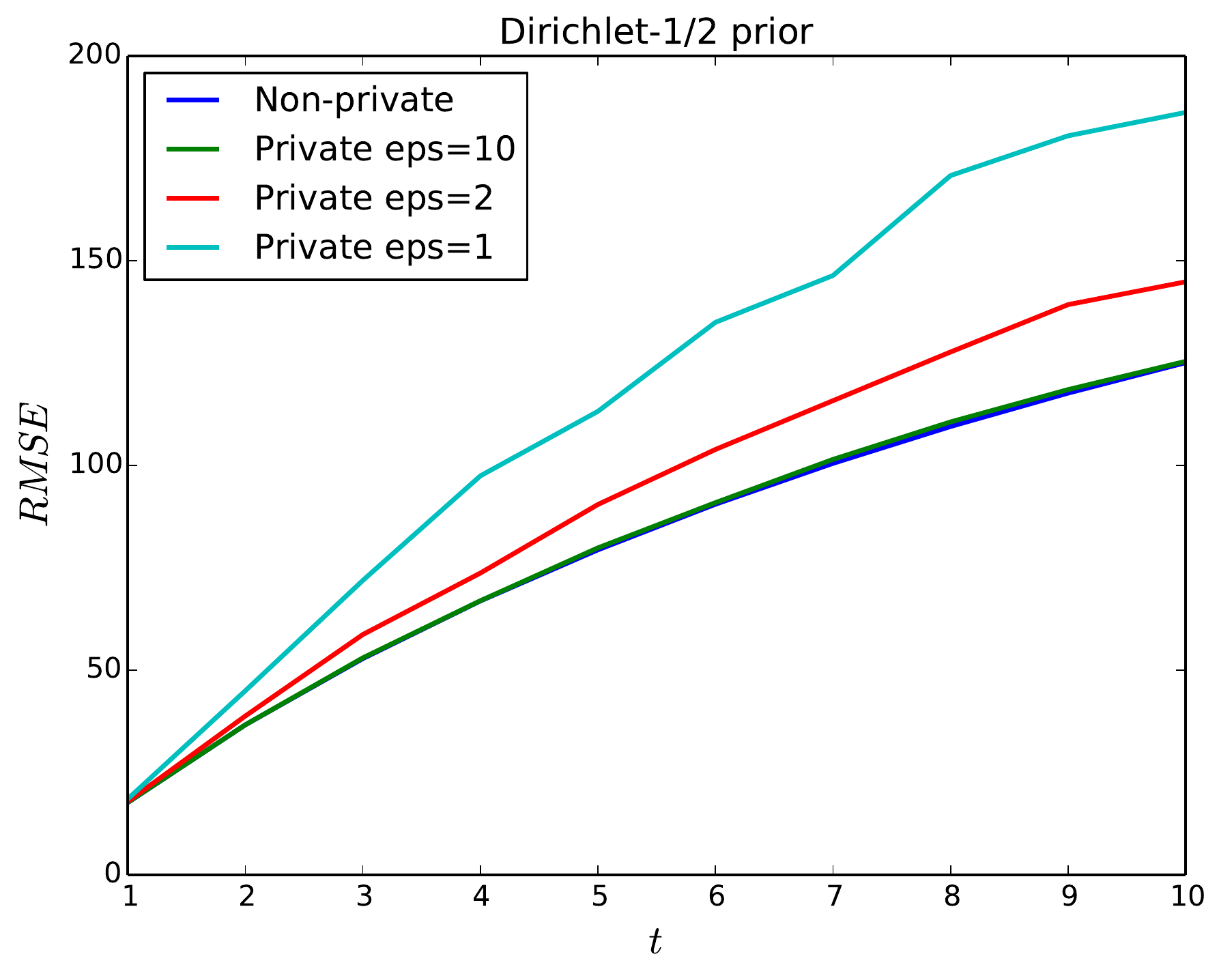}
\end{minipage}
}
\caption{Comparison between the private estimator with the non-private SGT when $k=1000$.} 
\label{fig:coverage-k1000}
\end{figure*}
\begin{figure*}
\centering
\subfigure[]{
\begin{minipage}[b]{0.3\textwidth}
\includegraphics[width=1\textwidth]{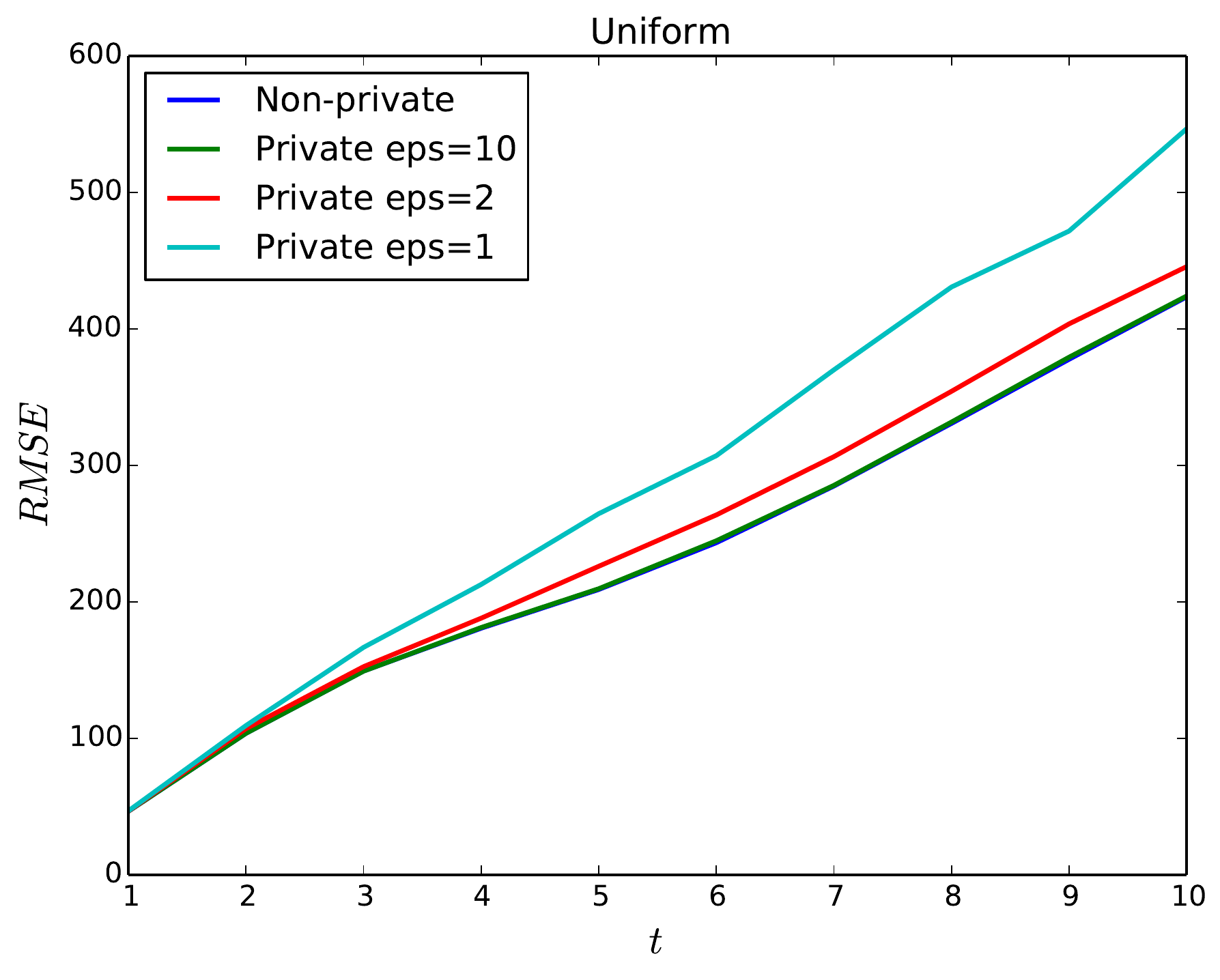}
\includegraphics[width=1\textwidth]{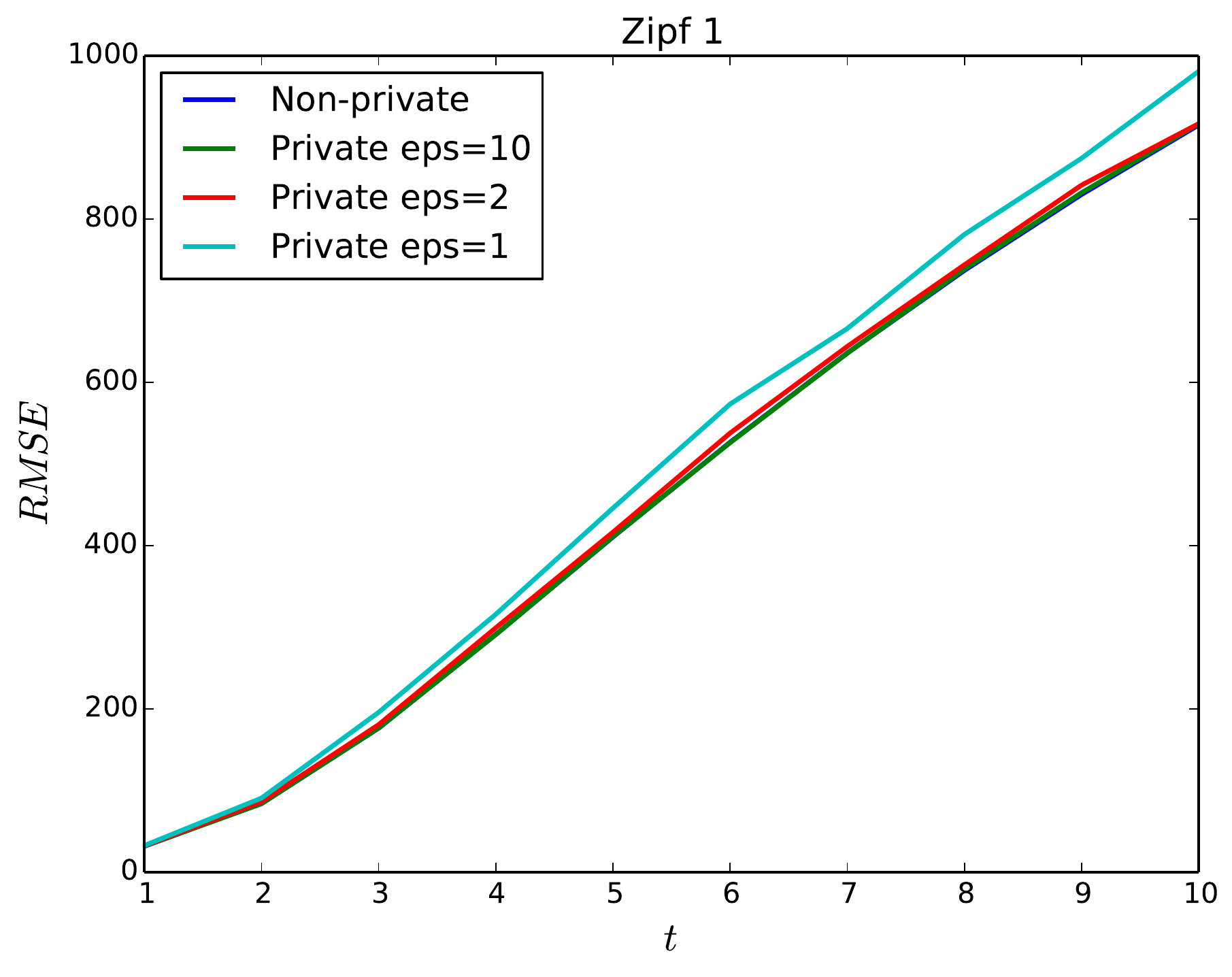}
\end{minipage}
}
\subfigure[]{
\begin{minipage}[b]{0.3\textwidth}
\includegraphics[width=1\textwidth]{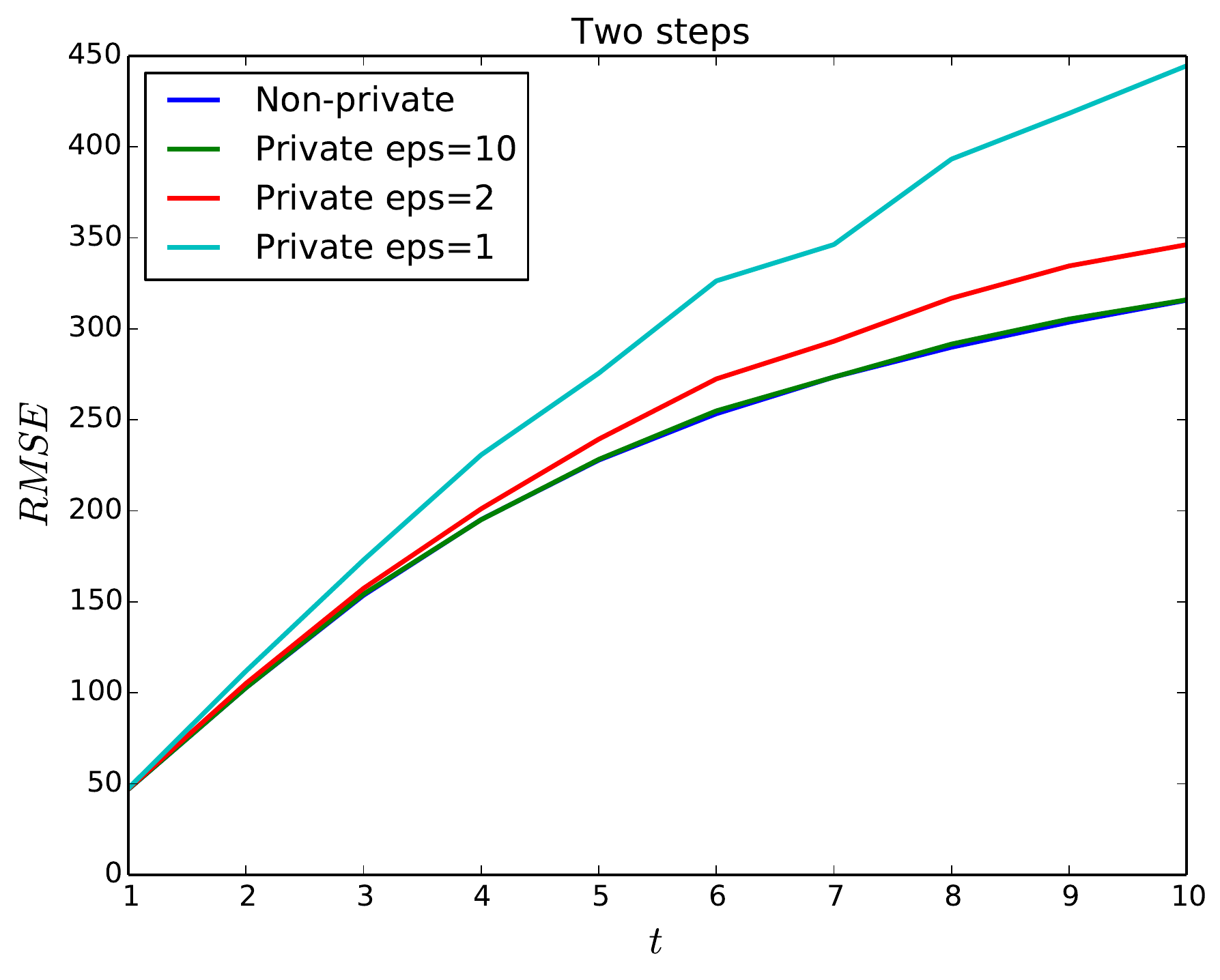}
\includegraphics[width=1\textwidth]{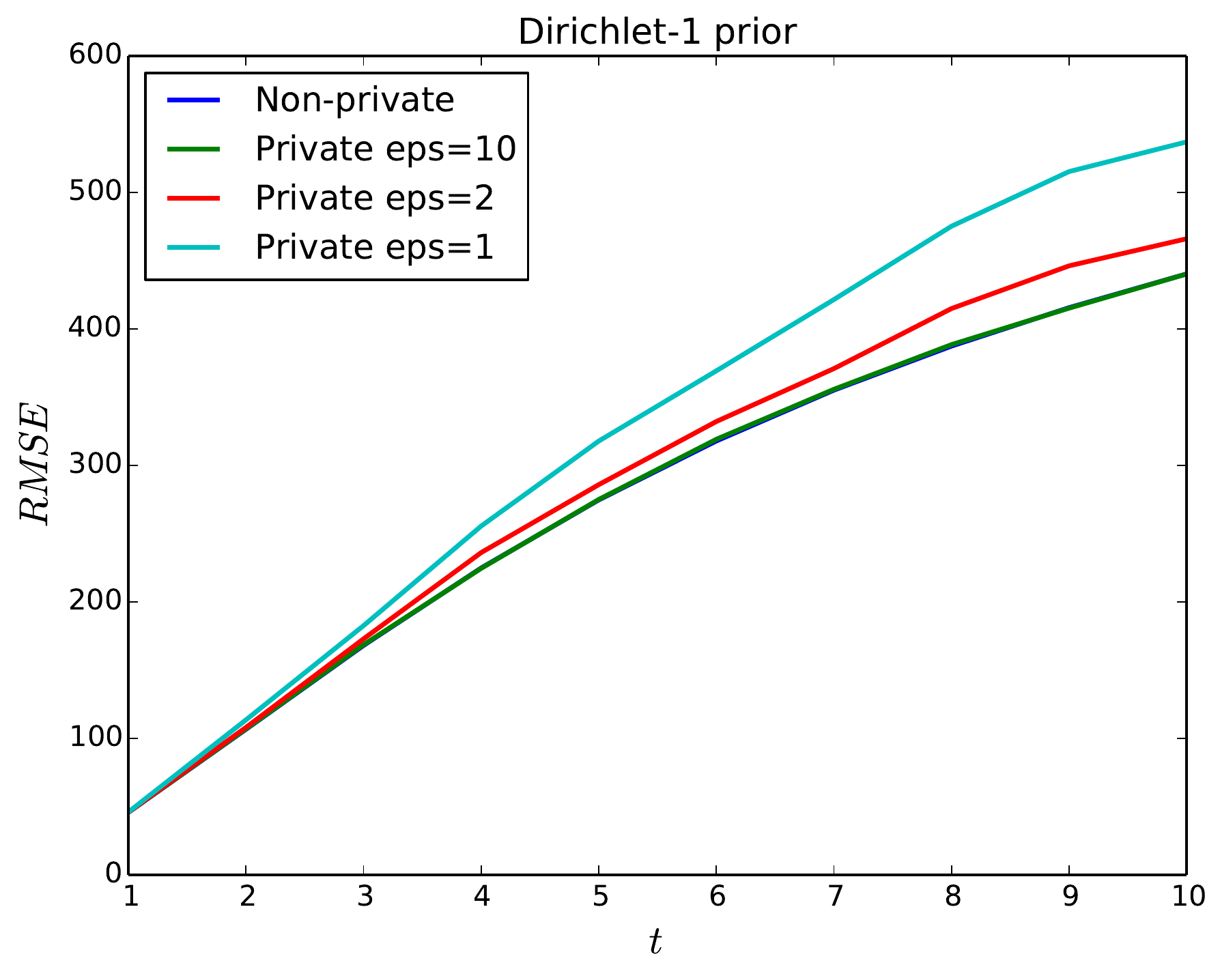}
\end{minipage}
}
\subfigure[]{
\begin{minipage}[b]{0.3\textwidth}
\includegraphics[width=1\textwidth]{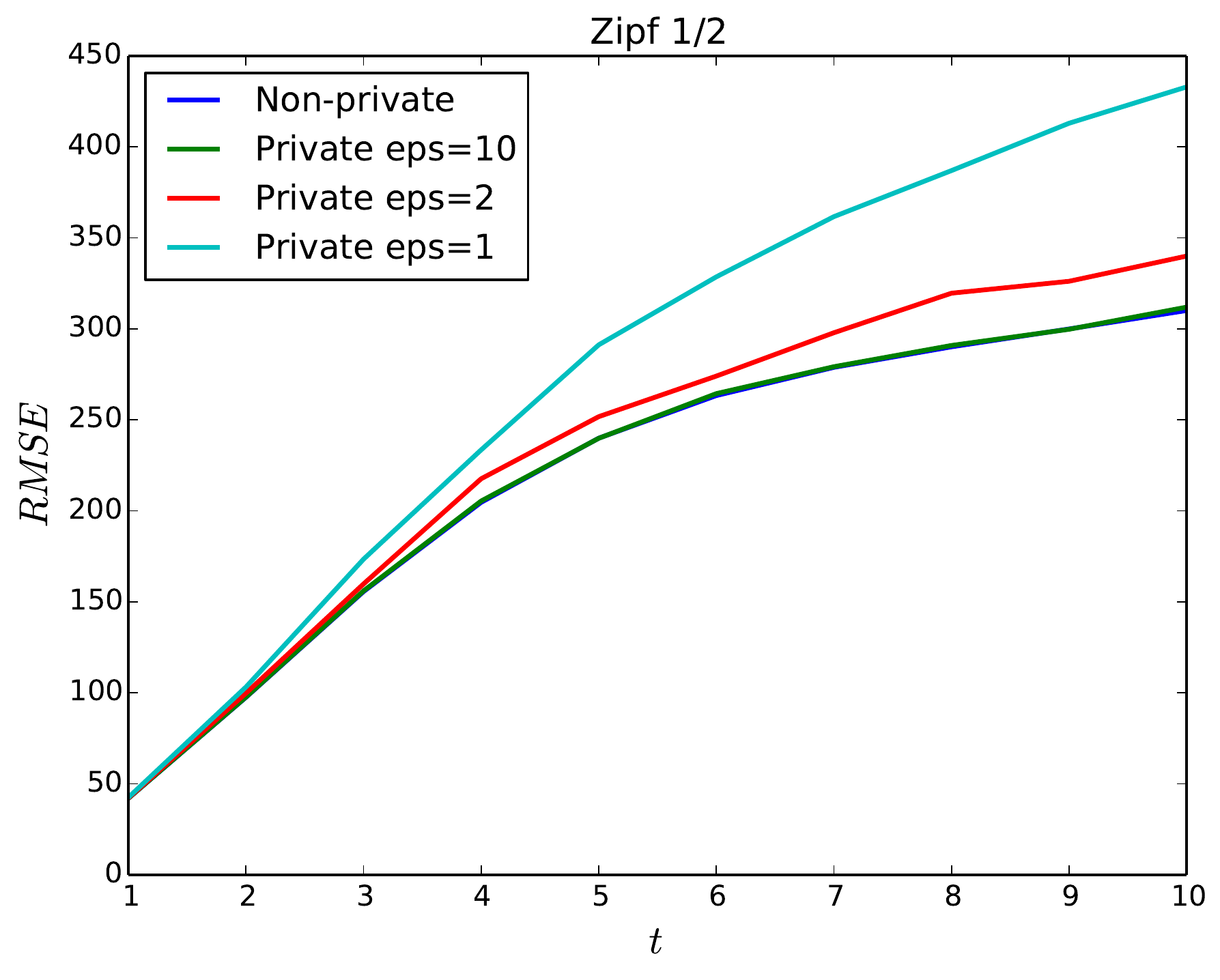}
\includegraphics[width=1\textwidth]{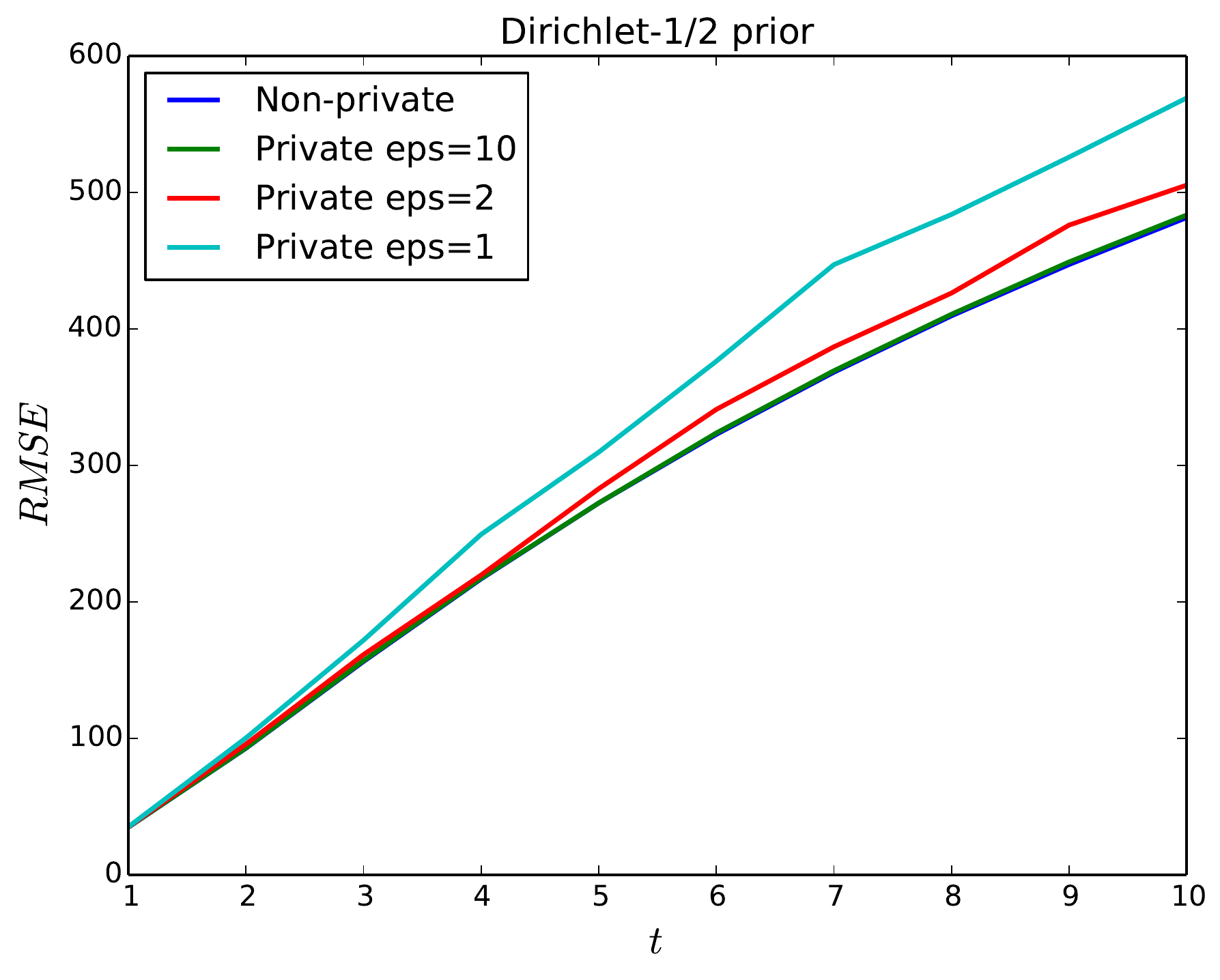}
\end{minipage}
}
\caption{Comparison between the private estimator with the non-private SGT when $k=5000$.} 
\label{fig:coverage-k5000}
\end{figure*}
\begin{figure*}
\centering
\subfigure[]{
\begin{minipage}[b]{0.3\textwidth}
\includegraphics[width=1\textwidth]{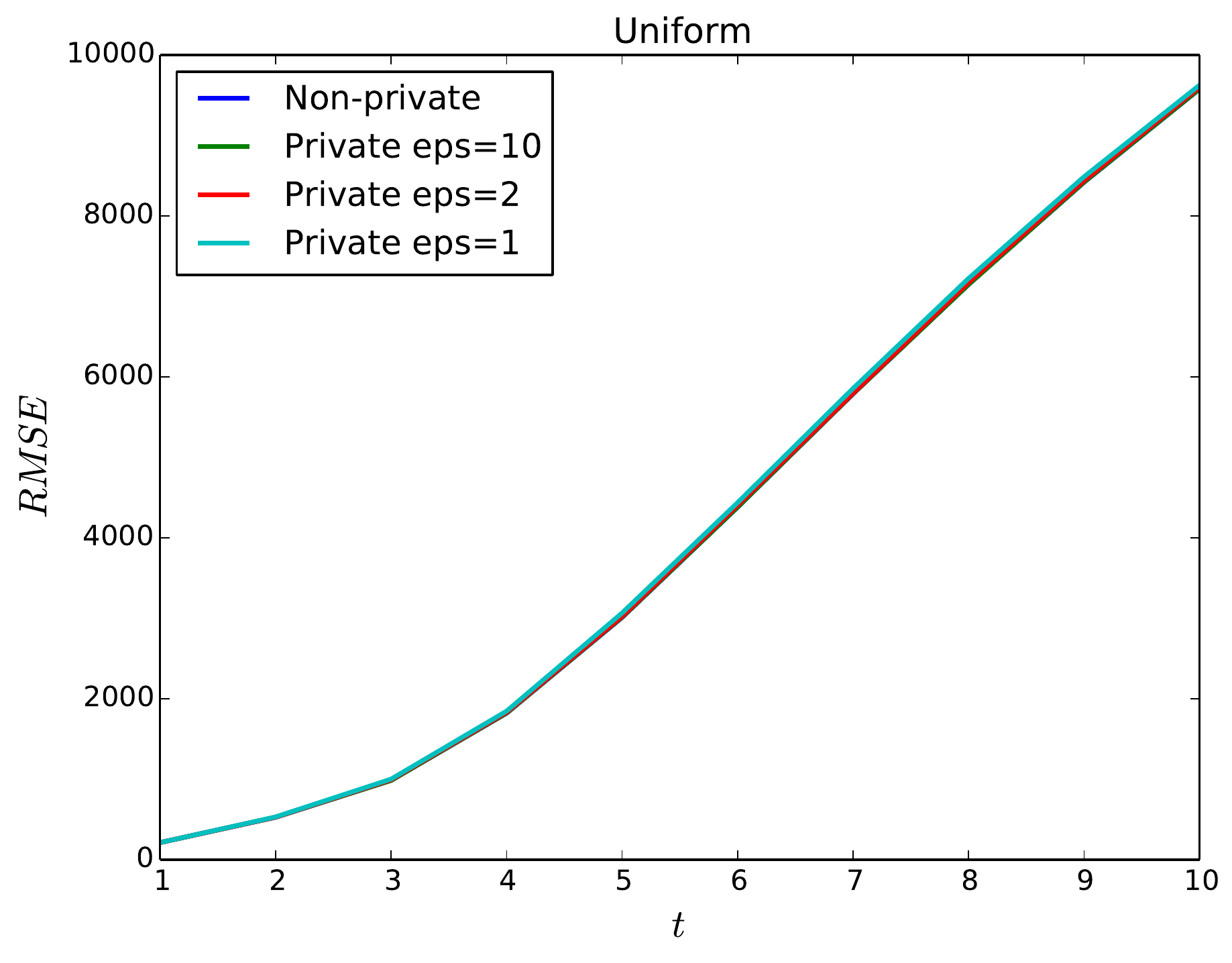}
\includegraphics[width=1\textwidth]{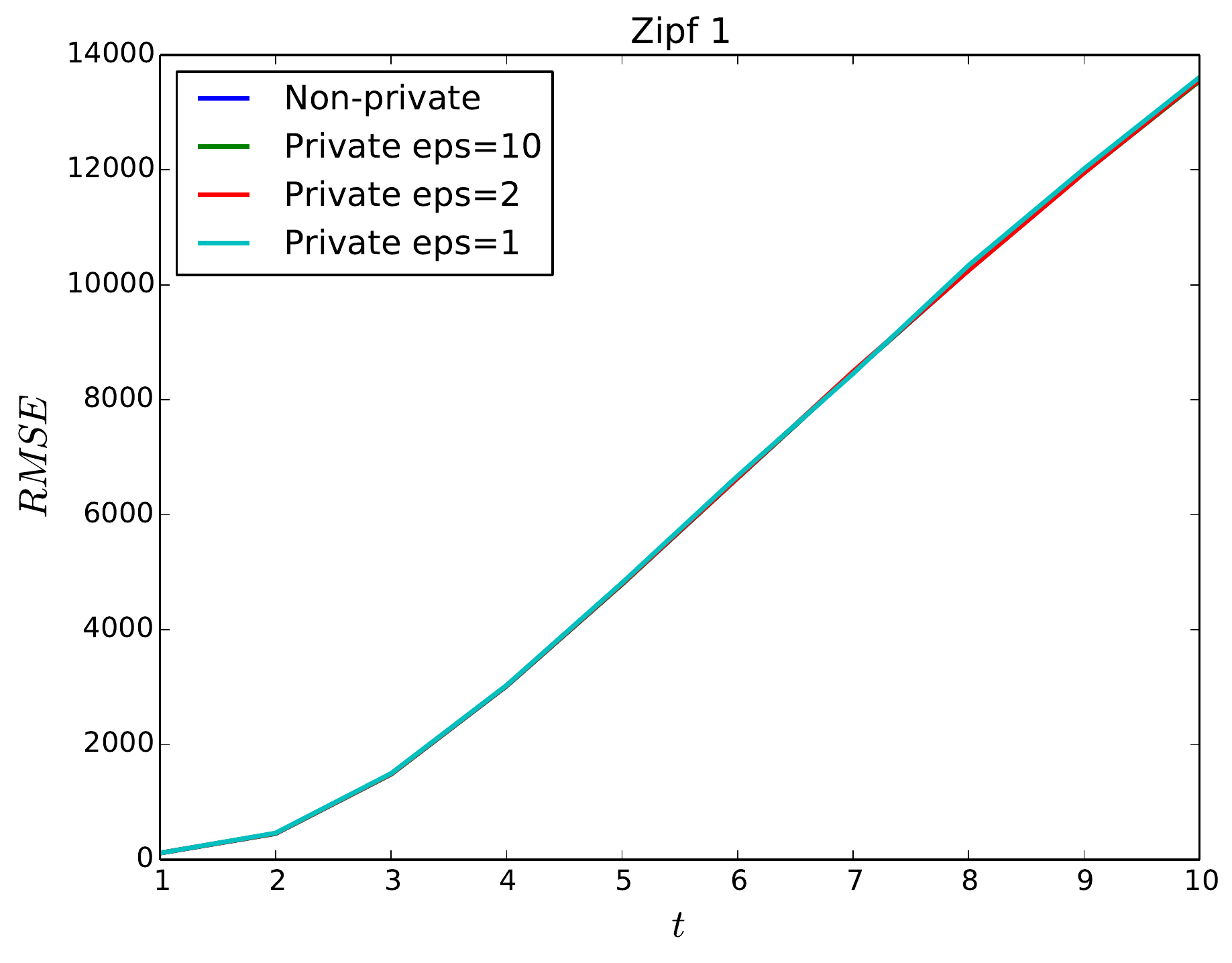}
\end{minipage}
}
\subfigure[]{
\begin{minipage}[b]{0.3\textwidth}
\includegraphics[width=1\textwidth]{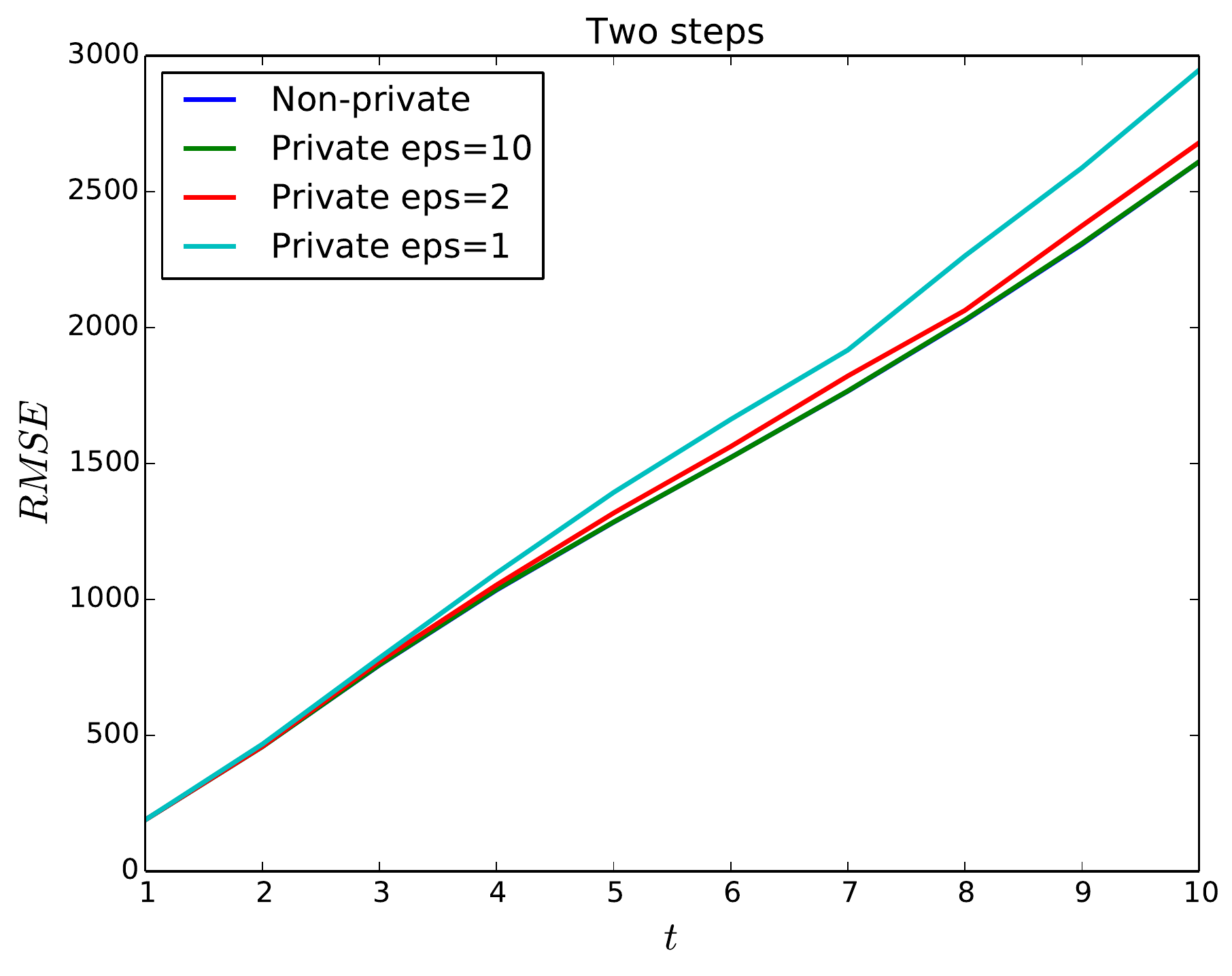}
\includegraphics[width=1\textwidth]{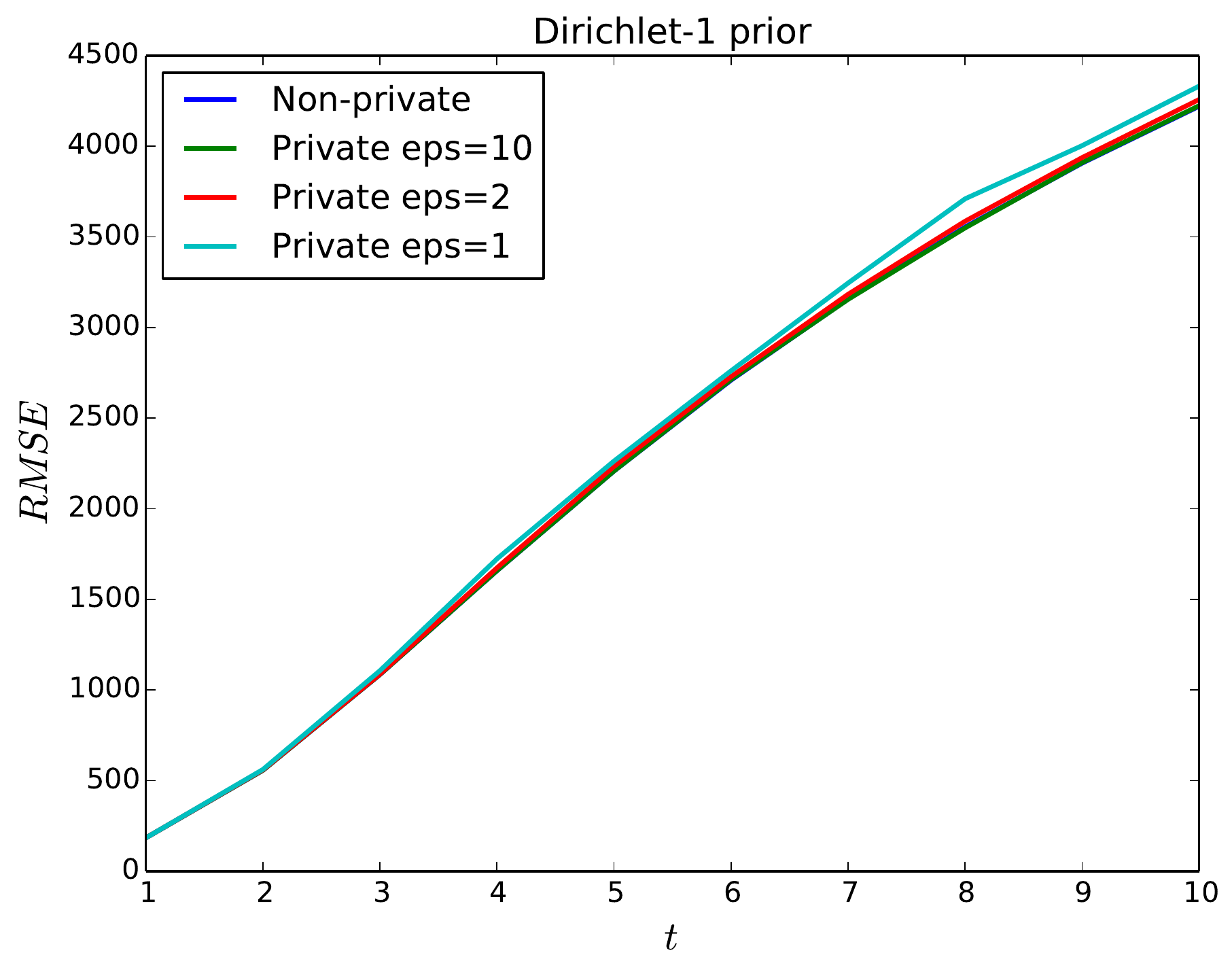}
\end{minipage}
}
\subfigure[]{
\begin{minipage}[b]{0.3\textwidth}
\includegraphics[width=1\textwidth]{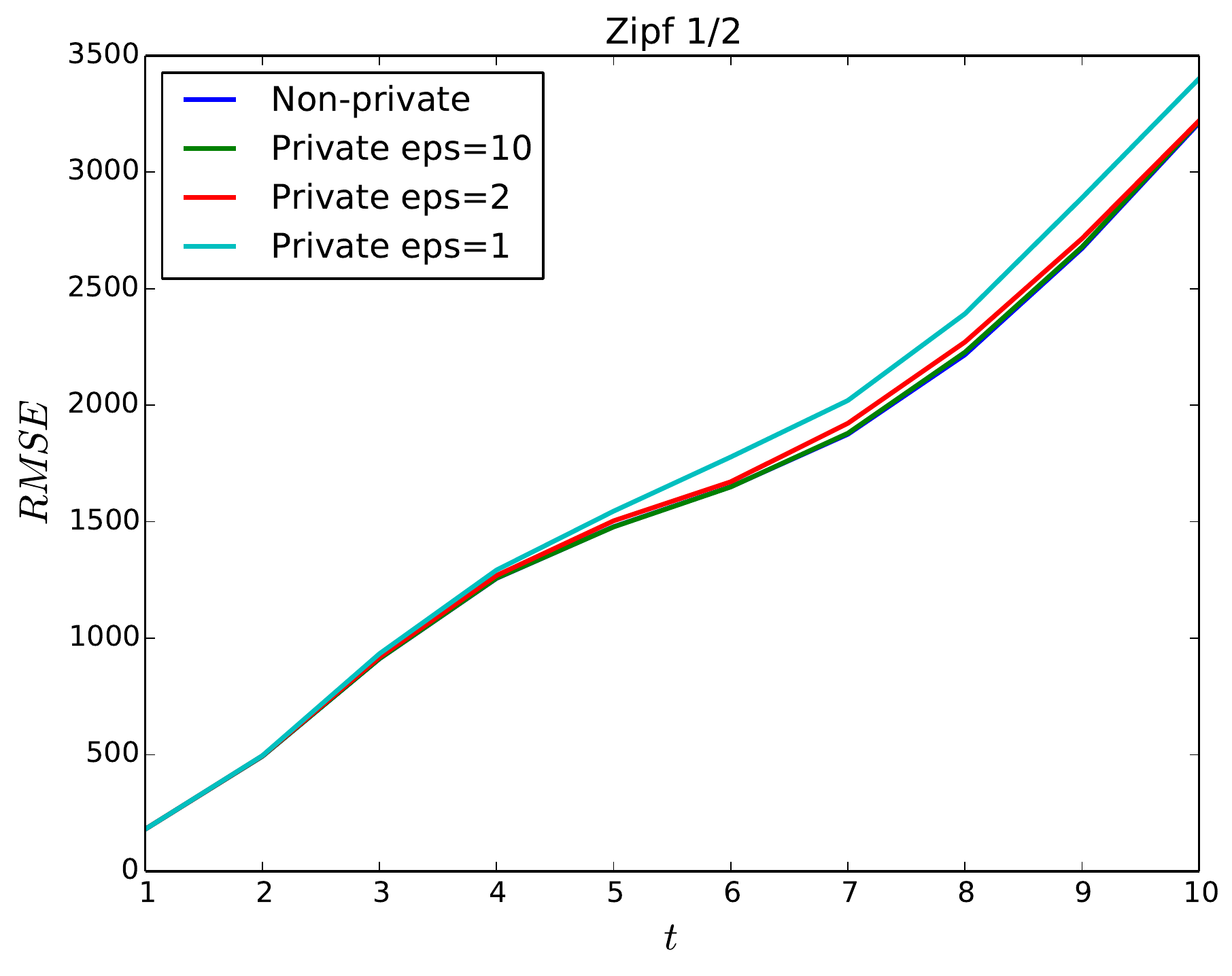}
\includegraphics[width=1\textwidth]{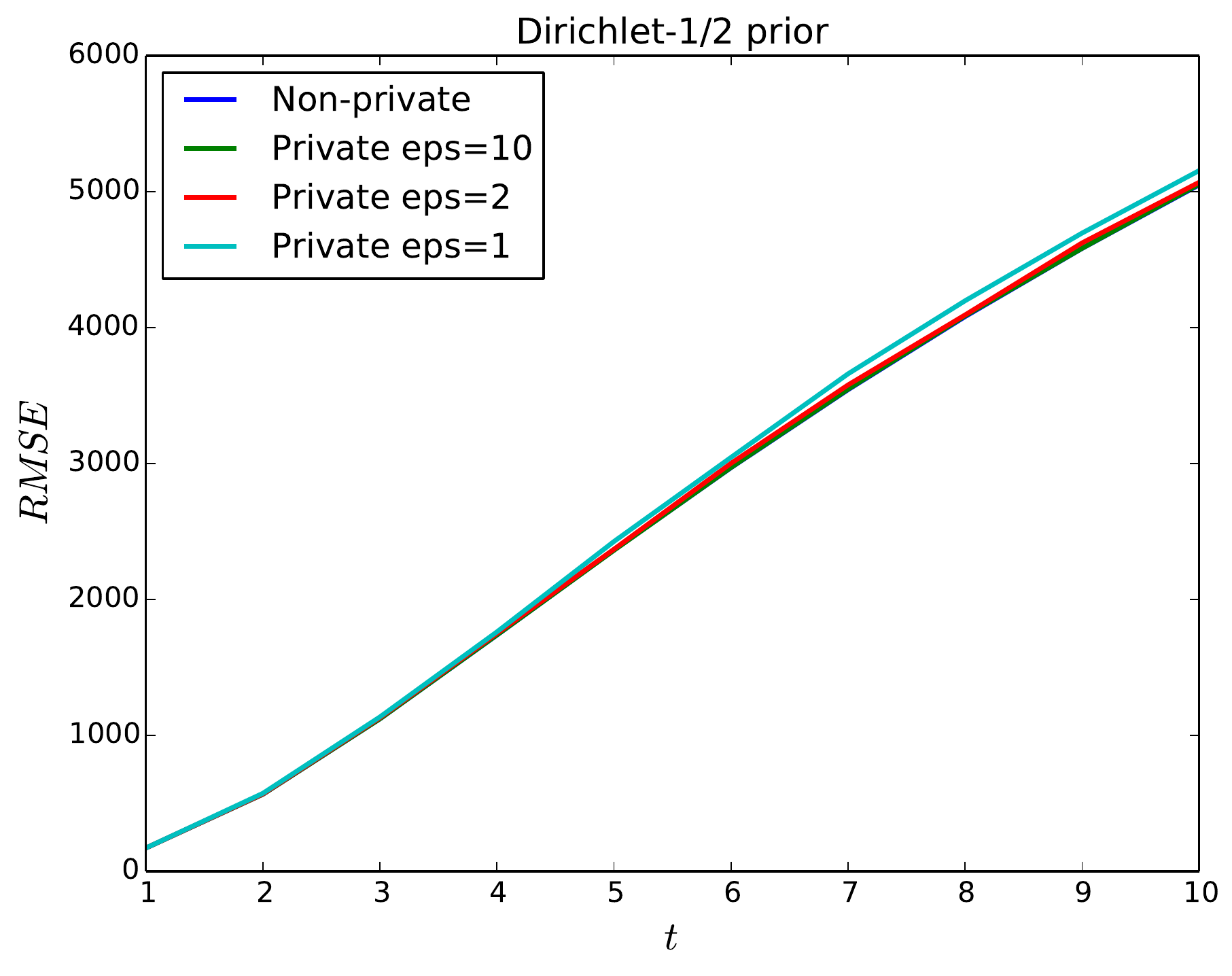}
\end{minipage}
}
\caption{Comparison between the private estimator with the non-private SGT when $k=100000$.} 
\label{fig:coverage-k100000}
\end{figure*}

\end{document}